\newtheorem{prop}{Proposition}
\newtheorem{corr}{Corollary}
\newtheorem{rem}{Remark}
\newtheorem{lem}{Lemma}
\newcommand{\eps}{\epsilon}
\newcommand {\dotle} {\stackrel{.} {\le}}  
\newcommand {\dotge} {\stackrel{.} {\ge}}  
\begin{document}
\title{Distributed MIMO receiver - Achievable rates and upper bounds}

\author{\authorblockN{Amichai Sanderovich,Shlomo Shamai (Shitz),Yossef Steinberg}\\
\authorblockA{Technion, Haifa, Israel}
 }
\maketitle
\begin{abstract}
In this paper we investigate the achievable rate of a system that
includes a nomadic transmitter with several antennas, which is
received by multiple agents, exhibiting independent channel gains
and additive circular-symmetric complex Gaussian noise. In the
nomadic regime, we assume that the agents do not have any decoding
ability. These agents process their channel observations and forward
them to the final destination through lossless links with a fixed
capacity. We propose new achievable rates based on elementary
compression and also on a Wyner-Ziv (CEO-like) processing, for both
fast fading and block fading channels, as well as for general
discrete channels. The simpler two agents scheme is solved, up to an
implicit equation with a single variable. Limiting the nomadic
transmitter to a circular-symmetric complex Gaussian signalling, new
upper bounds are derived for both fast and block fading, based on
the vector version of the entropy power inequality. These bounds are
then compared to the achievable rates in several extreme scenarios.
The asymptotic setting with numbers of agents and transmitter's
antennas taken to infinity is analyzed. In addition, the upper
bounds are analytically shown to be tight in several examples, while
numerical calculations reveal a rather small gap in a finite
$2\times2$ setting. The advantage of the Wyner-Ziv approach over
elementary compression is shown where only the former can achieve
the full diversity-multiplexing tradeoff. We also consider the
non-nomadic setting, with agents that can decode. Here we give an
achievable rate, over fast fading channel, which combines broadcast
with dirty paper coding and the decentralized reception, which was
introduced for the nomadic setting.

\end{abstract}
\begin{keywords}
MIMO, Decentralized detection, wireless networks, Wyner-Ziv, CEO,
compress-and-forward
\end{keywords}
\section{Introduction} In this paper we deal with a network in which a nomadic transmitter has several antennas and is
communicating to a remote destination, where no direct link exists
between the transmitter and the final destination, as is depicted in
figure \ref{fig:setting}. The final destination receives all of its
inputs from several separated agents, which are connected to it
through fixed lossless links with a given capacity. This setting is
identical to the setting of \cite{fullpaper}, only that here we
focus on fading channels. Namely, the channel between the
transmitting antennas and the agents is a Rayleigh fading channel
with independent channel gains, where the extension to other fading
statistics is straight forward. In this contribution we consider
both fast fading and block fading channels. The channel fading
coefficients, or channel state information (CSI) are known in full
to the agents and the final destination, but not to the transmitter.
This setting is closely related to the setting of the multiple input
multiple output (MIMO) channel, which is thoroughly treated in the
literature, see \cite{Telatar99}. The multiplexing gain is a common
asymptotic measure of performance of MIMO systems. It assesses the
capacity increase, for high signal to noise ratios, due to the use
of multiple antennas \cite{TseZheng2003} in the scheme. In this
paper, we analyze the multiplexing gain for the suggested network,
where recent examples for the multiplexing gains of multi terminal
networks are \cite{AvestimehrTse2006} and \cite{YukselErkip2006}.
The results reported here have implications on other MIMO-related
channels, such as the MIMO broadcast channel
\cite{WeingartenSteinbergShamai2006}, the MIMO relay channel
\cite{WangZhangHostMadsen2005}, and ad-hoc networks
\cite{BolcskeiNabar2004}. All these works deal with situations where
multiple antennas are transmitting and the signals are received in a
distributed fashion, either by relays, destinations or any
combination of the above. In addition, results regarding ad-hoc
networks \cite{GuptaKumar2003}, relay channels \cite{MaricYates2004}
and joint cell-site processing \cite{SomekhZaidelShamai2004} are
closely related, providing yet another aspect of the achievable
rates in wireless networks, where relays form, in a distributed
manner, the required spacial dimensions.

Our model assumes that the transmitter is nomadic, which means that
the agents do not possess the codebook in use, and thus do not have
any decoding ability \cite{fullpaper}. A good way to model a nomadic
setting is by letting the transmitter use random encoding. Such
model excludes any decoding from the agents. Given that the codebook
is random, we further assume that it is Gaussian. In this case, as
the model becomes close to source coding and the Gaussian CEO (Chief
Executive Officer) \cite{Oohama1998}, we were able to obtain
analytic expressions for an achievable rate and for upper
bounds. 
Relevant works here are distributed source coding by Wyner and Ziv
(WZ) \cite{WynerZiv1976}, \cite{wagner-2005-} who deals with the
multiple terminals WZ problem and the Gaussian CEO by
\cite{Oohama2005}, among others.

The achievable rates derived in this paper extend the achievable
rates from \cite{fullpaper} to the case of fading channels, and
multiple antennas at the transmitter and at the receiver. The
techniques that are used for the derivation are based on the well
known CEO or WZ distributed source coding. These techniques,
although intended for source coding problems, enable better
utilization of system resources also for channel coding problems,
as done for example by
\cite{DraperWornell2004},\cite{CoverElgamal1979} and
\cite{KramerGastparGupta2004}.

The upper bounds in this paper were derived using the vector version
of the entropy power inequality, which was used for several known
problems which are based on Gaussian statistics. These include the
MIMO broadcast channel \cite{WeingartenSteinbergShamai2006} and the
Gaussian CEO with quadratic distortion \cite{Oohama2005}. Several
generalizations to the original entropy power inequality exist,
among them are \cite{LiuViswanath2006},\cite{WitsenhausenWyner1975}
and \cite{ChayatShamai1989}.

The Gaussian signaling used by the transmitter results with the
channel outputs being Gaussian and for the nomadic setting, also
memoryless. Notice that unlike traditional source coding problems
that use the CEO or WZ techniques, and examine the resulting
distortions, we focus on the allowed communication rates. Thus any
upper bounds or even optimality shown for a source coding problem,
although strongly connected, is not identical to our problem.
Therefore the technique used to show optimality of the distributed
WZ with two terminals problem (\cite{wagner-2005-}) does not carry
over to our setting.

This paper is organized as follows, in section \ref{sec:setting} the
setting is described and the basic definitions and notations are
given. Section \ref{sec:simple} describes the elementary compression
approach and gives several results about the achievable rates when
using this approach. Section \ref{sec:WynerZiv} improves upon the
approach taken in section \ref{sec:simple} by including CEO
compression (as in the CEO problem) at the agents and the final
destination. An upper bound to the achievable rate, when using
nomadic transmitter and non-decoding agents, is given in section
\ref{sec:upperbound}, and then demonstrated by a numerical example,
to be rather close to the achievable rate when using the CEO
compression. In the last section, an achievable rate for when the
agents are informed of the codes used by the transmitter, and the
transmitter is informed of both agents' processing and channel
coefficients is given in section \ref{sec:cogniz}. Concluding
remarks are then made in section \ref{sec:conclusion}.

\section{Setting and Model definition}\label{sec:setting}
Throughout this paper, boldfaced letters are used to denote
vectors $\boldsymbol{X}$ of length $n$, calligraphic letters
$\mathcal{T}$ to denote sets, capital letters $X$ are usually used
for random variables, and lower case letters for realizations of
random variables $x$, indices $i,j,k$, and counters $n,r,t$.
Subscript denotes an element within a vector and superscript $X^r$
denotes the set $X_1,\dots,X_r$.

The statistical mean is denoted by $\mathrm{E}$, ${}^*$ denotes
the transpose conjugate and $\mathcal{CN}(\Xi,\Sigma)$ stands for
complex Gaussian random variable with mean $\Xi$ and covariance
$\Sigma$.

An example for the model appears in Figure \ref{fig:setting}. The
model consists of a transmitter $S$ which has $t$ transmitting
antennas and which transmits during $n$ channel uses. In each
channel use, the transmitter sends a vector $X\in \mathbb{C}^{[t
\times 1]}$ to the channel, where
$\frac{1}{n}\sum_{k=1}^n\mathrm{E}[X(k)^*X(k)]\leq P$. 
The transmitter uses circular-symmetric complex Gaussian signalling,
which is known to be optimal for various problems involving the
Gaussian channel. The communication rate is denoted by $R$. The
message to be sent $M$ is encoded by a random encoding function
\mbox{$\boldsymbol{X}=\phi_{S,F}(M)$} such that for all messages
$M$, the outputs of the encoding function are randomly and
independently chosen according to probability
$P_{\boldsymbol{X}}(\boldsymbol{x})$.
We indicate the random encoding function by a random variable $F$.
That is,
\begin{equation}
\phi_{S,F}: [1,\dots,2^{nR}]\rightarrow\mathcal{X}^n.
\end{equation}
The agents are not informed about the selected encoding $F$, but are
fully aware of $P_X$.

We have $r$ agents $A_1,\dots,A_r$, each receiving the scalar
channel outputs
\begin{equation}\label{eq:def}
Y_i(k)=H_i(k) X(k) + N_i(k),\ \ i=1,\dots,r,\ k=1,\dots,n
\end{equation}
where $H_i(k)\in\mathbb{C}^{[1\times t]}$ is the vector of the
channel transfer coefficients, which are either ergodic (fast
fading) or static, non-ergodic (block fading). In both cases, the
coefficients are distributed independently from each other, and from
any other variable, according to circular-symmetric complex Gaussian
distribution $\mathcal{CN}(0,1)$. Similarly, the noises are
distributed as $N_i(k) \sim \mathcal{CN}(0,1)$, and are independent
of each other and along time. For the sake of brevity, we drop the
time index $k$ from now on.

Most of the results which are reported here can be easily extended
by including other fading distributions, such as Ricean, invoking
the results of \cite{TulinoLozanoVerdu2006}.

The $r$ agents are connected to a remote destination $D$ with
lossless links, each with capacity $C_i$ bits per channel use. The
final destination $D$ decodes the message $M$ from the $r$ messages,
which are sent from the $r$ agents, where decoding function is
$\phi_{D,F}:[1,\dots,2^{\sum^r nC_i}]\rightarrow [1,\dots,2^{nR}]$.
This setting is depicted in figure \ref{fig:setting}.

For fast fading channels, the rate $R$ is said to be achievable, if
for every $\eps>0$, there exists $n$ sufficiently large such that
\begin{equation}
\frac{1}{2^{nR}}\sum_{m=1}^{2^{nR}}\Pr(\hat{M}\neq m|M=m)\leq \eps,
\end{equation}
where $\Pr(\hat{M}\neq m|M=m)$ includes averaging over the channel
and the random coding. In parallel, the rate-vs-outage probability
of $\eps$, for block fading is said to be achievable if there exists
$n$ sufficiently large such that
\begin{equation}
\frac{1}{2^{nR}}\sum_{m=1}^{2^{nR}}\Pr(\hat{M}\neq m|M=m)\leq \eps.
\end{equation}

\begin{figure}
\centering
 \includegraphics[width=3in]{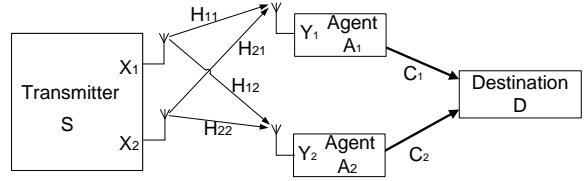}
 \caption{A system that includes a transmitter with $t=2$ and two agents $A_1$ and $A_2$ ($r=2$),
 connected to the final destination with capacities of $C_1$ and $C_2$, respectively.
 The channel fading coefficients $H$ are designated by $\{H_{i,j}\}$.}
 \label{fig:setting}
\end{figure}
The transmitter is nomadic \cite{fullpaper}, that is the codebook
that is used $f$, is unknown to the agents, but is fully known to
the final destination. This way the agents treat input signals not
accounting for the coded transmission, in a CEO or multiple WZ
approach. All the reported results in this paper assume that the
transmitter is limited to using only a circular-symmetric complex
Gaussian codebook. Notice that the Gaussian codebook is not
necessarily optimal, (a counter example exists for the non fading
case, where using binary signaling at the transmitter with a simple
two level demapper at the agents can outperform the Gaussian
signaling scheme, see \cite{fullpaper}). However, the Gaussian
codebook does provide a good candidate, as for
$C_i\rightarrow\infty$ and $C\rightarrow 0$ the Gaussian codebook is
indeed optimal.

In addition to the nomadism, the transmitter has no information
regarding $\boldsymbol{H}=\{H(k)\}_{k=1}^n$, where
\begin{equation*}
H(k)=\left[\begin{array}{l}H_1(k)\\ \vdots \\
H_r(k)\end{array}\right],
\end{equation*}
while the final destination is fully informed about
$\boldsymbol{H}$. By default, each agent has the full CSI
$\boldsymbol{H}$. However, many of the presented schemes require
each agent to know only its own channel coefficients
$\boldsymbol{H}_i$, as is stated in the text.

Although the transmitter is unaware of the channel realizations
$\boldsymbol{H}$, it does have the full knowledge of the channel
statistics, as well as $\{C_i\}$, which is used to calculate the
rate in which the transmitter will encode its messages.
Alternatively, higher layer control layers can indicate the
code-rate which is to be used, based on an ACK/NACK mechanism.

As said, the multiplexing gain of any scheme describes the scaling
laws of its capacity, as $P$ is increased \cite{TseZheng2003}.\\
\emph{Definition}: The multiplexing gain of a scheme is defined as
\begin{equation}
m=\lim_{P\rightarrow\infty}\frac{R}{\log(P)},
\end{equation}
whereas the diversity is defined by
\begin{equation}
d=\lim_{P\rightarrow\infty}-\frac{\log\left(\Pr\{\mathrm{outage}\}\right)}{\log(P)}.
\end{equation}

We will use $\doteq$ and $\dotle$ to denote equality and
respectively inequality, under the operation
$\lim_{P\rightarrow\infty}\frac{\log_2(\cdot)}{\log_2(P)}$. The norm
$|V|^2$ for a vector $V$ is defined as $|V|^2=\sum_i |V_i|^2$.

\section{elementary compression scheme}\label{sec:simple}
In this section, a scheme that incorporates elementary compression
at the agents is analyzed. By elementary compression, we mean
compression process that does not use the correlations between
$\{\boldsymbol{Y}_i\}$, and thus does not require the agents to have
full CSI, rather, they just need $\boldsymbol{H}_i$. In addition,
the implementation of such compression and especially the
decompression are rather simple and realized with low complexity
algorithms at the agents and the final destination.
\subsection{Ergodic Channel}
We first propose an achievable rate for general ergodic channels.
\begin{prop}\label{prop:EC_MI}\textit{
An achievable rate for an ergodic channel, with elementary
compression is
\begin{equation}
R_{EC}=I(U^r;X|H),
\end{equation}
($EC$ stands for elementary-compression) with the constraints:
\begin{equation}
I(U_i;Y_i|H)\leq C_i,\ i=1,\dots,r,
\end{equation} where
\begin{equation}
P_{X,Y^r,U^r,H}(x,y^r,u^r,h)=P_X(x)P_H(h)\prod_{i=1}^r
P_{Y_i|X,H}(y_i|x,h)P_{U_i|Y_i,H}(u_i|y_i,h).
\end{equation}
}
\end{prop}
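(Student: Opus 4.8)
The plan is to establish achievability by a standard random-binning / compress-and-forward argument, treating each agent as a point-to-point Wyner-Ziv-free quantizer whose output must be describable at rate $C_i$. First I would fix a joint distribution factoring as in the proposition, so that conditioned on the channel state $H$, the channel outputs $Y_1,\dots,Y_r$ are independent given $X$, and each auxiliary $U_i$ is generated from $(Y_i,H)$ alone through the test channel $P_{U_i|Y_i,H}$. The key structural feature to exploit is that elementary compression ignores the inter-agent correlations, so each agent's description is generated and decoded independently; this is what lets the rate constraints decouple into the $r$ separate conditions $I(U_i;Y_i|H)\le C_i$ rather than a joint (Slepian-Wolf-type) sum constraint.

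The second step is codebook construction. For each agent $i$ I would generate $2^{nI(U_i;Y_i|H)}$ codewords $\boldsymbol{u}_i$ i.i.d. according to $\prod_k P_{U_i|H}(u_{i,k}|h_k)$, using the channel state $\boldsymbol{H}$ which the agents and destination share. Upon observing $\boldsymbol{Y}_i$, agent $i$ searches for a codeword $\boldsymbol{u}_i$ jointly typical with $(\boldsymbol{Y}_i,\boldsymbol{H})$; by the covering lemma such a codeword exists with high probability precisely because the number of codewords is $2^{nI(U_i;Y_i|H)}$, and by hypothesis this quantization rate does not exceed $C_i$, so the index can be forwarded losslessly over the link. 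The destination thus recovers all indices, hence all codewords $\boldsymbol{u}_1,\dots,\boldsymbol{u}_r$, exactly.

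The third step is decoding of the transmitted message at the destination. Given the recovered $U^r$ and full CSI $\boldsymbol{H}$, the destination performs joint-typicality decoding of $M$ against the transmitter's Gaussian codebook. A standard packing-lemma analysis shows that the probability of a wrong codeword being jointly typical with $(U^r,\boldsymbol{H})$ vanishes as long as the transmission rate satisfies $R<I(X;U^r|H)$, which is exactly $R_{EC}$. Combining the error events from quantization and from message decoding, and invoking the ergodicity of the channel so that empirical typicality with respect to the stationary law holds, gives an overall probability of error below $\eps$ for $n$ large.

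The main obstacle I anticipate is handling the random, unknown encoder $F$ and the ergodic channel state together within a clean typicality framework. Because the transmitter is nomadic, the agents' quantizers must be designed for the marginal statistics $P_{Y_i|H}$ induced by the Gaussian input without reference to the codebook; care is needed to argue that the quantization step succeeds uniformly over the random codebook realization and over the fading process, rather than only in expectation. I would address this by conditioning on $\boldsymbol{H}$, applying the covering and packing lemmas to the resulting (conditionally memoryless) system, and then averaging over the ergodic fading, so that the relevant mutual informations appear as the conditional quantities $I(U_i;Y_i|H)$ and $I(U^r;X|H)$ written in the statement.
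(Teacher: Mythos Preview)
Your proposal is correct and follows essentially the same route as the paper's proof: generate the quantization codebooks $\boldsymbol{u}_i$ according to $\prod_k P_{U_i|H}(u_i(k)|h(k))$, invoke covering so that a jointly typical $\boldsymbol{u}_i$ exists whenever the description rate exceeds $I(U_i;Y_i|H)\le C_i$, forward the index over the lossless link, and then decode $M$ at the destination by joint typicality of $(\boldsymbol{x},\boldsymbol{u}^r)$ given $\boldsymbol{H}$, which succeeds for $R<I(X;U^r|H)$. The only cosmetic difference is that the paper sizes each quantization codebook to $2^{nC_i}$ directly rather than $2^{nI(U_i;Y_i|H)}$, but under the constraint $I(U_i;Y_i|H)\le C_i$ this is immaterial.
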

The proof involves the random generation of codewords,
$\boldsymbol{U}_i$ according to $\prod_{k=1}^n
P_{U_i|H}(U_i(k)|H(k))$, as done in standard rate-distortion
problems with non-casual side information ($\boldsymbol{H}$). These
codebooks are made available to both all encoders and decoder. The
proof appears in Appendix \ref{app:EC_MI}.

Applying Proposition \ref{prop:EC_MI} to the Gaussian MIMO channel,
one gets
\begin{prop}\label{prop:sc_ff}\textit{An achievable rate for ergodic setting using elementary compression is equal to:
\begin{equation}\label{eq:sq_R}
R_{EC}=\max_{Q\in \mathcal{P}, \{q_i:\mathbb{C}^{[r\times
t]}\rightarrow\mathbb{R}_{+}\}_{i=1}^r}\mathrm{E}_H\left[\log_2\det\left(I_r+\mathrm{diag}\left(1-2^{-q_i(H)}\right)_{i=1}^r
HQH^*\right)\right],
\end{equation}
when the maximization in (\ref{eq:sq_R}) is such that
\begin{equation}
\label{eq:sq_con}
\mathrm{E}_H\left[\log_2\left((2^{q_i(H)}-1)\left(H_iQH_i^*+1\right)+1\right)\right]\leq
C_i,\ i=1,\dots,r,
\end{equation}
where
\begin{equation}
\mathcal{P}=\{Q: Q_{i,j}=0\ \mathrm{for}\ i\neq j,\ Q_{i,i}\geq 0,\
\mathrm{trace}(Q)\leq P\}.
\end{equation}
}
\end{prop}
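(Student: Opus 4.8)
The plan is to instantiate the general achievable rate of Proposition~\ref{prop:EC_MI} with the mandated Gaussian input $X\sim\mathcal{CN}(0,Q)$ and with a Gaussian additive-noise compression test channel at each agent, and then to optimize the resulting expression over its free parameters. Concretely, I would set $U_i = Y_i + Z_i$, where, conditioned on $\boldsymbol{H}$, the compression noises $Z_i\sim\mathcal{CN}(0,\sigma_i^2(H))$ are mutually independent and independent of $(X,\{N_j\}_{j=1}^r)$. This choice respects the factorization $\prod_{i=1}^r P_{U_i|Y_i,H}$ required by Proposition~\ref{prop:EC_MI}, since each $U_i$ is a function of $Y_i$ and of the CSI alone; the variance $\sigma_i^2(H)$ is kept as a design function of the channel because the agents are informed of $\boldsymbol{H}$.

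Conditioned on a realization of $H$, all variables are jointly circularly-symmetric Gaussian, so both mutual informations reduce to log-determinants. Writing $U=HX+\tilde{N}$ with $\tilde{N}\sim\mathcal{CN}(0,D)$ and $D=\mathrm{diag}(1+\sigma_i^2)_{i=1}^r$, I would obtain $\log_2\det(I_r+D^{-1}HQH^*)$ for the rate integrand, and, since $Y_i\sim\mathcal{CN}(0,H_iQH_i^*+1)$ and $U_i=Y_i+Z_i$, the value $\log_2\bigl(1+(H_iQH_i^*+1)/\sigma_i^2\bigr)$ for the $i$-th link integrand. Because the channel is ergodic, $I(U^r;X|H)$ and $I(U_i;Y_i|H)$ are the $\mathrm{E}_H$-averages of these two quantities.

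The decisive step is a change of variables that cleans up both expressions at once. Setting $2^{q_i(H)}=1+1/\sigma_i^2(H)$, equivalently $\sigma_i^2=1/(2^{q_i(H)}-1)$, turns the $i$-th diagonal entry of $D^{-1}$ into $1/(1+\sigma_i^2)=1-2^{-q_i(H)}$, so the rate integrand becomes exactly $\log_2\det(I_r+\mathrm{diag}(1-2^{-q_i(H)})_{i=1}^r HQH^*)$ and the $i$-th constraint integrand becomes $\log_2((2^{q_i(H)}-1)(H_iQH_i^*+1)+1)$, which after the $\mathrm{E}_H$-average reproduce (\ref{eq:sq_R}) and (\ref{eq:sq_con}). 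As $\sigma_i^2$ ranges over the positive reals the parameter $q_i$ ranges over the nonnegative reals, which is why the maximization is over functions $q_i:\mathbb{C}^{[r\times t]}\rightarrow\mathbb{R}_{+}$. To justify confining $Q$ to the diagonal set $\mathcal{P}$, I would invoke the rotational invariance of the i.i.d.\ Rayleigh law: for any positive semidefinite $Q=V\Lambda V^*$, the substitution $H\mapsto HV$ leaves the distribution of $H$ unchanged while mapping $HQH^*$ to $(HV)\Lambda(HV)^*$, so the supremum over general $Q$ is already attained on a diagonal $\Lambda$, and the joint maximization over $(Q,\{q_i\})$ gives the stated rate.

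The Gaussian log-determinant bookkeeping is routine; the step demanding care is tracking the conditioning on $\boldsymbol{H}$ throughout, since the compression levels, and hence the $q_i$, are CSI-dependent and the expectation over $H$ must therefore sit outside the logarithms, together with checking that the additive-Gaussian test channel really meets the Markov structure of Proposition~\ref{prop:EC_MI}. If one further wishes to read the equality as asserting that Gaussian test channels are optimal among all elementary compressors for a Gaussian input, then the genuine obstacle becomes a converse of information-bottleneck type---showing that no non-Gaussian $P_{U_i|Y_i,H}$ can enlarge $I(U^r;X|H)$ under the per-link constraints---which in the vector, multi-agent case is precisely the kind of statement handled by the entropy-power inequality invoked later for the upper bounds.
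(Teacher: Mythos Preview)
Your proposal is correct and follows essentially the same route as the paper. Both instantiate Proposition~\ref{prop:EC_MI} with the additive Gaussian test channel $U_i=Y_i+D_i$ whose noise variance depends on $H$, introduce the reparametrization $\frac{1}{1+P_{D_i}}=1-2^{-q_i(H)}$ (which the paper motivates as $q_i(h)=I(Y_i;U_i|X,H=h)$, exactly your $2^{q_i}=1+1/\sigma_i^2$), read off the log-determinant expressions for $I(U^r;X|H)$ and $I(U_i;Y_i|H)$, and invoke the rotational invariance of i.i.d.\ Rayleigh fading to justify restricting $Q$ to the diagonal set $\mathcal{P}$; your closing caveat that optimality of the Gaussian test channel is \emph{not} being claimed here is also in line with the paper's treatment.
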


Here each agent employs the elementary compression scheme which is
based on an underlying additive circular-symmetric complex Gaussian
noise channel $U_i=Y_i+D_i$, where $D_i$ is the compression noise.
As in the Gaussian CEO problem, there is a difference between the
used formulation and the backward channel $Y_i=U_i+D_i$, used for
standard rate-distortion compression.

Another issue here is that the known fading affects the variance of
the compression noise. The quantization noise is
circularly-symmetric complex Gaussian with variance $P_{D_i}(k)$
that depends on $\boldsymbol{H}(k)$. Let us further define
$q_i(h)\triangleq I(Y_i;U_i|X,H=h)$, which, due to the Gaussian
model, is equivalent to:
\begin{equation}\label{eq:q_def}
\frac{1}{1+P_{D_i}} = 1-2^{-q_i}.
\end{equation}
Notice that for all $i=1,\dots,r$, $q_i$ is a function of
$H=\{H_i\}_{i=1}^r$ and thus
is a random variable. 


It is easy to verify that the optimization problem in Proposition
\ref{prop:sc_ff} includes a concave objective function
(\ref{eq:sq_R}) but a non-convex domain (\ref{eq:sq_con}). Notice
that since $HQH^*$ is distributed the same as $HVQV^*H^*$ for any
unitary $V$, we can still limit the search for optimal $Q$ to
non-ordered elements of a diagonal $Q$ (which is the set
$\mathcal{P}$).

\begin{rem} Despite the name elementary compression, it requires
an infinite number of codebooks at the agents and the final
destination, since they should correspond to infinitely many fading
coefficients. 
\end{rem}

\subsubsection{An Achievable Rate when $r,t\rightarrow\infty$} Let us
consider the case where $r=\tau t$, symmetric agents with constant
total capacity from the agents to the final destination
($C_i=C_t/r$, $i=1,\dots,r$).
Such scheme can account for bottleneck effects in the channel
between the agents and the final destination. Let us take
$r\rightarrow\infty$, and find the limiting rate which is reliably
supported by the scheme
($\tilde{\tau}\triangleq\frac{\min\{r,t\}}{r}$). First consider that
when $t\rightarrow\infty$, we have $\frac{|H_{i_*}|^2}{t}\rightarrow
1$, almost surely. Applying this to (\ref{eq:q_def}) and
(\ref{eq:sq_con}), and also setting $P_D$ to be identical for all
the agents (the maximal $P_D$, unlike what was done in
(\ref{eq:neq_1}), which set $P_D$ to be the minimal), we get
\begin{equation}
\lim_{r\rightarrow\infty} R_{EC} \leq
\tilde{\tau}\lim_{r\rightarrow\infty} r \mathrm{E}_H
\left[\log_2\left(1+\frac{ P
\lambda/t}{1+P_{D_*}}\right)\right]
=\tilde{\tau}\lim_{r\rightarrow\infty} r
\mathrm{E}_{\nu}\left[\log_2\left(1+\frac{\tau\tilde{\tau}P
\nu}{1+\frac{P+1}{2^{C_t/r}-1}}\right)\right],
\end{equation}
where $\lambda$ is one of the unordered eigenvalues $\{\lambda_i\}$
and $\nu\triangleq\frac{\lambda}{\min\{t,r\}}$ is a random variable
with some finite mean. We can exchange the order of the expectation
and the limit due to dominant convergence
\begin{equation}\label{eq:sc_achv}
\lim_{r\rightarrow\infty}R_{EC}\leq\tilde{\tau}\mathrm{E}_{\nu}\left[\lim_{r\rightarrow\infty}r\log_2\left(1+\frac{\tau\tilde{\tau}P
\nu}{1+\frac{
P+1}{2^{C_t/r}-1}}\right)\right]=\tilde{\tau}\mathrm{E}_{\nu}\left[C_t\frac{\tau\tilde{\tau}P\nu(1+
P)}{(1+P)^2}\right]= C_t\frac{P}{1+P},
\end{equation}
where $\mathrm{E}\nu=\max\{\tau,\frac{1}{\tau}\}$. Since also
$\mathrm{argmax}_{1\leq i\leq r}\frac{|H_{i}|^2}{t}\rightarrow1$,
the inequality in equation (\ref{eq:sc_achv}) is in fact an
equality. Thus we have the following corollary:
\begin{corr}\label{corr:ec_lim} \textit{In the limit of $r,t\rightarrow\infty$, an achievable rate using elementary compression is
$C_t\frac{P}{1+P}$.}
\end{corr}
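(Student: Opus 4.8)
The plan is to specialize Proposition \ref{prop:sc_ff} to a symmetric choice of parameters and then carry out the $r\to\infty$ limit already sketched above, turning the chain ending in (\ref{eq:sc_achv}) into a rigorous argument. First I would fix the uniform input covariance $Q=(P/t)I_t\in\mathcal{P}$ (a symmetric choice, natural since the transmitter does not know $\boldsymbol{H}$) and impose a common compression level $q_i(H)\equiv q$, i.e.\ $P_{D_i}=P_{D_*}$ for all $i$, as in the display preceding (\ref{eq:sc_achv}). By the strong law, $|H_i|^2=\sum_{j=1}^t|H_{i,j}|^2$, a sum of $t$ unit-mean exponentials, obeys $|H_i|^2/t\to1$ almost surely, so $H_iQH_i^*=(P/t)|H_i|^2\to P$ and the per-agent constraint (\ref{eq:sq_con}) with $C_i=C_t/r$ collapses in the limit to the single equation $P_{D_*}=(P+1)/(2^{C_t/r}-1)$ that pins the common compression noise.

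Second, because all $q_i$ are equal, $\mathrm{diag}(1-2^{-q_i})=(1-2^{-q})I_r$, and since $HQH^*=(P/t)HH^*$, the determinant in (\ref{eq:sq_R}) factorizes over the $\min\{r,t\}=\tilde{\tau}r$ nonzero eigenvalues $\{\lambda_i\}$ of $HH^*$. Writing $\nu=\lambda/\min\{t,r\}$ and using $r=\tau t$, the rate takes exactly the form $\tilde{\tau}\,r\,\mathrm{E}_\nu[\log_2(1+\tau\tilde{\tau}P\nu/(1+P_{D_*}))]$ appearing just before (\ref{eq:sc_achv}).

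Third, I would take $r\to\infty$. Since $2^{C_t/r}-1\sim(C_t\ln2)/r$, the noise $P_{D_*}$ grows linearly in $r$, the log-argument is $O(1/r)$, and after the expansion $\log_2(1+x)\to x/\ln2$ the surviving factor of $r$ cancels, yielding the pointwise limit $C_t\tau\tilde{\tau}P\nu/(1+P)$. The interchange of limit and expectation is by dominated convergence, using the envelope $\log_2(1+x)\le x/\ln2$ together with the finiteness of $\mathrm{E}[\nu]$. Pulling out constants leaves $C_t\tau\tilde{\tau}^2 P(1+P)^{-1}\,\mathrm{E}[\nu]$, and $\mathrm{E}[\nu]=\max\{\tau,1/\tau\}$ follows from the trace identity $\mathrm{E}[\mathrm{tr}(HH^*)]=rt$ shared among the $\min\{r,t\}$ nonzero eigenvalues. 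Substituting $\tilde{\tau}=\min\{\tau,1\}/\tau$ and checking the two cases $\tau\le1$ and $\tau>1$ separately, both reduce to $C_t P/(1+P)$.

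I expect the main obstacle to be promoting the one-sided relation (\ref{eq:sc_achv}) to an equality, i.e.\ establishing that this symmetric, common-compression scheme is asymptotically optimal rather than merely a bound. Following the text, this rests on the companion concentration $\mathrm{argmax}_{1\le i\le r}|H_i|^2/t\to1$: because every agent's effective channel concentrates at the same value $P$, no agent is systematically disadvantaged and no alternative per-agent allocation of Proposition \ref{prop:sc_ff} can do asymptotically better, so the gap between the common-compression relaxation and the true optimum vanishes and the inequality closes. A secondary technical point is exhibiting a single integrable dominating function valid uniformly in $r$ for the dominated-convergence step, which the linear bound $\log_2(1+x)\le x/\ln2$ and $\mathrm{E}[\nu]<\nsof$ supply.
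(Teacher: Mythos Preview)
Your proposal is correct and follows essentially the same approach as the paper: fixing $Q=(P/t)I_t$ with a common compression level, invoking the concentration $|H_i|^2/t\to1$ to pin $P_{D_*}=(P+1)/(2^{C_t/r}-1)$, factorizing over eigenvalues, passing to the limit via dominated convergence with $\mathrm{E}[\nu]=\max\{\tau,1/\tau\}$, and closing the inequality through the companion concentration of $\max_i|H_i|^2/t$. Your treatment is somewhat more explicit about the dominating envelope and the trace-identity justification of $\mathrm{E}[\nu]$, but the structure and all key steps coincide with the paper's derivation leading to (\ref{eq:sc_achv}).
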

\emph{Discussion:} This result can be explained by noticing that the
MIMO channel capacity is approximately linear with $r$ when $P$ is
fixed, which leaves the fixed $C_t$ to limit the performance, where
we can not get to $C_t$ because of the nomadic setting. In addition,
this rate reaches $C_t$ in the limit of large $P$, as expected.
Notice that the rate (\ref{eq:sc_achv}) does not depend on the ratio
between the number of receive and transmit antennas, $\tau$. This is
because the signal to noise ratio (SNR) at the final destination,
from every antenna, is very small
($C_t/r=\log(1+\frac{1+P}{D})\rightarrow C_t/r=\frac{1+P}{D}$). So
that the total SNR at the final destination is
$\frac{P}{1+D}=\frac{P C_t}{C_t+(1+P)r}$, and the achievable rate
can be calculated as (small $P'$):
\begin{equation}\label{eq:sc_heur}
\log_2|I+P'/tHH^*|\rightarrow rP'.
\end{equation}
Notice that (\ref{eq:sc_heur}) indeed does not depend on $\tau$.
Taking $P'= \frac{P C_t}{C_t+(1+P)r}$ in (\ref{eq:sc_heur}) results
with (\ref{eq:sc_achv}).

\subsection{Block fading channel}
For block fading channel, the Shannon capacity is  zero, and the
concept of rate-vs-outage is the leading figure of merit.
\subsubsection{Rate vs Outage}
\begin{prop}\textit{
The rate-vs-outage region for the block fading channel, is
calculated using the same equations (\ref{eq:sq_R}) and
(\ref{eq:sq_con}), used for the fast fading channel only without the
expectation over $H$. This results with an achievable outage
probability for rate $R$, calculated as
\begin{equation}\label{eq:sq_R_outage}
\Pr(outage)=\min_{Q\in\mathcal{P}}\Pr\left(R>\log_2\det\left(I_r+\mathrm{diag}\left(\frac{2^{C_i}-1}{2^{C_i}+H_iQH_i^*}\right)_{i=1}^rHQH^*\right)\right).
\end{equation}}
\end{prop}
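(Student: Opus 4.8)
The plan is to reduce the block-fading statement to the fast-fading achievability of Proposition \ref{prop:sc_ff} applied conditionally, for each frozen channel realization $H=h$. The crucial observation is that in block fading $H$ is held constant over the entire length-$n$ block, so that, conditioned on $H=h$, the agent observations $Y_i(k)=h_iX(k)+N_i(k)$ form a memoryless source in $k$ with i.i.d.\ Gaussian $X(k)$. The elementary-compression construction of Proposition \ref{prop:EC_MI} and its Gaussian specialization therefore apply verbatim to this fixed $h$, but with the ergodic averaging $\mathrm{E}_H[\cdot]$ removed from both the rate (\ref{eq:sq_R}) and the link constraint (\ref{eq:sq_con}). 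This yields, for fixed $h$ and fixed transmit covariance $Q$, the conditional achievable rate $\log_2\det\left(I_r+\mathrm{diag}(1-2^{-q_i})_{i=1}^r\,hQh^*\right)$ subject to the per-realization constraints $\log_2\left((2^{q_i}-1)(h_iQh_i^*+1)+1\right)\le C_i$.

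The second step is to optimize the compression levels $\{q_i\}$ for each realization. Because there is no averaging across fading states, the constraints decouple across $i$ and each simply upper-bounds $q_i$. Since $1-2^{-q_i}$ is increasing in $q_i$ and the determinant is monotone in every diagonal entry of the positive-semidefinite matrix $hQh^*$, the conditional rate is maximized by meeting each constraint with equality, i.e.\ by having every agent use its full link budget $C_i$. Solving $(2^{q_i}-1)(h_iQh_i^*+1)+1=2^{C_i}$ gives $1-2^{-q_i}=\frac{2^{C_i}-1}{2^{C_i}+h_iQh_i^*}$, which is exactly the diagonal entry appearing in (\ref{eq:sq_R_outage}). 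Substituting, the largest rate reliably decodable at the destination for the pair $(h,Q)$ is $R(h,Q)=\log_2\det\left(I_r+\mathrm{diag}\left(\frac{2^{C_i}-1}{2^{C_i}+h_iQh_i^*}\right)_{i=1}^r hQh^*\right)$.

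The final step is the outage characterization. For a target rate $R$, transmission fails precisely on the event $\{R>R(H,Q)\}$, the set of realizations where the realized channel cannot support it, and its probability is the outage induced by $Q$. Since the full-budget choice of $\{q_i\}$ maximizes $R(h,Q)$ pointwise in $h$, it also minimizes the outage indicator pointwise, so no other admissible choice of compression levels can improve matters and the only remaining free parameter is $Q$. Minimizing over $Q\in\mathcal{P}$ then gives $\Pr(\mathrm{outage})=\min_{Q\in\mathcal{P}}\Pr\left(R>R(H,Q)\right)$, which is (\ref{eq:sq_R_outage}).

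I expect the main obstacle to be the justification that the fast-fading achievability transfers cleanly to the frozen-realization setting. The compression codebooks in Appendix \ref{app:EC_MI} are generated according to $P_{U_i|H}$ and the ergodic argument closes via typical-set averaging over $H$; here one must instead verify that, for a fixed memoryless source indexed by a given $h$, the CEO/Wyner-Ziv binning still yields vanishing error probability conditioned on $H=h$ for almost every $h$, and that the outage event is correctly identified as the set of realizations on which this conditional reliability fails. The monotonicity steps that single out the full-budget choice of $q_i$ are routine; it is the conditioning and outage bookkeeping that require care.
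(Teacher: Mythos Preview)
Your proposal is correct and follows precisely the approach the paper intends: the paper states this proposition without an explicit proof, treating it as an immediate consequence of applying equations (\ref{eq:sq_R}) and (\ref{eq:sq_con}) per realization with the expectation over $H$ removed. Your write-up supplies the details the paper omits --- the per-realization application of Proposition~\ref{prop:EC_MI}, the monotonicity of the rate in each $q_i$ that justifies saturating the link constraints, and the algebra that produces the diagonal entries $\frac{2^{C_i}-1}{2^{C_i}+H_iQH_i^*}$ --- so there is nothing to compare beyond noting that you have made explicit what the paper leaves implicit.
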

The underlying MIMO channel enables us to analyze the proposed
schemes using the diversity multiplexing tradeoff.
\subsubsection{Diversity Multiplexing Tradeoff
(DMT)}\label{subsec:DMT1}

An analysis for the diversity-multiplexing tradeoff is given next.
The diversity and multiplexing are defined in the end of section
\ref{sec:setting}. Since our links are lossless, any outage event in
the system is due to the underlying block fading channel. Thus, we
fix all these links to carry information in the rate
\begin{equation}\label{eq:ci_def}
C_i=\frac{m}{r}\log_2(P)+\eps,\ i=1,\dots,r
\end{equation}
where $m\leq\min\{r,t\}$ is the multiplexing gain which is used by
the system, and $\eps>0$ is some fixed positive constant.

\begin{prop}\label{prop:optDMT}\textit{ The DMT $d(m)$ of any scheme with $C_i$ as in (\ref{eq:ci_def}) and non-ergodic block fading underlying channel, is
upper bounded by the minimum between the piecewise linear function
of $(k,(r-k)(t-k))$, for $k=0,\dots,\min\{t,r\}$ and
\begin{equation}
t\left(1-\frac{m}{r}\right),
\end{equation} where $m$ stands for the multiplexing gain.}
\end{prop}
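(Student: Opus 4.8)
The plan is to prove this upper bound on the DMT by splitting it into two separate bounds and showing each one holds for *any* scheme (regardless of how the agents process their observations), so that the minimum of the two is an upper bound. The two bounds arise from two fundamentally different bottlenecks in the system: the underlying MIMO channel itself, and the finite capacity of the links from the agents to the final destination.

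**The MIMO DMT bound.** The first term, the piecewise-linear function connecting the points $(k,(r-k)(t-k))$ for $k=0,\dots,\min\{t,r\}$, is precisely the optimal DMT curve of a point-to-point $t\times r$ MIMO channel, as established by \cite{TseZheng2003}. The argument here is a cut-set / data-processing observation: no scheme in our setting can outperform a genie-aided system in which the final destination has direct, noiseless access to the full received vector $\boldsymbol{Y}=(Y_1,\dots,Y_r)$. That genie-aided system is exactly the $t\times r$ MIMO channel with full CSI at the receiver and no CSI at the transmitter, whose DMT is the stated piecewise-linear function. Since our agents only forward compressed versions of the $Y_i$ through rate-limited links, we can only do worse, so $d(m)$ is bounded above by the MIMO DMT curve. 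I would invoke \cite{TseZheng2003} for the MIMO curve and make the genie/data-processing step explicit.

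**The link-capacity bound.** The second term, $t(1-m/r)$, is the bound that is genuinely specific to this distributed, capacity-constrained setting, and I expect this to be the main obstacle. The idea is that with $C_i=\frac{m}{r}\log_2(P)+\eps$, the total information that can reach the destination is limited; more sharply, an outage must be declared whenever the channel realization $H$ is such that the rate-limited observations cannot support rate $R=m\log_2(P)$. The key is to lower-bound the outage probability by the probability that the underlying channel is too weak for even the genie-aided destination to distinguish messages at the required rate, and then to extract the exponent $t(1-m/r)$ from the Gaussian tail behavior of the fading coefficients. Concretely, I would identify a ``bad'' fading event — for instance, one in which a sufficient fraction of the channel gains $|H_{ij}|^2$ are simultaneously small, of order $P^{-(1-m/r)}$ — whose probability decays like $P^{-t(1-m/r)}$ and on which the mutual information delivered to the destination through the links falls below $R$. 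The exponent $t$ reflects that the transmitter has $t$ antennas whose signals must all be resolvable, while the factor $(1-m/r)$ measures the gap between the per-link budget $\frac{m}{r}\log_2 P$ and the full channel use.

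**Assembling the two bounds.** Having established each bound separately — the first by a genie argument reducing to the known MIMO DMT, the second by exhibiting an explicit outage-inducing fading event of the right exponential order — the overall upper bound follows immediately, since the true outage probability is at least as large as the maximum of the two lower bounds, giving a diversity at most the minimum of the two exponents. The hard part will be making the link-capacity argument rigorous for an \emph{arbitrary} scheme: since the agents may use any (possibly non-Gaussian, correlated) processing, I cannot appeal to the explicit achievable expressions of Proposition~\ref{prop:sc_ff}, and must instead argue at the level of mutual information flowing across the $r$ links using the data-processing inequality, bounding $I(M;\hat M)$ by quantities controlled simultaneously by the $\{C_i\}$ and by the conditional channel mutual information on the bad fading event. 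I would structure this as a Fano-type converse: on the bad event the destination's accessible information is strictly below $nR$, forcing error, and the probability of that event supplies the exponent $t(1-m/r)$.
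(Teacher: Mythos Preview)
Your first bound (the $t\times r$ MIMO DMT curve) is correct and coincides with the paper's treatment: it is the cut-set bound with the cut $\mathcal{S}=\{1,\dots,r\}$, i.e., the genie that hands the destination the full $Y^r$.

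The gap is in your second bound. The exponent $t(1-m/r)$ does \emph{not} come from a ``bad fading event on which the mutual information delivered through the links falls below $R$'' --- the links are lossless and carry exactly $C_i$ bits regardless of fading, so no fading event can make them fail. Nor does it come from the total link budget $\sum_i C_i = m\log P + r\eps$, which trivially exceeds $R=m\log P$ and gives nothing. Your intuition that ``the factor $t$ reflects that the transmitter has $t$ antennas whose signals must all be resolvable'' is also not the right picture.

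What the paper does is apply the cut-set bound for \emph{every} cut $\mathcal{S}\subseteq\{1,\dots,r\}$ simultaneously:
\[
R \;\leq\; I(X;Y_\mathcal{S}|H)+\sum_{j\in\mathcal{S}^C}C_j
\;=\;\log_2\det\bigl(I_{|\mathcal{S}|}+H_\mathcal{S}QH_\mathcal{S}^*\bigr)+(r-|\mathcal{S}|)\Bigl[\tfrac{m}{r}\log_2 P+\eps\Bigr].
\]
For a cut of size $s=|\mathcal{S}|$, the outage event is that an $s\times t$ MIMO channel fails at effective multiplexing gain $sm/r$, so by Zheng--Tse the associated diversity is the piecewise-linear curve through $(k,(s-k)(t-k))$ evaluated at $sm/r$. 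Minimizing over $s$, the paper shows the minimum is attained at an endpoint: $s=r$ recovers the full MIMO curve, while $s=1$ gives a $1\times t$ SIMO outage at multiplexing $m/r$, whose DMT is the line from $(0,t)$ to $(1,0)$, i.e., exactly $t(1-m/r)$. So the mysterious second term is just the single-agent cut: one agent's channel observation plus the other $r-1$ agents' link capacities. The factor $t$ is the diversity of one agent's $1\times t$ row of $H$, and $1-m/r$ is the residual multiplexing demand on that one agent after the other $r-1$ links have supplied their share.

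Your proposal would become correct if you replaced the vague ``bad event'' argument by this single-agent cut; as written, the mechanism you describe does not produce the bound.
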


For example, the maximum diversity achieved here is with $m=0$,
which results with $d(0)=t$, which is smaller than $rt$. This result
can be understood by considering that when $m=0$, the capacity of
the links between the agents and the final destination are very
small. So that getting good channel between the transmitter and only
one agent will not suffice to forward the information. So we need a
good channel at every agent, which results with diversity order of
$t$ and not $rt$.

An implication of the result is with respect to the MIMO broadcast
channel. In order to achieve the full multiplexing gain in a MIMO
broadcast channel, the transmitter is required to have full CSI
\cite{WeingartenSteinbergShamai2006}. Here, an elementary
compression scheme, with limited cooperation between the
destinations achieves the full multiplexing gain without channel
state knowledge at the transmitter (which usually requires some
feedback). Further, such cooperation is usually easier to obtain
when the destinations are co-located.

\begin{proof}
The proof is based on the cut-set bound \cite{CoverThomas}. For any
covariance constraint $\mathrm{E}[XX^*]=Q$, and channel $H$, any
achievable rate is upper bounded by the cut-set bound, for any cut
$\mathcal{S}\subseteq\{1,\dots,r\}$
\begin{equation}
R_c=I(X;Y_\mathcal{S}|H)+\sum_{j\in\mathcal{S}^C}C_j=\log_2\det(I_{|\mathcal{S}|}+H_\mathcal{S}QH_\mathcal{S}^*)+(r-|\mathcal{S}|)\left[\frac{m}{r}\log_2(P)+\epsilon\right].
\end{equation}
So that for any scheme that achieves the rate $R(H)$ for channel
$H$, with input covariance $Q$, the probability of outage is limited
by
\begin{equation}
\forall\ R^*>0:\ \Pr\{R(H)<R^*\}\geq\Pr\{R_c(H)<R^*\}.
\end{equation}
Now we can calculate the upper bound on the DMT:
\begin{multline}\label{eq:DMT_out}
d(m)\leq-\lim_{P\rightarrow\infty}\frac{\log(\Pr(\mathrm{outage}))}{\log(P)}=\\
\min_{\mathcal{S}\subseteq\{1,\dots,r\}}-\lim_{P\rightarrow\infty}\frac{\min_{Q\in\mathcal{P}}\log\Pr\left(\log_2\det\left(I_{|\mathcal{S}|}+H_\mathcal{S}QH_\mathcal{S}^*\right)<\frac{m}{r}|\mathcal{S}|\log(P)-|\mathcal{S}^C|\eps\right)}{\log(P)}.
\end{multline}
Using \cite{TseZheng2003}, for each $\mathcal{S}$ we get that the
diversity $d_\mathcal{S}(m)$ is the piecewise linear function
connecting points $(k,(|\mathcal{S}|-k)(t-k))$, with
$\frac{|\mathcal{S}|m}{r}$ as the argument. Next, we need to
minimize this $d_\mathcal{S}(m)$ over all subsets $\mathcal{S}$.
Since $\Pr\{0<-r\eps\}=0$, we can limit the search space to subsets
that include at least one element. Define $s=|\mathcal{S}|$, so that
we can use
\begin{equation}\label{eq:DMT_b1}
\Pr\{\mathrm{outage\ with\ }s\} \dotge s
P^{-d_{\mathcal{S}}(m)}\doteq P^{-d_{\mathcal{S}}(m)}.
\end{equation}
Let us use the underlying functions of $d_\mathcal{S}(m)$, before
applying the piecewise linear operation
\begin{equation}\label{eq:no_pice}
\min_{1\leq s\leq r}
\left(s-\frac{sm}{r}\right)\left(t-\frac{sm}{r}\right).
\end{equation}
The minimum of (\ref{eq:no_pice}) is obtained by either taking $s=1$
or $s=r$, regardless of $m$. Since the piecewise linear function
exhibits the same behavior, we get Proposition \ref{prop:optDMT}.
\end{proof}
\begin{corr}\label{corr:DMT_EC}
\textit{The elementary compression achieves the full multiplexing
gain, but fails to achieve the DMT.}
\end{corr}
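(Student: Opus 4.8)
The plan is to read the high-SNR exponent directly off the block-fading outage expression (\ref{eq:sq_R_outage}), using the Laplace-type evaluation of \cite{TseZheng2003} exactly as in the proof of Proposition \ref{prop:optDMT}. Throughout I set $R=m\log_2 P$ and $C_i=\frac{m}{r}\log_2 P+\eps$, and, since the transmitter has no CSI, I fix the channel-independent covariance $Q=\frac{P}{t}I_t$ (the unitary-invariance remark after Proposition \ref{prop:sc_ff} shows nothing is lost by taking $Q$ diagonal, and a scalar multiple of the identity is the natural choice without CSIT). The single structural fact that drives everything is that the lossless links cap the delivered information: by the data-processing inequality $\log_2\det\!\big(I_r+\mathrm{diag}(\delta_i)HQH^*\big)=I(X;U^r|H)\le\sum_i C_i=m\log_2 P+r\eps$, with $\delta_i=\frac{2^{C_i}-1}{2^{C_i}+H_iQH_i^*}$. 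Thus the mutual information never exceeds $R$ by more than the \emph{constant} $r\eps$; there is no $\Theta(\log P)$ margin, and the whole problem reduces to controlling the channel-dependent gap between the achieved MI and this cap.

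For the multiplexing-gain statement I would show that on a typical realization the determinant in (\ref{eq:sq_R_outage}) grows like $P^{m}$. Writing the eigenvalues of $HH^*$ as $P^{-\alpha_k}$ and expanding $\log_2\det\big(I_r+\mathrm{diag}(\delta_i)HQH^*\big)$ to exponential order, on the generic event $\{\alpha_k=0\}$ one finds the exponent equals $m$; hence the pre-log of the achievable rate equals $m$ and reaches the cut-set ceiling $\min\{t,r\}$ as $m\to\min\{t,r\}$. Since the all-agents cut in Proposition \ref{prop:optDMT} already forces $m\le\min\{t,r\}$, elementary compression is multiplexing-gain optimal.

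The diversity statement is where elementary compression must lose. Because the $U_i$ are produced from the $Y_i$ without binning, the compression is strictly lossy, and carrying the factorization of the determinant through (for instance, in the $2\times2$ case $\det\big(I_r+\mathrm{diag}(\delta_i)HQH^*\big)\doteq P^{m}\,\frac{|\det H|^2}{\prod_i|H_i|^2}$) shows that the delivered MI equals the cap $\sum_i C_i$ times a multiplicative loss $\frac{|\det H|^2}{\prod_i|H_i|^2}\le1$ (Hadamard). Outage is therefore triggered the moment this loss drops below $2^{-r\eps}$, a \emph{fixed-probability} event independent of $P$. Consequently $d_{EC}(m)=0$ for $r\ge2$ and every $m>0$, strictly below the bound $\min\{d^{\mathrm{MIMO}}(m),\,t(1-m/r)\}>0$ of Proposition \ref{prop:optDMT}; the contrast with the Wyner--Ziv scheme of Section \ref{sec:WynerZiv}, whose binning pins the MI at $\sum_i C_i$ on every channel that is not in cut-set outage and thus recovers the bound, is precisely the loss of this multiplicative factor. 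A single fading event with loss just below $2^{-r\eps}$ but with $H$ well conditioned (so the cut-set is not in outage) makes the gap explicit, and the degenerate case $r=1$, where there is no correlation to exploit and $d_{EC}(m)=t(1-m)$ meets the bound, confirms that the gap is genuinely an $r\ge2$ phenomenon.

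The step I expect to be the main obstacle is the outage analysis of $\mathrm{diag}(\delta_i)HQH^*$ itself. Unlike an ordinary MIMO channel this matrix is not diagonalized by the singular value decomposition of $H$: the saturation factors $\mathrm{diag}(\delta_i)$ act in the agent basis while the determinant is governed by the eigenvalues of $HQH^*$, and each $\delta_i$ in turn depends nonlinearly on the row norm $H_iQH_i^*$ through its compression- versus channel-limited transition. Isolating the multiplicative loss factor together with its law, and verifying that the fixed choice $Q=\frac{P}{t}I_t$ is exponentially optimal inside the $\min_Q$ of (\ref{eq:sq_R_outage}), is the delicate part of the argument.
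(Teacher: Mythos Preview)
Your proposal is correct and follows essentially the same route as the paper: the paper also upper-bounds the EC rate by replacing $\delta_i$ with $\frac{2^{C_i}-1}{(P/t)|H_i|^2}$, which turns $\mathrm{diag}(\delta_i)HQH^*$ into $P^{m/r}$ times the Gram matrix of the row-normalized channel $H_\measuredangle$, and your multiplicative loss $\frac{|\det H|^2}{\prod_i|H_i|^2}$ is precisely $\det(H_\measuredangle H_\measuredangle^*)$, whose Hadamard deficit is a fixed-probability event for $r>1$. The one point you flag as delicate---handling the $\min_Q$ in (\ref{eq:sq_R_outage})---is dispatched in the paper by the crude bound $\log_2\det(I+HQH^*)\le\log_2\det(I+PHH^*)$ valid for every diagonal $Q$ with $\mathrm{trace}(Q)\le P$, after which your fixed-probability outage argument (and your multiplexing-gain lower bound via $Q=\frac{P}{t}I_t$) go through verbatim.
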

\emph{Proof for Corollary \ref{corr:DMT_EC}}
\begin{enumerate}
\item Next we show that elementary compression suffices to achieve the
full multiplexing gain $\bar{m}=\min\{r,t\}$. The first step is to
lower bound (\ref{eq:sq_R}) by a specific choice of $Q$ and
$P_{D_i}$. We can lower bound (\ref{eq:sq_R}) by taking
$Q=\frac{P}{t}I_t$ and the following suboptimal quantization noise
power $P_{D_i}=\frac{P/t|H_i|^2+1}{2^{C_i}-1}$ and by further taking
$\breve{P}_D\triangleq P_{D_{\breve{i}}}$,
$\breve{i}=\mathrm{argmax} \{P_{D_i}\}$
\begin{multline}\label{eq:neq_1}
R_{EC}\geq\mathrm{E}_H
\log_2\det\left(I_r+\frac{P}{t} \frac{1}{1+\breve{P}_D}\mathrm{diag}(\lambda_1,\dots,\lambda_r)\right)=\\
\mathrm{E}_H\left[\sum_{i=1}^r \log_2\left(1+\frac{P
\lambda_i/t}{1+\breve{P}_D}\right)\right]=\mathrm{E}_H\left[\sum_{i=1}^r
\log_2\left(1+\left(1+\frac{|H_{\breve{i}}|^2\frac{P}{t}+1}{P^{\frac{\bar{m}}{r}}2^\epsilon-1}\right)^{-1}\frac{P
\lambda_i}{t}\right)\right],
\end{multline}
where $\{\lambda_i\}$ are the eigenvalues of $HH^*$. Now since for
$i=1,\dots,\bar{m}$ we have that $\lambda_i>0$,
\begin{equation}
\lim_{P\rightarrow\infty}\frac{\log_2\left(1+\left(1+\frac{|H_{i^*}|^2\frac{P}{t}+1}{P^{\frac{\bar{m}}{r}}2^\epsilon-1}\right)^{-1}\frac{P
\lambda_i}{t}\right)}{\log_2(P)}=\frac{\bar{m}}{r},
\end{equation}
we get
\begin{equation}
\lim_{P\rightarrow\infty}\frac{R_{EC}}{\log_2(P)}=\bar{m}.
\end{equation} \hfill{\QED}

\item As for the DMT achieved by elementary compression ($d_{EC}$), upper bound the outage probability from equation
(\ref{eq:sq_R_outage}), and calculate the resulting diversity
\begin{multline}\label{eq:DMT_out_EC}
d_{EC}(m)\leq-\lim_{P\rightarrow\infty}\frac{\log\Pr\left(\log_2\det\left(I_{r}+P\mathrm{diag}\left\{\frac{2^{C_i}-1}{P/t|H_{i}|^2}\right\}_{i=1}^rHH^*\right)<m\log(P)\right)}{\log(P)}\\
=-\lim_{P\rightarrow\infty}\frac{\log\Pr\left(\log_2\det\left(I_{r}+t\left(P^{\frac{m}{r}}2^\epsilon-1\right)H_{\measuredangle}H^*_{\measuredangle}\right)<m\log(P)\right)}{\log(P)},
\end{multline}
where the inequality in (\ref{eq:DMT_out_EC}) is since
$\frac{2^{C_i}-1}{P/t|H_{i}|^2+2^{C_i}}\leq\frac{2^{C_i}-1}{P/t|H_{i}|^2}$
and since $\log_2\det(I+HQH^*)\leq\log_2\det(I+PHH^*)$ for any
diagonal $Q$ with $\mathrm{trace}(Q)\leq P$ \cite{TseZheng2003}. The
matrix $H_{\measuredangle}$ is defined as
\begin{equation*}
H_{\measuredangle}=\left[\begin{array}{l}\frac{H_1}{|H_1|}\\ \vdots \\
\frac{H_r}{|H_r|}\end{array}\right].
\end{equation*}
Next, since $\det(P^{\frac{m}{r}}I_r)=P^{m}$ we have the equality
\begin{multline}
\Pr\left(\log_2\det\left(I_{r}+t\left(P^{\frac{m}{r}}2^\epsilon-1\right)H_{\measuredangle}H^*_{\measuredangle}\right)<m\log(P)\right)=\\
\Pr\left(\log_2\det\left(P^{-\frac{m}{r}}I_{r}+t\left(2^\epsilon-P^{-\frac{m}{r}}\right)H_{\measuredangle}H^*_{\measuredangle}\right)<0\right).
\end{multline}
Taking the limit with respect to $P$, one gets
\begin{multline}\label{eq:DMT_out_EC3}
\lim_{P\rightarrow\infty}\Pr\left(\log_2\det\left(P^{-\frac{m}{r}}I_{r}+t\left(2^\epsilon-P^{-\frac{m}{r}}\right)H_{\measuredangle}H^*_{\measuredangle}\right)<0\right)=
\Pr\left(\log_2\det\left(
H_{\measuredangle}H^*_{\measuredangle}\right)<-(t2^\epsilon)^r\right).
\end{multline}
Using Hadamard's inequality, as long as $r>1$, the limit of the
probability in (\ref{eq:DMT_out_EC3}) is strictly larger than
zero, so that when taking the logarithm and dividing by $\log(P)$,
one gets that
\begin{equation}
d_{EC}(m)=0,
\end{equation}
for all $m>0$. So the optimal DMT is not achievable using elementary
compression, for more than a single agent $r>1$, and multiplexing
gain of more than zero. \hfill{\QED}

\item This sub-optimal DMT is since there exist correlations between the received
signals at the different agents $\frac{|H_iH_j^*|}{|H_i||H_j|}>0$,
we get that $|H_{\measuredangle}H^*_{\measuredangle}|<1$. As these
correlations decrease, for example, by taking $t$ to be very large
compared with $r$, the outage probability becomes smaller. In the
next section, we will exploit these correlations by a CEO approach,
to reach the optimal DMT.

\end{enumerate}

\section{CEO based scheme}\label{sec:WynerZiv} In this section we consider the same setting as in the
previous section, but use the technique from \cite{Confpaper}, that
is compression followed by bining, for better utilization of the
capacity of the links, exploiting the correlations between the
received signals at the agents.
\subsection{Ergodic Channel}
\begin{prop}\label{thm:WZ}\textit{
An achievable rate, for the ergodic channel, when using CEO
compression is:
\begin{equation}\label{eq:achv_MI}
R_{CEO}=\max_{P_{U_i|Y_i,H}(u_i|y_i,h)}\left\{\min_{\mathcal{S}\subseteq\{1,\dots,r\}}\left\{\sum_{i\in\mathcal{S}^C}[C_i-I(Y_i;U_i|X,H)]+I(U_{\mathcal{S}};X|H)\right\}\right\},
\end{equation}}
where
\begin{equation}
P_{X,Y^r,H,U^r}(x,y^r,h,u^r)=P_X(x)P_H(h)\prod_{i=1}^r
P_{Y_i|H,X}(y_i|h,x)P_{U_i|H,Y_i}(u_i|h,y_i).
\end{equation}
\end{prop}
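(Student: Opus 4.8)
The plan is to prove achievability by a random coding argument that combines a message codebook for $X$ drawn according to $P_X$ with the quantize-and-bin (CEO) technique of \cite{Confpaper} at the agents, followed by a single joint-typicality decoding step at the destination. The structure parallels the Berger--Tung inner bound for distributed source coding, with the essential difference (already stressed in the introduction) that the destination does not reconstruct a source under a distortion constraint but instead decodes the transmitted message directly from the collection of binned descriptions; the minimization over subsets $\mathcal S$ in (\ref{eq:achv_MI}) will emerge precisely from the joint error analysis of this decoder.

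First I would set up the codebooks. Draw $2^{nR}$ message sequences $x^n(m)$ i.i.d. from $\prod_k P_X$. For each agent $i$, and conditioned on the known fading sequence $h^n$, generate $2^{n\tilde R_i}$ quantization sequences $u_i^n(l_i)$ i.i.d. from $\prod_k P_{U_i|H}(\cdot\,|h(k))$ -- the marginal over $X$, reflecting that the agents do not know the codebook -- and distribute them uniformly into $2^{nC_i}$ bins. Agent $i$, observing $(y_i^n,h^n)$, looks for an index $l_i$ with $(u_i^n(l_i),y_i^n,h^n)$ jointly typical; by the covering lemma this succeeds with high probability provided $\tilde R_i > I(U_i;Y_i|H)$, so I set $\tilde R_i = I(U_i;Y_i|H)+\delta$. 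Agent $i$ then forwards over its link only the bin index of $l_i$, which is feasible because there are $2^{nC_i}$ bins.

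The core of the argument is the error analysis of the destination, which knows $h^n$, all codebooks, and the $r$ bin indices, and searches for a message $\hat m$ together with bin-consistent indices $\hat l_i$ such that $\bigl(x^n(\hat m),u_1^n(\hat l_1),\dots,u_r^n(\hat l_r),h^n\bigr)$ is jointly typical. Conditioning on the high-probability event that every agent found a jointly typical description, I would classify error events by the subset $\mathcal S\subseteq\{1,\dots,r\}$ of descriptions resolved correctly ($\hat l_i=l_i$), leaving $\mathcal S^C$ in error, and the message either correct or wrong. For the binding case $\hat m\neq m$, the number of competing index tuples is $\doteq 2^{nR}\,2^{n\sum_{i\in\mathcal S^C}(\tilde R_i-C_i)}$ -- the factor $2^{n(\tilde R_i-C_i)}$ counting the rival codewords sharing a bin with the true one -- while each such tuple passes the typicality test with probability $\doteq 2^{-n\,\Delta_{\mathcal S}}$, where $\Delta_{\mathcal S}$ is the mutual information between the independently drawn parts $(X,U_{\mathcal S^C})$ and the conditioning parts given $H$. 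Requiring the expected number of false candidates to vanish yields $R+\sum_{i\in\mathcal S^C}(\tilde R_i-C_i)\le \Delta_{\mathcal S}$ for every $\mathcal S$.

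The final step is to simplify this bound into the stated form. Substituting $\tilde R_i=I(U_i;Y_i|H)$ and invoking the Markov relations of the model -- namely $U_i-(Y_i,H)-(X,U_{\text{rest}})$ together with the conditional independence of the $Y_i$ given $(X,H)$ -- the per-agent net term collapses to $I(U_i;Y_i|H)-(\tilde R_i-C_i)=C_i-I(Y_i;U_i|X,H)$ for $i\in\mathcal S^C$, while the correctly resolved set contributes $I(U_{\mathcal S};X|H)$; combining these gives $R\le\sum_{i\in\mathcal S^C}[C_i-I(Y_i;U_i|X,H)]+I(U_{\mathcal S};X|H)$ for each $\mathcal S$. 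Taking the intersection over all $\mathcal S$ (the minimum) and then optimizing over the test channels $P_{U_i|Y_i,H}$ produces (\ref{eq:achv_MI}). I expect the main obstacle to be exactly this error analysis: correctly enumerating the joint error events over subsets and verifying, via the Markov structure, that the raw typicality exponent $\Delta_{\mathcal S}$ reduces exactly to the claimed combination, so that the compression-noise information $I(Y_i;U_i|X,H)$ is what is subtracted from each forwarded capacity $C_i$. A secondary technical point worth noting is that, as in the elementary-compression scheme, conditioning the $U_i$ codebooks on $h^n$ formally requires a separate codebook for each fading realization.
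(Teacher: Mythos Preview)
Your proposal is correct and follows essentially the same route as the paper: random $X$-codebook, $U_i$-codebooks generated from $P_{U_i|H}$ and randomly binned into $2^{nC_i}$ bins, joint-typicality decoding at the destination, and a subset-indexed union bound over $(\hat m,\hat l_{\mathcal S^C})$ errors that, via the Markov chains $U_i-(Y_i,H)-X$ and the conditional independence of the $U_i$ given $(X,H)$, reduces to (\ref{eq:achv_MI}). The only packaging difference is that the paper obtains the formal proof by specializing Proposition~\ref{prop:achv_BC} with $W^r$ constant rather than writing the direct argument you give.
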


\emph{Proof guidelines:} The proof involves the random generation of
$\boldsymbol{U}_i$ according to $\prod_{k=1}^n
P_{U_i|H}(u_i(k)|h(k))$, and then randomly partitioning the
resulting code book into $2^{nC_i}$ bins, as done in a WZ or a CEO
based quantization. Then, each agent selects $\boldsymbol{U}_i$
which is jointly typical with the received
$(\boldsymbol{Y}_i,\boldsymbol{H})$. It proceeds by sending the
corresponding bin index to the final destination through the
lossless link. The final destination knows $\boldsymbol{H}$ and the
bins in which $\boldsymbol{U}^r$ fall in. Finally, the destination
looks for $(\boldsymbol{X},\boldsymbol{U}^r)$ which is jointly
typical, and from deciding on the transmitted $\boldsymbol{X}$,
declares the decoded message. \\
The formal proof is by degenerating Proposition \ref{prop:achv_BC},
such that $W^r$ are constants, and the random encoding, which is
represented by $f$ is known to all parties. \hfill{\QED}


Focusing on the Gaussian channel, for the fast fading channel using
(\ref{eq:achv_MI}) the following proposition is derived.
\begin{prop}\label{prop:wz_ff}\textit{An achievable rate when using CEO compression
over Gaussian channel with fast fading is:
\begin{equation}
\label{eq:WZ_achv} R_{CEO}= \max_{\{q_i:\mathbb{C}^{[r\times
t]}\rightarrow\mathbb{R}_{+}\}_{i=1}^r}\left\{\min_{\mathcal{S}\subseteq\{1,\dots,r\}} \Bigg\{\mathrm{E}_H \Bigg[\sum_{i\in\mathcal{S}^C}[C_i-q_i(H)]+
\log_2\det\left(I_{|\mathcal{S}|}+\frac{P}{t}\mathrm{diag}\left(1-2^{-q_i(H)}\right)_{i\in\mathcal{S}}H_\mathcal{S}H_\mathcal{S}^*\right)\Bigg]\Bigg\}\right\},
\end{equation} where $H_\mathcal{S}=\{H_i\}_{i\in\mathcal{S}}$.}
\end{prop}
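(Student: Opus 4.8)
The plan is to specialize the single-letter achievable rate (\ref{eq:achv_MI}) of Proposition \ref{thm:WZ} to the Gaussian MIMO fast-fading model. Since (\ref{eq:achv_MI}) is a maximum over all admissible conditional laws $P_{U_i|Y_i,H}$, every particular choice of these laws yields a rate that is still achievable; it therefore suffices to evaluate (\ref{eq:achv_MI}) for the Gaussian forward test channel $U_i=Y_i+D_i$, with $D_i\sim\mathcal{CN}(0,P_{D_i})$ independent across $i$ and of everything else, together with the isotropic input $X\sim\mathcal{CN}(0,\frac{P}{t}I_t)$ that is natural in the absence of channel knowledge at the transmitter. The compression noise power $P_{D_i}$ is allowed to depend on the realization of $H$, and the free design parameter is equivalently captured by the per-realization compression rate $q_i(H)$ of (\ref{eq:q_def}).

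First I would evaluate the two information quantities appearing in (\ref{eq:achv_MI}) for a fixed realization $H=h$. Conditioned on $X$ and $h$, the pair $(Y_i,U_i)=(h_iX+N_i,\,h_iX+N_i+D_i)$ is a scalar additive Gaussian test channel, so $I(Y_i;U_i|X,H=h)=\log_2\!\left(1+P_{D_i}^{-1}\right)$, which is exactly $q_i(h)$ and is consistent with the identity $1-2^{-q_i}=(1+P_{D_i})^{-1}$ of (\ref{eq:q_def}). For the subset term, conditioned on $h$ one has $U_{\mathcal{S}}=H_{\mathcal{S}}X+(N_{\mathcal{S}}+D_{\mathcal{S}})$, a Gaussian MIMO channel whose effective noise covariance is the diagonal matrix $\mathrm{diag}(1+P_{D_i})_{i\in\mathcal{S}}$. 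The standard Gaussian mutual-information evaluation then gives
\begin{equation*}
I(U_{\mathcal{S}};X|H=h)=\log_2\det\!\left(I_{|\mathcal{S}|}+\frac{P}{t}\,\mathrm{diag}\!\left(1-2^{-q_i(h)}\right)_{i\in\mathcal{S}}H_{\mathcal{S}}H_{\mathcal{S}}^*\right),
\end{equation*}
where the inverse-covariance factor $\mathrm{diag}((1+P_{D_i})^{-1})$ has been rewritten through (\ref{eq:q_def}) as $\mathrm{diag}(1-2^{-q_i(h)})$.

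It then remains to assemble these per-realization expressions. Because each conditional mutual information in (\ref{eq:achv_MI}) is by definition the expectation over $H$ of its per-realization value, one has $I(Y_i;U_i|X,H)=\mathrm{E}_H[q_i(H)]$ and $I(U_{\mathcal{S}};X|H)=\mathrm{E}_H[\log_2\det(\cdots)]$. Substituting into (\ref{eq:achv_MI}), absorbing the deterministic $C_i$ into the expectation, and combining the two averaged terms under a single $\mathrm{E}_H$ yields the bracketed objective of (\ref{eq:WZ_achv}); finally, recognizing that the outer maximization over all Gaussian test channels is precisely a maximization over the functions $q_i:\mathbb{C}^{[r\times t]}\rightarrow\mathbb{R}_+$ recovers (\ref{eq:WZ_achv}) exactly. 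I would also note that, unlike the elementary-compression result of Proposition \ref{prop:sc_ff}, no separate link-rate constraint need be imposed here: the subset minimization in the CEO/Berger--Tung bound already enforces feasibility, since an agent compressing at a rate $q_i$ above its link capacity $C_i$ only lowers the objective through the negative increments $C_i-q_i$ in the subsets placing it in $\mathcal{S}^C$. The one step requiring genuine care is the identification of the effective noise covariance in $I(U_{\mathcal{S}};X|H)$ and the bookkeeping that distinguishes the per-realization quantities from their $H$-averages; the remainder is a routine Gaussian computation, and crucially no optimality-of-Gaussian argument is needed, because (\ref{eq:WZ_achv}) is asserted only as an achievable rate.
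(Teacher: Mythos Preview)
Your proposal is correct and follows essentially the same route as the paper: specialize the general CEO achievable rate (\ref{eq:achv_MI}) to the Gaussian setting via the forward test channel $U_i=Y_i+D_i$ with $D_i\sim\mathcal{CN}(0,P_{D_i}(H))$, take the isotropic input $Q=\frac{P}{t}I_t$, and compute the two mutual-information terms per realization before averaging over $H$. The only addition the paper makes is to note (and prove in Appendix~\ref{app:eye_opt}) that $Q=\frac{P}{t}I_t$ actually maximizes the resulting expression over all input covariances, so that (\ref{eq:WZ_achv}) is not merely achievable but is the best the Gaussian CEO test channel can do; since the proposition as stated only asserts achievability, your argument suffices for it.
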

The Proposition is proved by using the underlying channel
$P_{U|Y,H}$ for the compression, such that the quantization noise is
independent of the signal, as done for the elementary compression
scheme in section \ref{sec:simple}. Similarly, define $P_{D_i}$ as
the power of the circular-symmetric complex Gaussian quantization
noise and $q_i(H)$ is the corresponding parameter, calculated as
(\ref{eq:q_def}).

The rate in (\ref{eq:WZ_achv}) is calculated assuming signalling
with $Q=\frac{P}{t}I_t$. The proof that such signaling indeed
maximizes the achievable rate is relegated to Appendix
\ref{app:eye_opt}. This means that the achievable rate from
Proposition \ref{prop:wz_ff} applies also to the sum-rate of multi
access channel, see \cite{TulinoLozanoVerdu2006}. Notice that
although introducing correlation in $Q$ improves the compression,
since it uses the correlation to save bandwidth, it comes on the
expense of the achievable rate, due to the reduced degrees of
freedom. Thus, the total rate is still maximized by taking
$Q=\frac{P}{t}I_t$.

\begin{rem}
The optimization over $q_i$ in the above problem is a concave
problem, and thus can be efficiently solved. The optimization
results with an achievable rate, while assuming full knowledge of
CSI ($\boldsymbol{H}$) in the final destination and in all the
agents. However, this requirement does not impose severe
limitations. This becomes
evident in the sequel where 
Correlations \ref{corr:P_big}, \ref{sub:optimal_inf_t} describe
special cases, where $q_i(H)$ is fixed, so only $\boldsymbol{H}_i$
is required at the agent.
\end{rem}

Notice that (\ref{eq:WZ_achv}) includes joint optimization over all
possible channel realizations. A simpler non-optimal approach is to
optimize separably for every channel
\begin{equation}\label{eq:WZ_achv_2}
R_{CEO,2}=\max_{Q\in\mathcal{P}}\mathrm{E}_H \Bigg[\max_{\{0\leq
q_i\}_{i=1}^r}\min_{\mathcal{S}\subseteq\{1,\dots,r\}} \Bigg\{\sum_{i\in\mathcal{S}^C}[C_i-q_i]+
\log_2\det\left(I_{|\mathcal{S}|}+\mathrm{diag}\left(1-2^{-q_i}\right)_{i\in\mathcal{S}}H_\mathcal{S}QH_\mathcal{S}^*\right)\Bigg\}\Bigg].
\end{equation}
Unlike many channel coding problems over fast fading channels
\cite{LiGoldsmith2001},\cite{BiglieriProakisShamai1998}, where there
is no loss in optimality when using different codebook for every
channel realization, here there is a strict gain to using a single
codebook, such that the decoding is done jointly over the different
realizations of $H$. So that $R_{CEO,2}<R_{CEO}$, with high
probability.

\begin{rem}\label{rem:q_i}
As in \cite{fullpaper}, $\mathrm{E}q_i$ in both (\ref{eq:WZ_achv})
and (\ref{eq:WZ_achv_2}), can be interpreted as the rate wasted on
the compression of the additive noise by the $i^{\mathrm{th}}$
agent's processing. So that, for example taking
$\mathcal{S}=\{\phi\}$ in (\ref{eq:WZ_achv}), results with the
achievable rate of $\sum_{i=1}^r [C_i-\mathrm{E}_Hq_i]$. Of course,
this represents only one of $2^{|\mathcal{S}|}$ elements within the
minimum.
\end{rem}


\begin{rem}
When the agents do not have $H$, but rather only $H_i$, that is,
each agent has only its own channel to the transmitter, and not the
channels of the other agents, the optimization in Proposition
\ref{prop:wz_ff} is done in this case over $q_i:\mathbb{C}^{[1\times
t]}\rightarrow\mathbb{R}_{+}$.
\end{rem}


Next, we give a solution to the optimization problem issued by
Proposition \ref{prop:wz_ff}, for the symmetric case with $r=2$.
Such setting results with
\begin{prop}\label{prop:explicit}\textit{An achievable rate for the symmetric setting
with $r=2$ and ergodic setting is equal to
\begin{equation}\label{eq:explic_7_prop}
R_{CEO}=2(C-\mathrm{E}_H [q_1(H,\theta)]),
\end{equation}
where ($\lceil a\rceil^+=\max\{a,0\}$)
\begin{equation}
q_i(H,\theta)=\left\{\begin{array}{cc}
\left\lceil-\log_2\left(\frac{\theta}{1+\theta}\frac{1+\frac{P}{t}|H_i|^2}{\frac{P}{t}|H_i|^2}\right)\right\rceil^+
& \theta>F_H(\frac{\Delta}{\Delta+|H_i|^2}) \\
\left\lceil-\log_2(\frac{\Delta+|H_{3-i}|^2}{\Delta}F_H(\theta))\right\rceil^+
& \theta\leq F_H(\frac{\Delta}{\Delta+|H_i|^2}),\end{array}\right.
\end{equation}
with
\begin{eqnarray}
F_H(\theta) & \triangleq & \frac{1}{2(1+\theta)}\left(1+2\theta-\sqrt{(1+2\theta)^2-4\theta(1+\theta)\frac{(\Delta+\frac{t}{P}+|H_1|^2+|H_2|^2)\Delta}{(\Delta+|H_1|^2)(\Delta+|H_2|^2)}}\right)\label{eq:explic_F}\\
\Delta & \triangleq & \frac{P}{t}\det\left(HH^*\right)\\
\end{eqnarray}
and $\theta>0$ which is set such that
\begin{equation}
\mathrm{E}_H
\log_2\det\left(I_2+\frac{P}{t}\mathrm{diag}\left(1-2^{-q_i(H,\theta)}\right)_{i=1}^2HH^*\right)=2(C-\mathrm{E}_H
[q_1(H,\theta)]).
\end{equation}
 }
\end{prop}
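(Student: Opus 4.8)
The plan is to treat the right-hand side of Proposition \ref{prop:wz_ff}, specialized to $r=2$ and $C_1=C_2=C$, as a single concave program in the compression policies $q_1(\cdot),q_2(\cdot)$ and solve it through its dual. First I would write out the four subset terms explicitly: for $\mathcal{S}=\emptyset$ the term is $2C-\mathrm{E}_H[q_1+q_2]$; for the singletons $\mathcal{S}=\{1\},\{2\}$ it is $\mathrm{E}_H[(C-q_{3-i})+\log_2(1+\frac{P}{t}(1-2^{-q_i})|H_i|^2)]$; and for $\mathcal{S}=\{1,2\}$ it is $\mathrm{E}_H[\log_2\det(I_2+\frac{P}{t}\mathrm{diag}(1-2^{-q_i})_{i=1}^2 HH^*)]$. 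Each term is concave in the policies (the remark following Proposition \ref{prop:wz_ff} already records this), so their pointwise minimum is concave and the max--min is a concave program.

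The key device is to replace the finite $\min_\mathcal{S}$ by the minimization of $\sum_\mathcal{S}\lambda_\mathcal{S}R_\mathcal{S}$ over the probability simplex in $\lambda$, and to invoke a minimax theorem (concave in the policies, linear in $\lambda$, over a compact convex simplex; the penalty on $q_i$ lets us restrict to a compact policy set without loss) to exchange the order of $\max$ and $\min$. For fixed weights $\lambda$ the objective collapses to a single expectation $\mathrm{E}_H[\cdots]$, so the outer maximization over $q_1(\cdot),q_2(\cdot)$ decouples across channel realizations into an ordinary pointwise concave maximization over $(q_1,q_2)\in\mathbb{R}_+^2$ for each $H$. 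By the exchangeability of $H_1,H_2$ the optimal $\lambda$ is symmetric on the two singletons, and the single free parameter $\theta$ of the statement arises as the ratio $\lambda_\emptyset/\lambda_{\{1,2\}}$, once the singletons are shown to carry zero dual weight.

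I would then solve the pointwise KKT system. Writing $d_i=1-2^{-q_i}$ and using $\det(I_2+\frac{P}{t}\mathrm{diag}(d_1,d_2)HH^*)=1+\frac{P}{t}(d_1|H_1|^2+d_2|H_2|^2)+\Delta\,d_1d_2$ with $\Delta\defeq\frac{P}{t}\det(HH^*)$, the stationarity condition for $q_i$ is a scalar equation coupling $d_1$ and $d_2$ through the bilinear term $\Delta\,d_1d_2$. Two regimes appear: when compression is light ($q_i$ small, $d_i$ near $0$) the coupling is negligible and the equation decouples, giving the own-channel solution $q_i=\lceil-\log_2(\frac{\theta}{1+\theta}\frac{1+\frac{P}{t}|H_i|^2}{\frac{P}{t}|H_i|^2})\rceil^+$; when compression is heavy the bilinear term is active, and eliminating it turns the pair of conditions into a quadratic whose relevant root is exactly $F_H(\theta)$ of (\ref{eq:explic_F}) (the square root being its discriminant), yielding the coupled expression $q_i=\lceil-\log_2(\frac{\Delta+|H_{3-i}|^2}{\Delta}F_H(\theta))\rceil^+$. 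The threshold separating the regimes is the value of $\theta$ at which the two expressions coincide, namely $\theta=F_H(\frac{\Delta}{\Delta+|H_i|^2})$; the $\lceil\cdot\rceil^+$ operators encode complementary slackness for the $q_i\ge0$ constraints. Closing the dual loop, complementary slackness forces the optimal value to equal both active subset terms, so equating the full-set term with the empty-set term $2(C-\mathrm{E}_H q_1)$ is precisely the implicit equation fixing $\theta$, and this common value is the claimed $R_{CEO}=2(C-\mathrm{E}_H[q_1(H,\theta)])$.

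The main obstacle I anticipate is twofold. First, establishing that at the optimum only the empty and full subsets are binding — i.e. verifying that the two singleton terms hold with slack, so the optimal simplex weight is supported on $\{\emptyset,\{1,2\}\}$ and a single $\theta$ genuinely suffices; without this the solution would carry extra dual parameters and would not reduce to the stated one-parameter form. Second, carrying out the quadratic elimination cleanly enough to recognize $F_H$, and then verifying that the two-case maximizer is the \emph{global} pointwise optimum (monotonicity in $\theta$ and uniqueness of the admissible root) and is a measurable function of $H$, so that all the expectations and the final implicit equation are well defined.
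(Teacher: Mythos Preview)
Your approach is essentially that of the paper, which also reduces to the two-subset equilibrium and then solves a Lagrangian in a single multiplier $\theta$. The parametrization differs only cosmetically: the paper first introduces $q^*\triangleq\mathrm{E}_H[q_i]$ and rewrites the problem as $\max_{q^*}\min_\mathcal{S}\{|\mathcal{S}^C|(C-q^*)+F_\mathcal{S}(q^*)\}$, where $F_\mathcal{S}(q^*)$ is the inner maximum of the log-det subject to $\mathrm{E}_H[q_i]=q^*$; then $\theta$ is the Lagrange multiplier for that moment constraint, which is equivalent to your ratio $\lambda_\emptyset/\lambda_{\{1,2\}}$.

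The first obstacle you flag --- showing the singletons carry no dual weight --- is precisely where the paper spends its one nontrivial step, and it is not left to a KKT verification. The missing ingredient is Hadamard's inequality: $\log_2\det(I_2+\tfrac{P}{t}\mathrm{diag}(d_1,d_2)HH^*)\le\sum_{i=1}^2\log_2(1+\tfrac{P}{t}d_i|H_i|^2)$ gives $F_{\{1,2\}}(q^*)\le F_{\{1\}}(q^*)+F_{\{2\}}(q^*)$, and exchangeability of the ergodic law forces $F_{\{1\}}=F_{\{2\}}$, hence $F_{\{1,2\}}\le 2F_{\{1\}}$. At the crossing point $F_{\{1,2\}}(q^*)=2(C-q^*)$ this yields $(C-q^*)+F_{\{1\}}(q^*)\ge(C-q^*)+\tfrac12 F_{\{1,2\}}(q^*)=2(C-q^*)$, so both singleton terms sit weakly above the active pair and drop out of the minimum. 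With this in hand, your pointwise KKT analysis and the quadratic leading to $F_H(\theta)$ proceed exactly as in the paper; your second obstacle is then routine algebra (the paper simply computes the root and handles the boundary via the positive-part truncation).
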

The proof is relegated to Appendix \ref{app:explicit}.

An intuition into the solution offered by Proposition
\ref{prop:explicit} is by considering both $q_1$ and $q_2$. For
that, assume $|H_1|^2>|H_2|^2$, then
$F_H^{-1}\left(\frac{\Delta}{\Delta+|H_1|^2}\right)<
F_H^{-1}\left(\frac{\Delta}{\Delta+|H_2|^2}\right)$ and
\begin{eqnarray}
\theta\leq F_H^{-1}\left(\frac{\Delta}{\Delta+|H_1|^2}\right):&
\left\{\begin{array}{ccc}q_1&=&-\log_2(\frac{\Delta+|H_2|^2}{\Delta}F_H(\theta))\\
q_2&=&-\log_2(\frac{\Delta+|H_1|^2}{\Delta}F_H(\theta))
\end{array}\right.\\
F_H^{-1}\left(\frac{\Delta}{\Delta+|H_1|^2}\right)<\theta<\frac{P}{t}|H_1|^2
:&
\left\{\begin{array}{ccc}q_1&=&-\log_2\left(\frac{\theta}{1+\theta}\frac{1+\frac{P}{t}|H_1|^2}{\frac{P}{t}|H_1|^2}\right)\\
q_2&=&0
\end{array}\right.\\
\frac{P}{t}|H_1|^2\leq \theta: &
\left\{\begin{array}{ccc}q_1&=&0\\
q_2&=&0.
\end{array}\right.\label{eq:zero}
\end{eqnarray}
This reveals the structure of the optimal solution, which can be
described as a variant of the famous ``water-filling". This is since
as in classic water-filling, depending on the available bandwidth,
the parameter $\theta$ determines how the compression depends on the
channel realizations. When $C$ is very large, $\theta$ is very
small, and fewer channel realizations result with $q_1=q_2=0$
(\ref{eq:zero}). When $q_1=q_2=0$ the scheme does not relay any
information regarding the channel outputs, thus saving bandwidth for
better channel realizations.

This is demonstrated in Figure \ref{fig:explcit}, for $P=7$ dB, and
$2\times2$ system, where the averaged maximum and minimum of $q_1$
and $q_2$, over 1000 channels is depicted, as function of the
Lagrangian $\theta$. It is observed that the average difference
between the two compression parameters $q_1$ and $q_2$ is about 0.4
bits/channel use. Figure \ref{fig:explcit} also draws $q_1$ and
$q_2$ for some specific channel $H$. It is seen that $q_2$ is always
larger than $q_1$, since $|H_1|^2<|H_2|^2$, for the specific
channel. Form $\theta=2$ on, $q_1=0$, which means that no
information is sent from agent $A_1$ to the final destination for
this channel realization, when $\theta\geq 2$.

\begin{figure}
\centering
 \includegraphics[width=3.5in]{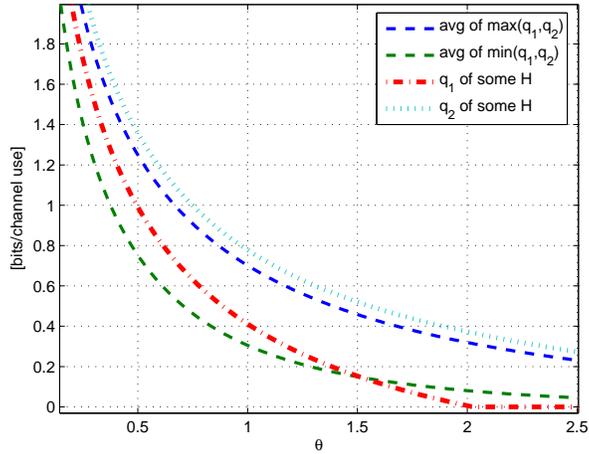}
 \caption{The resulting compression parameters $q_1$ and $q_2$, as function of $\theta$, the Lagrangian for some specific $H$, and also average results when averaging over 1000 channels $H$, $t=2$ and $P=7$
 dB.}
 \label{fig:explcit}
\end{figure}
\subsubsection{An Achievable Rate when $r,t\rightarrow\infty$}
\label{subsec:rtinfty_WZ}For the case where $r/t=\tau$, $C_i=C_t/r$
and $r\rightarrow\infty$, we repeat the suboptimal assignment and
again fix $q_i=q^*=\varepsilon C_t/r$.\\
Next, we define $q_t=rq^*=\varepsilon C_t$. Now we can write for any
$\mathcal{S}$:
\begin{multline}
\log_2\left|I_{|\mathcal{S}|}+\frac{P}{t}(1-2^{q_t/r})H_\mathcal{S}H_\mathcal{S}^*\right|=\sum_{i=1}^{|\mathcal{S}|\vee
t}\log_2(1+P/t(1-2^{q_t/r})\lambda_i)\rightarrow_{r\rightarrow\infty}\tau_\mathcal{S}
r\mathrm{E}\log_2\left(1+P(1-2^{q_t/r})\tau\tau_\mathcal{S}
\nu_\mathcal{S}\right)
\end{multline}
Where $\tau_\mathcal{S}\triangleq \frac{|\mathcal{S}|\vee t}{r}$,
$\nu_\mathcal{S}\triangleq \frac{\lambda}{|\mathcal{S}|\vee t}$ and
$\vee$ denotes min.\\
Now we can exchange the order of the expectation and the limit due
to dominant convergence:
\begin{equation}\label{eq:los_1}
\tau_\mathcal{S}\mathrm{E}\left[\lim_{r\rightarrow\infty}r\log_2\left(1+P(1-2^{q_t/r})\tau\tau_\mathcal{S}\nu_\mathcal{S}\right)\right]=\tau\tau^2_\mathcal{S}q_t
P \mathrm{E}[\nu_\mathcal{S}]=Pq_t \frac{|\mathcal{S}|}{r},
\end{equation}
since
$\mathrm{E}[\nu_\mathcal{S}]=\max\left\{\frac{|\mathcal{S}|}{t},\frac{t}{|\mathcal{S}|}\right\}$.
On the other hand, for that same $\mathcal{S}$:
\begin{equation}\label{eq:los_2}
\sum_{i\in\mathcal{S}}[C_i-q_i]=\frac{|\mathcal{S}|}{r}(C_t-q_t).
\end{equation}
Next, we set $q_t$, such that the right hand sides of
(\ref{eq:los_1}) and (\ref{eq:los_2}) are equal. This results with
the achievable rate of $R_{CEO}=C_t-q_t=C_t\frac{P}{P+1}$. Notice
that this rate is identical to the elementary compression
(\ref{eq:sc_achv}). One would expect that the Wyner-Ziv approach
will improve as $\tau$ is increased, because then the correlations
between the received signals is increased, improving the compression
rates. However, from (\ref{eq:sc_achv}), it is observed that for
small powers, the mutual information is independent of $t$, so that
also the correlations between the received signals $Y_i$ are
independent of $\tau$. In addition, from the discussion below
Correlation \ref{corr:ec_lim}, it is evident that the equivalent
signal to noise ratio when received at the final destination is very
low, so that the inter-agent correlations are also low, diminishing
the effect of the CEO compression.
\subsection{Block Fading Channel}
As in elementary compression, here we again use the rate-vs-outage
figure of merit, and then also give the DMT for the CEO based
scheme.
\subsubsection{Rate vs Outage}
For the non-ergodic block fading channel, equation
(\ref{eq:WZ_achv}), stands for the averaged mutual information.
Since the rate-vs-outage is not concave with respect to $Q$, as in
the fast fading channel, $Q=\frac{P}{t}I$ is no longer optimal
\cite{Telatar99}, and we need to optimize also over $Q$.
\begin{prop}\textit{
An achievable rate $R$ is correctly received over a block fading
channel, with an outage probability of at most $\epsilon$, as long
as the following holds (obtained from (\ref{eq:WZ_achv})):
\begin{equation}\label{eq:WZ_outage}
\Pr\left(\max_{Q\in\mathcal{P},\{0\leq q_i\leq
C_i\}_{i=1}^r}\left\{\min_{\mathcal{S}\subseteq\{1,\dots,r\}}
\left\{
\log_2\det\left(I_{|\mathcal{S}|}+\mathrm{diag}\left(1-2^{-q_i}\right)_{i\in\mathcal{S}}H_\mathcal{S}QH_\mathcal{S}^*\right)+\sum_{i\in\mathcal{S}^C}[C_i-q_i]\right\}\right\}<R\right)\leq
\epsilon
\end{equation}
where the probability is with respect to $H$.}
\end{prop}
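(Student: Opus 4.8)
The plan is to obtain this block-fading characterization directly from the ergodic CEO achievability of Proposition~\ref{prop:wz_ff} (equation~(\ref{eq:WZ_achv})) by \emph{conditioning on the channel realization}. In block fading the matrix $H$ is drawn once and held fixed over all $n$ channel uses, so the expectation $\mathrm{E}_H[\cdot]$ in~(\ref{eq:WZ_achv}) degenerates to evaluation at the realized $H$; formally I would instantiate Proposition~\ref{thm:WZ} with $P_H$ taken to be a point mass. The compress-bin-and-jointly-decode scheme then operates over a memoryless channel with constant gain, and the standard CEO analysis shows that, for a fixed $H$ and a fixed transmit covariance $Q$, every rate strictly below the instantaneous quantity
\begin{equation*}
R_{CEO}(H,Q)\triangleq\max_{\{0\leq q_i\leq C_i\}_{i=1}^r}\min_{\mathcal{S}\subseteq\{1,\dots,r\}}\left\{\log_2\det\left(I_{|\mathcal{S}|}+\mathrm{diag}\left(1-2^{-q_i}\right)_{i\in\mathcal{S}}H_\mathcal{S}QH_\mathcal{S}^*\right)+\sum_{i\in\mathcal{S}^C}[C_i-q_i]\right\}
\end{equation*}
is reliably supported as $n\to\infty$. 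The per-agent constraint $q_i\leq C_i$ appears because the quantization rate together with the binning must fit inside the link capacity $C_i$, and the parameters $q_i$ may be matched to $H$ since both the agents and the destination observe it.

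Next I would convert per-realization reliability into the outage statement. Let $P_e^{(n)}(H)$ denote the error probability conditioned on $H$ for the length-$n$ code; the overall error is $\mathrm{E}_H[P_e^{(n)}(H)]$. Declaring an outage on $\mathcal{O}=\{H: R>\max_{Q\in\mathcal{P}}R_{CEO}(H,Q)\}$ and splitting the expectation gives $\mathrm{E}_H[P_e^{(n)}(H)]\leq \Pr(\mathcal{O})+\int_{\mathcal{O}^C}P_e^{(n)}(H)\,dP_H$. For almost every $H\in\mathcal{O}^C$ the target rate lies strictly below $R_{CEO}(H,\cdot)$, so $P_e^{(n)}(H)\to 0$ pointwise; since $P_e^{(n)}(H)\leq 1$ is dominated by an integrable constant, dominated convergence sends the second term to zero (the boundary $\{R=\max_Q R_{CEO}\}$ being negligible under the continuous fading law). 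Hence the realized outage probability is exactly $\Pr(\mathcal{O})$, which is the probability in~(\ref{eq:WZ_outage}); whenever that probability is at most $\epsilon$, the rate $R$ is delivered with outage at most $\epsilon$. This parallels the elementary-compression outage derivation leading to~(\ref{eq:sq_R_outage}), with the CEO max--min objective replacing the single determinant.

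The step I expect to be delicate concerns the input covariance. Because the transmitter is nomadic and has no CSI, it cannot match $Q$ to $H$, so the honest achievable outage for a \emph{fixed} $Q$ is $\Pr(\max_{q_i}\min_{\mathcal{S}}\{\cdots\}<R)$, which one would then optimize by placing a $\min_Q$ \emph{outside} the probability, exactly as in~(\ref{eq:sq_R_outage}). Reconciling this with the stated expression, where $\max_{Q\in\mathcal{P}}$ sits \emph{inside} the probability, is therefore the main obstacle: the inner maximization tacitly grants the encoder the best channel-matched covariance and hence describes the genie-aided (optimistic) outage rather than the strictly achievable one. I would address this by either (i) interpreting $Q$ as a single channel-independent covariance chosen in advance, noting that by the unitary invariance of $HQH^*$ observed after Proposition~\ref{prop:sc_ff} it suffices to search over diagonal $Q\in\mathcal{P}$, or (ii) justifying the per-realization $Q$ through a genie/feedback argument; the remaining reliability bookkeeping then follows the conditioning-plus-dominated-convergence template above.
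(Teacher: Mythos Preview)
Your approach is exactly the one the paper takes: the proposition is offered in the paper with essentially no proof beyond the parenthetical ``obtained from (\ref{eq:WZ_achv})'' and the sentence preceding it, which says that in block fading the expectation over $H$ drops out and one must optimize over $Q$ because $Q=\frac{P}{t}I_t$ need no longer be optimal. Your conditioning-on-$H$ argument, followed by the split $\mathrm{E}_H[P_e^{(n)}(H)]\leq\Pr(\mathcal{O})+\int_{\mathcal{O}^C}P_e^{(n)}\,dP_H$ and dominated convergence, is the standard way to make that one-line claim rigorous, and it matches the paper's intent.

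Your observation about the placement of $\max_{Q\in\mathcal{P}}$ is well taken and is \emph{not} addressed in the paper. With a nomadic transmitter having no CSI, the honest achievable outage is $\min_{Q\in\mathcal{P}}\Pr\bigl(\max_{q_i}\min_{\mathcal{S}}\{\cdots\}<R\bigr)$, exactly as the paper writes it for elementary compression in (\ref{eq:sq_R_outage}) and for the cut-set bound in Corollary~\ref{corr:cutset}. Putting $\max_Q$ inside the probability, as (\ref{eq:WZ_outage}) does, describes the genie-aided outage and coincides with the form used for the \emph{upper} bound in (\ref{eq:outage_2}). The paper does not comment on this discrepancy; your option (i), reading $Q$ as a single channel-independent diagonal covariance chosen in advance and then minimizing the outage over that choice, is the correct reconciliation and is consistent with how the paper handles $Q$ elsewhere in the block-fading discussion.
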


\subsubsection{Diversity Multiplexing Tradeoff (DMT)} The CEO
approach can get to the upper bound of the DMT, and thus gives the
optimal DMT.
\begin{prop}\label{prop:DMT}\textit{The full Diversity Multiplexing
Tradeoff $d(m)$ is the minimum between the piecewise linear function
of $(k,(r-k)(t-k)$, for $k=0,\dots,\min\{t,r\}$ and
\begin{equation}
t\left(1-\frac{m}{r}\right),
\end{equation} where $0\leq m\leq \min\{r,t\}$. This tradeoff can not be achieved using the elementary compression, only using the CEO approach.}
\end{prop}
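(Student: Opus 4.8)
The plan is to split the claim into three pieces and to recognize that two of them are already in hand. The converse --- that no scheme with $C_i$ as in (\ref{eq:ci_def}) can beat the stated $d(m)$ --- is exactly Proposition \ref{prop:optDMT}, proven there by the cut-set bound; I would simply invoke it. The negative half of the last sentence, namely that elementary compression cannot reach this tradeoff, is Corollary \ref{corr:DMT_EC}, where it was shown that $d_{EC}(m)=0$ for all $m>0$ whenever $r>1$. Hence the only genuine work is to prove \emph{achievability of the converse by the CEO scheme}. Since the converse in Proposition \ref{prop:optDMT} is built cut-by-cut as $\min_{\mathcal{S}} d_{\mathcal{S}}(m)$, where $d_{\mathcal{S}}(m)$ is the Zheng--Tse DMT of an $|\mathcal{S}|\times t$ MIMO channel evaluated at multiplexing gain $\tfrac{|\mathcal{S}|m}{r}$, it suffices to exhibit a choice of the CEO parameters that meets each of these per-cut exponents simultaneously.

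For achievability I would start from the rate-vs-outage expression (\ref{eq:WZ_outage}) and substitute a \emph{specific, suboptimal} choice of $Q$ and $\{q_i\}$; because this lower-bounds the inner $\max$, it upper-bounds the outage probability, which is all that is needed for a DMT lower bound. Concretely I would take the isotropic input $Q=\frac{P}{t}I_t$ (optimality of $Q$ is irrelevant here, only the exponent matters) and fix the compression-noise powers $P_{D_i}$ to be \emph{constants independent of $P$}. By (\ref{eq:q_def}) this makes each $q_i=\log_2\!\left(1+\tfrac{1}{P_{D_i}}\right)$ a $P$-independent constant, and the effective destination observation $U_i=H_iX+(N_i+D_i)$ carries constant effective noise power $1+P_{D_i}$. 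Feasibility $0\le q_i\le C_i$ holds for all large $P$ when $m>0$ (since $C_i\to\infty$), and for $m=0$ one chooses $P_{D_i}$ a large enough constant so that $q_i\le\epsilon=C_i$; in every case the factor $1-2^{-q_i}=\tfrac{1}{1+P_{D_i}}$ is a fixed positive constant. This is precisely the step that separates the CEO scheme from elementary compression: the Wyner--Ziv binning lets each agent pay only the \emph{noise}-description cost $q_i=I(Y_i;U_i|X,H)$, which is bounded, whereas elementary compression must describe the whole $Y_i$ within $C_i$ bits and is thereby forced into a compression noise growing like $P^{1-m/r}$, as exposed in the analysis leading to Corollary \ref{corr:DMT_EC}.

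With these parameters the term inside the minimum of (\ref{eq:WZ_outage}) for a cut $\mathcal{S}$ becomes $\log_2\det\!\left(I_{|\mathcal{S}|}+\tfrac{P}{t}\,\mathrm{diag}(1-2^{-q_i})_{i\in\mathcal{S}}\,H_\mathcal{S}H_\mathcal{S}^*\right)+\sum_{i\in\mathcal{S}^C}[C_i-q_i]$. I would then argue at the exponent level: a constant multiplicative factor on the eigenvalues of $H_\mathcal{S}H_\mathcal{S}^*$ does not change the high-$P$ scaling, and the constant $q_i$ and the additive $\epsilon$ contribute zero multiplexing, so that after a union bound over the finitely many subsets the outage event for each $\mathcal{S}$ reduces, in the $\doteq$ sense, to $\log_2\det\!\left(I_{|\mathcal{S}|}+\tfrac{P}{t}H_\mathcal{S}H_\mathcal{S}^*\right)<\tfrac{|\mathcal{S}|m}{r}\log_2(P)$. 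This is the outage of an $|\mathcal{S}|\times t$ MIMO channel at multiplexing gain $\tfrac{|\mathcal{S}|m}{r}$, whose exponent is $d_\mathcal{S}(m)$ by \cite{TseZheng2003}. Taking $\min_{\mathcal{S}}$ reproduces exactly the converse of Proposition \ref{prop:optDMT}, so achievability and the upper bound coincide and the stated $d(m)$ is the true DMT.

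The hard part, and where I would spend the care, is the interchange between the finite-$P$ feasibility constraint $q_i\le C_i$ and the $P\to\infty$ exponent: one must confirm that holding $P_{D_i}$ constant is admissible uniformly as $P$ grows (trivial for $m>0$, but needing the large-constant choice at $m=0$, which is the cut $|\mathcal{S}|=1$ that pins $d(0)=t$), and that the constant factor $1-2^{-q_i}$ together with the dropped $q_i$ and $\epsilon$ terms genuinely leaves the Zheng--Tse exponent intact rather than merely not increasing it. Once that term-by-term matching with the cut-set expression in (\ref{eq:DMT_out}) is verified, the proposition follows by combining the achievability just described with Proposition \ref{prop:optDMT} and Corollary \ref{corr:DMT_EC}.
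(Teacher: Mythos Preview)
Your proposal is correct and follows essentially the same route as the paper: invoke Proposition~\ref{prop:optDMT} and Corollary~\ref{corr:DMT_EC} for the converse and the negative claim, and for CEO achievability fix the compression parameters $q_i$ to $P$-independent constants so that the factor $1-2^{-q_i}$ is a fixed positive constant not affecting the Zheng--Tse exponent per cut. The paper makes the identical choice more tersely by setting $q_i=0.5\epsilon$ directly (which automatically satisfies $q_i\le C_i$ for every $m\ge 0$), whereas you parameterize via constant $P_{D_i}$ and discuss the $m=0$ feasibility separately; these are equivalent.
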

This Proposition is proved by showing that the upper bound on the
DMT from Proposition \ref{prop:optDMT} is achievable.
\begin{proof}
Consider again $C_i=\frac{m}{r}\log(P)+\epsilon$ and then fix
$q_i=0.5\epsilon$ in equation (\ref{eq:WZ_outage}). Let us write the
diversity here as $d_{CEO}$, where CEO stands for chief executive
officer
\begin{multline}\label{eq:DMT_CEO}
d_{CEO}(m)=-\lim_{P\rightarrow\infty}\frac{\log(\Pr(\mathrm{outage}))}{\log(P)}=\\
\min_{\mathcal{S}\subseteq\{1,\dots,r\}}-\lim_{P\rightarrow\infty}\frac{\min_{Q\in\mathcal{P}}\Pr\left(\log_2\det\left(I_{|\mathcal{S}|}+(1-2^{-0.5\epsilon})H_\mathcal{S}QH_\mathcal{S}^*\right)<\frac{m}{r}|\mathcal{S}|\log(P)-0.5|\mathcal{S}^C|\eps\right)}{\log(P)}.
\end{multline}
The difference between the upper bound in equation
(\ref{eq:DMT_out}) and (\ref{eq:DMT_CEO}) is with the attenuation of
$(1-2^{0.5\eps})$. Since this attenuation diminishes as $P$ gets
large, it is evident that we get the same diversity as the upper
bound.

Next, we show the achievability of the full multiplexing gain, thus
proving the DMT. We get the following achievable rate:
\begin{equation}
R_{CEO}=\bar{m}\log_2(P)+o(\log_2(P)),
\end{equation}
where $\bar{m}=\min\{r,t\}$ and
$\lim_{P\rightarrow\infty}\frac{o(\log_2(P))}{\log_2(P)}=0$. This is
since
\begin{equation}
\min_{\mathcal{S}\subseteq\{1,\dots,r\}}\Big\{|\mathcal{S}|\frac{\bar{m}}{r}\log_2(P)
+\min\{r-|\mathcal{S}|,\bar{m}\}\log_2(P)+o(\log_2(P))\Big\}=\bar{m}\log_2(P)+o(\log_2(P))
\end{equation}
is fulfilled with $\mathcal{S}=\phi$ and
$\mathcal{S}=\{1,\dots,r\}$.
\end{proof}
\subsection{An Achievable Rate For the Case of Multiple Antennas Also At the Agents}
The case of multiple antennas at the agents is different than the
above case, where only a single antenna was used by the agents, in
that now the agents can use more elaborated processing in order to
improve the overall performance. We consider here only ergodic
channel, where the block fading case follows the same line.

The channel can still be described by (\ref{eq:def}), only that now,
$Y_i(k)$ is a vector, taking values from $\mathbb{C}^{[r_i\times
1]}$, $N_i(k)\sim \mathcal{CN}(0,I_{r_i})$, and
$H_i(k)\in\mathbb{C}^{[r_i\times t]}$, again with elements that are
independently and identically distributed, according to the
circular-symmetric complex Gaussian distribution with variance of 1.

The difference between this scheme and the previous one, is that now
each agent can add non-white quantization noise (but still input
independent) to the received vector, where such dependency can
improve the resulting achievable rate, by improving the estimation
at the final destination, through better utilization of the lossless
links.
\begin{prop}\label{prop:multi_antenna}\textit{An achievable rate, over an ergodic channel, with several receiving antennas at each agent, is
\begin{multline}
\label{eq:WZ_achv_MIMO} R_{CEO}=
\max_{\{\Lambda_i(H):\mathbb{C}^{[r\times t]}\rightarrow
\mathbb{B}_i\}_{i=1}^r}\min_{\mathcal{S}\subseteq\{1,\dots,r\}}\\
\Bigg[\mathrm{E}_H
\Bigg\{\sum_{i\in\mathcal{S}^C}[C_i-\log_2|I_{m_i}+\Lambda_i^{-1}|]+\log_2\left|I_{\sum_{i\in\mathcal{S}}
m_i}+\frac{P}{t}\mathrm{diag}\left((I_{m_i}+\Lambda_i)^{-1}_{i\in\mathcal{S}}\right)H_\mathcal{S}H^*_\mathcal{S}\right|\Bigg\}\Bigg],
\end{multline}
where
\begin{equation}
H_\mathcal{S}=
\left(\begin{array}{c} \vdots \\
\Gamma_i u_i\\ \vdots
\end{array}\right)_{i\in\mathcal{S}}
\end{equation}
and
\begin{equation}
\mathbb{B}_i=\{M:\ M\in\mathbb{C}^{m'\times m'},\
m'\leq\min\{r_i,t\}, M\succeq 0\}.
\end{equation}
}
\end{prop}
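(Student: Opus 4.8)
\emph{Proof plan:} The plan is to obtain Proposition \ref{prop:multi_antenna} as the Gaussian vector specialization of the general CEO achievable rate of Proposition \ref{thm:WZ}, exactly paralleling how the scalar formula of Proposition \ref{prop:wz_ff} was derived. The random coding, binning and joint-typicality decoding underlying Proposition \ref{thm:WZ} never used that $Y_i$ is scalar, so
\begin{equation*}
R_{CEO}=\max_{P_{U_i|Y_i,H}}\min_{\mathcal{S}\subseteq\{1,\dots,r\}}\left\{\sum_{i\in\mathcal{S}^C}[C_i-I(Y_i;U_i|X,H)]+I(U_{\mathcal{S}};X|H)\right\}
\end{equation*}
remains a valid achievable rate when each $Y_i$ is a vector. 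It then remains only to pick a good family of test channels $P_{U_i|Y_i,H}$ and to evaluate the two mutual informations in closed form under $X\sim\mathcal{CN}(0,\frac{P}{t}I_t)$.

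First I would pass to a sufficient statistic at each agent. Since $N_i\sim\mathcal{CN}(0,I_{r_i})$ is white and $H_i$ has rank $m_i\le\min\{r_i,t\}$, an information-lossless unitary rotation (the left singular vectors of $H_i$) turns $Y_i$ into an $m_i$-dimensional signal-bearing vector $\tilde Y_i=\Gamma_i u_i X+\tilde N_i$, $\tilde N_i\sim\mathcal{CN}(0,I_{m_i})$, together with coordinates carrying only noise and hence nothing about $X$; the block $\Gamma_i u_i$ is the resulting $m_i\times t$ effective channel that forms the rows of $H_\mathcal{S}$. Spending link rate on the noise-only coordinates is clearly wasteful, so I would quantize only $\tilde Y_i$ by the vector analogue of the scalar rule $U_i=Y_i+D_i$, namely $U_i=\tilde Y_i+D_i$ with independent $D_i\sim\mathcal{CN}(0,\Lambda_i)$, $\Lambda_i\succeq0$ of size $m_i$. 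This correlated, non-white choice is precisely the extra freedom announced before the statement, and it is what the constraint set $\mathbb{B}_i$ parametrizes.

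With this test channel both terms evaluate directly. Conditioned on $(X,H)$ the vector $\tilde Y_i$ has covariance $I_{m_i}$, so $U_i|X,H\sim\mathcal{CN}(\cdot,I_{m_i}+\Lambda_i)$ and, using that $U_i$ depends on $Y_i$ only through $\tilde Y_i$,
\begin{equation*}
I(Y_i;U_i|X,H)=\log_2\det(I_{m_i}+\Lambda_i)-\log_2\det\Lambda_i=\log_2\left|I_{m_i}+\Lambda_i^{-1}\right|,
\end{equation*}
which is the quantity subtracted from $C_i$. For the useful term, whitening each agent's effective noise $I_{m_i}+\Lambda_i$ turns the stacked observations $\{U_i\}_{i\in\mathcal{S}}$ into a standard Gaussian MIMO channel with channel matrix $H_\mathcal{S}$, so that
\begin{equation*}
I(U_{\mathcal{S}};X|H)=\log_2\left|I_{\sum_{i\in\mathcal{S}}m_i}+\frac{P}{t}\,\mathrm{diag}\!\left((I_{m_i}+\Lambda_i)^{-1}\right)_{i\in\mathcal{S}}H_\mathcal{S}H_\mathcal{S}^*\right|.
\end{equation*}
Substituting these into the formula of Proposition \ref{thm:WZ}, averaging over the ergodic fading $H$ and maximizing over $\{\Lambda_i(H)\}$ yields (\ref{eq:WZ_achv_MIMO}). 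The optimality of the isotropic input $Q=\frac{P}{t}I_t$ then follows from the same argument as in the single-antenna case (Appendix \ref{app:eye_opt}), the fading statistics being isotropic.

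I expect the main obstacle to be the sufficient-statistic/dimension-reduction step: one must argue carefully that collapsing $Y_i$ onto its $m_i$-dimensional signal subspace and injecting quantization noise only there is without loss of rate, correctly identify the effective channel $H_\mathcal{S}$ and the per-agent effective noise $I_{m_i}+\Lambda_i$, and verify that the resulting $U_i$ still respects the Markov factorization $P_{U_i|H,Y_i}$ required by Proposition \ref{thm:WZ}. Everything after that is the routine vector-Gaussian bookkeeping already exercised in Proposition \ref{prop:wz_ff}.
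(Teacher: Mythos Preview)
Your proposal is correct and follows essentially the same route as the paper: the paper also reduces each agent's observation via the SVD $H_i=v_i\Gamma_i u_i$, keeps only the $m_i=\mathrm{rank}(\Gamma_i)$ signal-bearing coordinates of $v_i^*Y_i$, adopts the additive Gaussian test channel $U_i=v_i^*Y_i+D_i$ with $D_i\sim\mathcal{CN}(0,\Lambda_i)$, and then invokes the general CEO rate of Proposition~\ref{thm:WZ} together with the isotropic-input argument of Appendix~\ref{app:eye_opt}. Your write-up is in fact more explicit than the paper's in computing $I(Y_i;U_i|X,H)$ and $I(U_\mathcal{S};X|H)$, but the underlying scheme and the logical skeleton are the same.
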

To achieve this rate, each agent performs singular value
decomposition of $H_i=v_i\Gamma_iu_i$, so that
$v_i\in\mathbb{C}^{[r_i\times r_i]}$ and $u_i\in\mathbb{C}^{[t\times
t]}$ are unitary matrices, for calculating $v_i^*Y_i$. Then each
agent looks for $\boldsymbol{U}_i^n$ which is jointly typical with
$(\boldsymbol{v}_i^*)\boldsymbol{Y}_i$, when $U_i$ and $v_iY_i$ are
distributed as
\begin{equation}
U_i=v_i^*Y_i+D_i.
\end{equation}
Here $D_i$ is random vector, independent with $Y_i$, distributed as
$\mathcal{NC}(0,\Lambda_{i})$. Define $m_i=\mathrm{rank}(\Gamma_i)$
and redefine the matrix $\Gamma_i$ to include only the non-zero
elements in $\Gamma_i$. The matrix
$\Lambda_{i}\in\mathbb{C}^{[m_i\times m_i]}$ represents $m_i$ random
variables, like in the previous section, only here it is a vector
instead of a scalar.

Note that $Q=\frac{P}{t}I_t$ is optimal in (\ref{eq:WZ_achv_MIMO})
as in (\ref{eq:WZ_achv}). By assigning $r_i=1$, $\Lambda_i=P_{D_i}$
and noticing that $\Gamma_i u_i = H_i$, we see that indeed
(\ref{eq:WZ_achv_MIMO}) coincides with (\ref{eq:WZ_achv}), as
expected.

\section{Upper Bounds}\label{sec:upperbound}
In this section several upper bounds are derived, for both fast
fading and block fading cases.
\subsection{Cut-Set Upper Bound}
The simple cut-set upper bound \cite{CoverThomas}, although quite
intuitive often provides good results. This bound is very general,
and is not limited to the nomadic setting.
\begin{corr}\label{corr:cutset} \textit{Cut-set: Any achievable rate in the system is upper bounded by
the cut-set bound,
\begin{equation}
R\leq\min_{\mathcal{S}\subseteq\{1,\dots,r\}}\left[I(X;Y_{\mathcal{S}}|H)+\sum_{i\in\mathcal{S}^C}
C_i\right].
\end{equation}
For the ergodic fast fading channel, this upper bound equals
\begin{equation}
R\leq\min_{\mathcal{S}\subseteq\{1,\dots,r\}}\left[\mathrm{E}_H\log_2\det\left(I_{|\mathcal{S}|}+\frac{P}{t}H_\mathcal{S}H_\mathcal{S}^*\right)+\sum_{i\in\mathcal{S}^C}
C_i\right].
\end{equation}
Where for the block fading channel, the rate vs outage is limited by
\begin{equation}
\Pr(outage)=\min_{Q\in\mathcal{P}}\Pr\left(R>\min_{\mathcal{S}\subseteq\{1,\dots,r\}}\left[\log_2\det\left(I_{|\mathcal{S}|}+H_\mathcal{S}QH_\mathcal{S}^*\right)+\sum_{i\in\mathcal{S}^C}
C_i\right]\right).
\end{equation} }
\end{corr}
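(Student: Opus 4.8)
\emph{Proof proposal.} The first inequality is an instance of the general cut-set bound \cite{CoverThomas}; the plan is to exhibit the relevant cut, reduce it to a single-letter mutual information, and then specialize to the Gaussian input. For a fixed $\mathcal{S}\subseteq\{1,\dots,r\}$, the cut I have in mind places the transmitter $S$ together with the agents $\{A_i:i\in\mathcal{S}^C\}$ on the source side, and the destination $D$ together with the agents $\{A_i:i\in\mathcal{S}\}$ on the destination side. The only links crossing this cut are the wireless channels feeding $Y_{\mathcal{S}}$ and the $|\mathcal{S}^C|$ noiseless bit-pipes of capacities $\{C_i\}_{i\in\mathcal{S}^C}$, which is exactly what the two terms in the bound account for; intersecting over all $\mathcal{S}$ gives the minimum.

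To turn this into a rigorous bound I would start from Fano's inequality, $nR\le I(M;\hat M)+n\eps_n$ with $\eps_n\to 0$. Because the decoder output $\hat M$ is a deterministic function of the forwarded link messages $B^r$ and the CSI $\boldsymbol H$, and because $M$ is independent of $\boldsymbol H$, this already gives $nR\le I(M;B^r|\boldsymbol H)+n\eps_n$. Splitting along the cut,
\begin{equation}
I(M;B^r|\boldsymbol H)=I(M;B_{\mathcal{S}^C}|\boldsymbol H)+I(M;B_{\mathcal{S}}|B_{\mathcal{S}^C},\boldsymbol H),
\end{equation}
the first term is at most $H(B_{\mathcal{S}^C})\le\sum_{i\in\mathcal{S}^C}nC_i$, since $B_i$ indexes no more than $2^{nC_i}$ values. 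For the second term I would use that agent $i$ forms $B_i$ from $(\boldsymbol Y_i,\boldsymbol H)$ alone, so $B_{\mathcal{S}}$ is a function of $(\boldsymbol Y_{\mathcal{S}},\boldsymbol H)$, and then invoke the Markov chain $M\to\boldsymbol X\to\boldsymbol Y_{\mathcal{S}}$ (conditioned on $\boldsymbol H$) to reach $I(\boldsymbol X;\boldsymbol Y_{\mathcal{S}}|\boldsymbol H)$. Finally, since the channel is memoryless given $\boldsymbol H$ and the transmitter is Gaussian, this single-letterizes into $n\,I(X;Y_{\mathcal{S}}|H)$, which yields the general bound after dividing by $n$.

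The two specializations are then immediate. In the ergodic fast-fading case the transmitter has no CSI, so by the usual symmetry-and-concavity argument the optimal Gaussian input is isotropic, $Q=\frac{P}{t}I_t$; evaluating $I(X;Y_{\mathcal{S}}|H)$ for this input gives the per-realization $\log_2\det(I_{|\mathcal{S}|}+\frac{P}{t}H_{\mathcal{S}}H_{\mathcal{S}}^*)$, and ergodicity replaces it by its expectation $\mathrm{E}_H$. In the block-fading case there is no averaging: the cut-set-limited rate is the random variable $\min_{\mathcal{S}}[\log_2\det(I_{|\mathcal{S}|}+H_{\mathcal{S}}QH_{\mathcal{S}}^*)+\sum_{i\in\mathcal{S}^C}C_i]$, an outage is declared whenever it falls below $R$, and optimizing the covariance over $Q\in\mathcal{P}$ minimizes the outage probability, giving the last display.

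The step I expect to be the main obstacle is the second-term bound, namely showing that conditioning on the source-side messages $B_{\mathcal{S}^C}$ does not inflate the wireless-cut term beyond the single-letter $I(X;Y_{\mathcal{S}}|H)$. The clean way around this is to exploit that the outputs are conditionally independent given the input and the CSI, $Y_i\perp Y_j\mid(X,H)$ for $i\neq j$, which forces $I(X;B_{\mathcal{S}^C}|Y_{\mathcal{S}},H)\le I(X;B_{\mathcal{S}^C}|H)$ and thereby absorbs the extra conditioning; this conditional independence is precisely the structural feature of the nomadic model that makes the data-processing step go through.
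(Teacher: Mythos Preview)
Your argument is correct and matches the paper's approach, which simply invokes the cut-set bound of \cite{CoverThomas} and points to the proof of Proposition~\ref{prop:optDMT} without spelling out the Fano-plus-chain-rule details you supply. One minor point: the conditional independence $Y_i\perp Y_j\mid(X,H)$ you lean on in the last paragraph is a property of the independent-noise channel model rather than of the nomadic assumption (the cut-set bound holds whether or not the agents know the codebook), so I would not attribute that step to nomadism.
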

The proof is based on \cite{CoverThomas}, considering also the proof
of Proposition \ref{prop:optDMT}, and is omitted due to its
simplicity.
\subsection{Upper Bounds for Nomadic Transmitter}
The upper bounds here are calculated assuming nomadic transmitter,
who uses circular-symmetric complex Gaussian codebook. Thus they
show what cannot be achieved, no matter what processing is used at
the agents, as long as they are ignorant of the codebook used. In
the following, we first upper bound general channels, and then apply
the bound for ergodic channel and the block fading channel.
\begin{prop}\label{prop:upper_MI}\textit{
The achievable rate for reliable communication is upper bounded by:
\begin{equation}\label{eq:upper_IT}
R\leq
\min_{\mathcal{S}\subseteq\{1,\dots,r\}}\Bigg\{\sum_{i\in\mathcal{S}}
[C_i-q_i]+\frac{1}{n}
I(\boldsymbol{X};V_{\mathcal{S}^C}|\boldsymbol{H})+\frac{1}{n}\Bigg\}.
\end{equation}}
\end{prop}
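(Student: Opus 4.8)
The plan is to run a genie-aided converse that starts from Fano's inequality and reduces the problem to bounding the multi-letter quantity $I(\boldsymbol{X};V^r|\boldsymbol{H})$, where $V_i$ denotes the (at most $nC_i$-bit) message that agent $i$ forwards to the destination. First I would record three structural facts: the message $M$ is independent of $\boldsymbol{H}$ (the transmitter is nomadic and has no CSI), so $H(M|\boldsymbol{H})=nR$; given $\boldsymbol{H}$, the processing forms the Markov chain $M\to\boldsymbol{X}\to\boldsymbol{Y}^r\to V^r$; and, given $(\boldsymbol{X},\boldsymbol{H})$, the observations $\boldsymbol{Y}_i=H_i\boldsymbol{X}+N_i$ are mutually independent across $i$ because the noises $N_i$ are independent, so the agents' outputs $V_1,\dots,V_r$ are conditionally independent given $(\boldsymbol{X},\boldsymbol{H})$. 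Since $\hat M$ is a function of $(V^r,\boldsymbol{H})$ and the transmission is reliable, Fano's inequality gives $nR\le I(M;V^r|\boldsymbol{H})+1+n\epsilon_n$ with $\epsilon_n\to0$, and the data-processing inequality along the Markov chain upgrades this to $nR\le I(\boldsymbol{X};V^r|\boldsymbol{H})+1+n\epsilon_n$; the surviving constant is what produces the $\tfrac1n$ term after dividing by $n$.

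Next, for an arbitrary subset $\mathcal{S}\subseteq\{1,\dots,r\}$ I would split by the chain rule,
\[
I(\boldsymbol{X};V^r|\boldsymbol{H})=I(\boldsymbol{X};V_{\mathcal{S}^C}|\boldsymbol{H})+I(\boldsymbol{X};V_{\mathcal{S}}|V_{\mathcal{S}^C},\boldsymbol{H}),
\]
keeping the first term untouched (it appears verbatim in the claim) and attacking the cross term. Expanding it as a difference of entropies, bounding the first by subadditivity after dropping the conditioning on $V_{\mathcal{S}^C}$, and evaluating the second by the conditional independence established above, yields
\[
I(\boldsymbol{X};V_{\mathcal{S}}|V_{\mathcal{S}^C},\boldsymbol{H})\le\sum_{i\in\mathcal{S}}\big[H(V_i|\boldsymbol{H})-H(V_i|\boldsymbol{X},\boldsymbol{H})\big]=\sum_{i\in\mathcal{S}}I(\boldsymbol{X};V_i|\boldsymbol{H}).
\]

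The decisive step is to bound each summand by $n(C_i-q_i)$. Here I use $H(V_i|\boldsymbol{H})\le nC_i$ (the link budget) together with $H(V_i|\boldsymbol{X},\boldsymbol{H})\ge I(\boldsymbol{Y}_i;V_i|\boldsymbol{X},\boldsymbol{H})$, and I would \emph{define} $q_i\triangleq\tfrac1n I(\boldsymbol{Y}_i;V_i|\boldsymbol{X},\boldsymbol{H})$, the multi-letter analogue of the single-letter $q_i=I(Y_i;U_i|X,H)$ used on the achievability side. This gives $I(\boldsymbol{X};V_i|\boldsymbol{H})\le n(C_i-q_i)$, hence $I(\boldsymbol{X};V_{\mathcal{S}}|V_{\mathcal{S}^C},\boldsymbol{H})\le\sum_{i\in\mathcal{S}}n(C_i-q_i)$. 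Combining with the Fano/data-processing step, dividing by $n$, absorbing $\epsilon_n$, and minimizing over $\mathcal{S}$ delivers the stated bound.

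The one place that needs care — and the only place where the \emph{nomadic, distributed} structure is actually used — is the conditional independence of $V_1,\dots,V_r$ given $(\boldsymbol{X},\boldsymbol{H})$; without it one only recovers the ordinary cut-set bound of Corollary \ref{corr:cutset}, and the refinement by $-q_i$ is precisely the ``rate spent compressing the noise'' that this independence lets us extract. This argument is insensitive to whether each agent knows the full $\boldsymbol{H}$ or only $\boldsymbol{H}_i$, since in either case $V_i$ is a function of $(\boldsymbol{Y}_i,\boldsymbol{H})$ and the conditional independence survives.
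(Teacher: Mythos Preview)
Your proof is correct and reaches the same bound, but via a different decomposition than the paper's. The paper starts from the link-capacity constraint $\sum_{i\in\mathcal{S}}C_i\ge\tfrac1n I(\boldsymbol{Y}^r;V_{\mathcal{S}}\mid V_{\mathcal{S}^C},\boldsymbol{H})$, inserts $\boldsymbol{X}$ and expands by the chain rule to obtain the \emph{identity} $\sum_{i\in\mathcal{S}}C_i\ge\tfrac1n I(\boldsymbol{X};V^r\mid\boldsymbol{H})-\tfrac1n I(\boldsymbol{X};V_{\mathcal{S}^C}\mid\boldsymbol{H})+\sum_{i\in\mathcal{S}}q_i$, with the only slack coming from the cardinality bound on $H(V_\mathcal{S})$. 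You instead split $I(\boldsymbol{X};V^r\mid\boldsymbol{H})$ directly and bound the cross term $I(\boldsymbol{X};V_\mathcal{S}\mid V_{\mathcal{S}^C},\boldsymbol{H})$ by subadditivity of entropy plus the per-link budget $H(V_i\mid\boldsymbol{H})\le nC_i$; this inserts one extra inequality step but is more elementary and avoids the detour through $I(\boldsymbol{Y}^r;\cdot)$. Both routes rest on the same structural fact you correctly isolate---conditional independence of the $V_i$ given $(\boldsymbol{X},\boldsymbol{H})$---and both use $H(V_i\mid\boldsymbol{X},\boldsymbol{H})=I(\boldsymbol{Y}_i;V_i\mid\boldsymbol{X},\boldsymbol{H})=nq_i$ (your ``$\ge$'' is in fact an equality, since $V_i$ is a function of $(\boldsymbol{Y}_i,\boldsymbol{H})$). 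One small omission: in the nomadic model the decoder also has the random-codebook index $F$, so Fano should be applied to $H(M\mid V^r,F,\boldsymbol{H})$; the paper handles this via $I(M;V^r,F\mid\boldsymbol{H})\le I(M,F;V^r\mid\boldsymbol{H})\le I(\boldsymbol{X};V^r\mid\boldsymbol{H})$, but since your Markov chain $M\to\boldsymbol{X}\to V^r$ (given $\boldsymbol{H}$) remains valid regardless, the conclusion is unaffected.
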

\begin{proof}
We first give an information theoretic upper bound for the
achievable rate, based on \cite{fullpaper}. We define $V_i$ to be
the message sent from agent $A_i$ after receiving $n$ channel
outputs. Notice that $\boldsymbol{H}$ is fully known to all agents
and to the final destination, so they can use it to calculate the
$\{V_i\}$.

For any subset $\mathcal{S}\subseteq\{1,\dots,r\}$, the following
chain of inequalities holds:
\setlength{\arraycolsep}{0.1em}\begin{eqnarray}
\sum_{i\in\mathcal{S}} C_i
&\geq& \frac{1}{n} I(\boldsymbol{Y}^r;V_{\mathcal{S}}|V_{\mathcal{S}^C},\boldsymbol{H}) \\
&=& \frac{1}{n} I(\boldsymbol{Y}^r;V^r|\boldsymbol{H}) - \frac{1}{n} I(\boldsymbol{Y}^r;V_{\mathcal{S}^C}|\boldsymbol{H}) \\
&=& \frac{1}{n} I(\boldsymbol{Y}^r,\boldsymbol{X};V^r|\boldsymbol{H}) - \frac{1}{n} I(\boldsymbol{Y}^r,\boldsymbol{X};V_{\mathcal{S}^C}|\boldsymbol{H})\label{eq:int_x}\\
&=& \frac{1}{n} I(\boldsymbol{X};V^r|\boldsymbol{H}) -\frac{1}{n} I(\boldsymbol{X};V_{\mathcal{S}^C}|\boldsymbol{H})+
\frac{1}{n}I(\boldsymbol{Y}^r;V^r|\boldsymbol{X},\boldsymbol{H})-\frac{1}{n} I(\boldsymbol{Y}_{\mathcal{S}^C};V_{\mathcal{S}^C}|\boldsymbol{X},\boldsymbol{H})\\
&=& \frac{1}{n} I(\boldsymbol{X};V^r|\boldsymbol{H}) -\frac{1}{n} I(\boldsymbol{X};V_{\mathcal{S}^C}|\boldsymbol{H})
+\sum_{i=1}^r q_i-\sum_{i\in\mathcal{S}^C} q_i\label{eq:single_letter}\\
&=& \frac{1}{n} I(\boldsymbol{X};V^r|\boldsymbol{H}) -\frac{1}{n}
I(\boldsymbol{X};V_{\mathcal{S}^C}|\boldsymbol{H})+\sum_{i\in\mathcal{S}}
q_i.
\end{eqnarray}\setlength{\arraycolsep}{5pt}
where (\ref{eq:int_x}) is because $V_i$ is a function of
$\boldsymbol{Y}_i$ and $\boldsymbol{H}$, so we have the Markov chain
$V_i-\{Y_i,H\}-X$ and $q_i$ is defined by $q_i\triangleq
\frac{1}{n}I(\boldsymbol{Y}_i;V_i|\boldsymbol{X},\boldsymbol{H})$.
By changing order we get
\begin{equation}
\frac{1}{n} I(\boldsymbol{X};V^r|\boldsymbol{H})\leq
\sum_{i\in\mathcal{S}} [C_i-q_i] +\frac{1}{n}
I(\boldsymbol{X};V_{\mathcal{S}^C}|\boldsymbol{H}).
\end{equation}
Next we utilize Fano's inequality
\begin{eqnarray}
R&\leq& \frac{1}{n}H(M)=\frac{1}{n}I(M;V^r,F|\boldsymbol{H}) + \frac{1}{n}H(M|F,V^r,\boldsymbol{H})\\
&\leq& \frac{1}{n}I(M;V^r,F|\boldsymbol{H}) + P_e\\
&\leq& \frac{1}{n}I(M,F;V^r|\boldsymbol{H}) + P_e\\
&\leq& \frac{1}{n}I(\boldsymbol{X}(M,F);V^r|\boldsymbol{H}) + P_e\\
&\leq& \sum_{i\in\mathcal{S}} [C_i-q_i] +\frac{1}{n}
I(\boldsymbol{X};V_{\mathcal{S}^C}|\boldsymbol{H})+P_e.
\end{eqnarray}
\end{proof}
The following lemma, which is proved in the Appendix, is required
for obtaining computable upper bounds (single letter upper bound).
\begin{lem}\label{lem:EPI_1} If the transmitter is nomadic, so the agents have
no decoding ability, and the transmitter uses Gaussian codebooks,
the following inequality holds for any
$\mathcal{S}\subseteq\{1,\dots,r\}$:
\begin{equation}\label{eq:upper_1}
\frac{1}{n}I(\boldsymbol{X};V_\mathcal{S}|\boldsymbol{H}=\boldsymbol{h})\leq
m
\log_2\left(\prod_{k=1}^n\left|I_{|\mathcal{S}|}+\Lambda_{\mathcal{S}}(k)\right|^{\frac{1}{nm}}
-\prod_{k=1}^n\left|W_\mathcal{S}(k)\right|^{\frac{1}{nm}}\right)
\end{equation}
where $\Lambda_{\mathcal{S}}(k)\triangleq
H_\mathcal{S}(k)QH_\mathcal{S}^*(k)$,
\begin{equation}\label{eq:W_def}
W_\mathcal{S}(k)\triangleq \left\{\begin{array}{cc}
QH_\mathcal{S}(k)^*\mathrm{diag}\left(2^{-q_i(\boldsymbol{h})}\right)_{i\in\mathcal{S}}H_\mathcal{S}(k)
&
|\mathcal{S}|> t\\
\mathrm{diag}\left(2^{-q_i(\boldsymbol{h})}\right)_{i\in\mathcal{S}}H_\mathcal{S}(k)QH_\mathcal{S}(k)^*
& |\mathcal{S}|\leq t
\end{array}\right.
\end{equation}
$q_i(\boldsymbol{h})\triangleq
\frac{1}{n}I(Y_i^n;V_i|\boldsymbol{X},\boldsymbol{H}=\boldsymbol{h})$
and $m\triangleq \min\{t,|\mathcal{S}|\}$.
\end{lem}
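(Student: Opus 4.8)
The plan is to bound the conditional mutual information $\frac{1}{n}I(\boldsymbol{X};V_\mathcal{S}|\boldsymbol{H}=\boldsymbol{h})$ from above by expressing it as a difference of differential entropies and then applying the vector entropy power inequality (EPI), as advertised in the introduction. First I would write, with $\boldsymbol{H}=\boldsymbol{h}$ fixed throughout,
\begin{equation*}
I(\boldsymbol{X};V_\mathcal{S}|\boldsymbol{H}=\boldsymbol{h})=h(V_\mathcal{S}|\boldsymbol{h})-h(V_\mathcal{S}|\boldsymbol{X},\boldsymbol{h}),
\end{equation*}
and then reshape both terms so the EPI can be applied. The second term is controlled by the definition $q_i(\boldsymbol{h})=\frac{1}{n}I(Y_i^n;V_i|\boldsymbol{X},\boldsymbol{H}=\boldsymbol{h})$: since $V_i$ is obtained from $Y_i$ and conditioning on $\boldsymbol{X},\boldsymbol{h}$ makes the $Y_i$ conditionally Gaussian with a known covariance, the residual entropy $h(V_\mathcal{S}|\boldsymbol{X},\boldsymbol{h})$ is tied, through the $q_i$, to the factors $2^{-q_i(\boldsymbol{h})}$ that appear in the matrix $W_\mathcal{S}$. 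The matrix $W_\mathcal{S}(k)$ in (\ref{eq:W_def}) is exactly the per-symbol ``effective'' covariance that the compression leaves behind, and the case split on $|\mathcal{S}|\gtrless t$ reflects whether one works in the $t$-dimensional transmit space or the $|\mathcal{S}|$-dimensional receive space (so that the relevant matrix is square of the smaller dimension $m=\min\{t,|\mathcal{S}|\}$).

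The heart of the argument is the vector EPI applied across the $n$ channel uses. For the first term, I would upper bound $h(V_\mathcal{S}|\boldsymbol{h})$ using the fact that among all distributions with a given second moment the Gaussian maximizes entropy, giving a bound involving $\prod_{k=1}^n|I_{|\mathcal{S}|}+\Lambda_\mathcal{S}(k)|^{1/(nm)}$ where $\Lambda_\mathcal{S}(k)=H_\mathcal{S}(k)QH_\mathcal{S}^*(k)$ is the signal-plus-noise covariance at time $k$. For the subtracted term I would use the vector EPI in the form that lower-bounds the entropy power of a sum (here, the observation decomposed as signal contribution plus independent compression noise) by the sum of entropy powers; this is what produces the \emph{difference} of the two $n$-th-root geometric-mean products inside the logarithm in (\ref{eq:upper_1}), rather than a ratio. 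The geometric-mean structure $\prod_{k=1}^n(\cdot)^{1/(nm)}$ is precisely the signature of the EPI, which naturally lives in the entropy-power (exponential-of-entropy) domain and combines the $n$ time instants multiplicatively after the $\frac1n$ normalization and the dimensional normalization by $m$.

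The main obstacle I expect is twofold. First, correctly setting up the EPI so that the additive decomposition it requires is valid: the compression noise $V_i$ is not literally $Y_i$ plus independent noise in the forward direction, so I would need to invoke the CEO-style test-channel representation (the same $U_i=Y_i+D_i$ device used in Section \ref{sec:simple}, with $D_i$ independent of the signal) to legitimately write $V_\mathcal{S}$ as an independent-noise-corrupted version of the transmit signal, and to handle the fact that the true $V_i$ need not be Gaussian. Second, the dimensional bookkeeping in the case split (\ref{eq:W_def}): I must verify that whether $|\mathcal{S}|>t$ or $|\mathcal{S}|\le t$, the correct $m\times m$ square matrix is being normalized, using the identity $|I+AB|=|I+BA|$ to move between the $QH_\mathcal{S}^*(\cdot)H_\mathcal{S}$ form and the $H_\mathcal{S}QH_\mathcal{S}^*$ form so that both products in (\ref{eq:upper_1}) live in the same dimension $m$ and the EPI, which is stated for a fixed ambient dimension, applies cleanly. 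Once these two points are handled, collecting the two entropy bounds and normalizing by $\frac1n$ and $m$ should yield (\ref{eq:upper_1}) directly.
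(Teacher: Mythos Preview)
Your decomposition $I(\boldsymbol{X};V_\mathcal{S}|\boldsymbol{h})=h(V_\mathcal{S}|\boldsymbol{h})-h(V_\mathcal{S}|\boldsymbol{X},\boldsymbol{h})$ cannot get off the ground: $V_\mathcal{S}$ is the vector of messages the agents send over finite-rate links, i.e.\ a \emph{discrete} random variable, so it has no differential entropy and the EPI does not apply to it. Relatedly, your plan to ``invoke the CEO-style test-channel representation $U_i=Y_i+D_i$'' is an achievability device; here we are proving a converse, and $V_i$ is an \emph{arbitrary} (possibly deterministic) function of $(\boldsymbol{Y}_i,\boldsymbol{h})$, so you are not allowed to endow it with any additive-noise structure.

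The paper's proof runs the decomposition the other way, $I(\boldsymbol{X};V_\mathcal{S}|\boldsymbol{h})=h(\boldsymbol{X}|\boldsymbol{h})-h(\boldsymbol{X}|V_\mathcal{S},\boldsymbol{h})$ (or, when $|\mathcal{S}|\le t$, the same with $Z=H_\mathcal{S}X$ in place of $X$), and then lower-bounds $h(\boldsymbol{X}|V_\mathcal{S},\boldsymbol{h})$. The key idea you are missing is the MMSE-estimator decomposition: because the codebook is Gaussian, the linear MMSE estimate $\hat{X}=A\,Y_\mathcal{S}$ (respectively $\hat{Z}=A\,Y_\mathcal{S}$) satisfies $X=\hat{X}+\hat{N}$ with $\hat{N}$ \emph{independent} of $\hat{X}$ and of $Y_\mathcal{S}$, hence of $V_\mathcal{S}$. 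This is what licenses the vector EPI on $h(\boldsymbol{X}|V_\mathcal{S},\boldsymbol{h})$, producing the additive structure in the entropy-power domain that becomes the difference inside the logarithm in (\ref{eq:upper_1}). The term $h(\boldsymbol{\hat{X}}|V_\mathcal{S},\boldsymbol{h})$ is then tied back to the $q_i(\boldsymbol{h})$ via $h(\boldsymbol{\hat{X}}|\boldsymbol{X},V_\mathcal{S},\boldsymbol{h})$, which factors over the agents because $\hat{X}=AY_\mathcal{S}$ and, conditioned on $\boldsymbol{X}$, the pairs $(\boldsymbol{Y}_i,V_i)$ are independent across $i$. Your intuition about the $|\mathcal{S}|\gtrless t$ case split is essentially right, but note that for $|\mathcal{S}|>t$ the matrix $A$ is rectangular and the paper needs a \emph{second} application of the (scalar) EPI, after an SVD of $A$, to lower-bound $h(\boldsymbol{\hat{X}}|\boldsymbol{X},V_\mathcal{S},\boldsymbol{h})$; the Sylvester identity $|I+AB|=|I+BA|$ alone does not suffice there.
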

Since $HQH^*$ is distributed the same as $HU^*\Sigma UH^*$, when $U$
is a unitary matrix and $\Sigma$ is diagonal, $Q$ can be restricted
to be diagonal in (\ref{eq:upper_1}). However, unlike the achievable
rate, which is a concave function of $Q$, so that $Q \propto I$ is
optimal, the right hand side of (\ref{eq:upper_1}) is not concave in
$Q$, thus in the sequel, we let $Q$ be such that $Q\in\mathcal{P}$.
Notice that the inequality in (\ref{eq:upper_1}) is tight when the
channel is $H=(1,\dots,1)^T$, which corresponds to the Gaussian CEO
problem with quadratic distortion \cite{Oohama2005}.
\subsubsection{Upper Bound for Fast Fading Channel} We begin the
derivation of an upper bound for the fast fading channel by
evaluating the bound of Lemma \ref{lem:EPI_1} for the fast fading:
\begin{corr}\label{corr:infinity_fading}
In the limit of $n\rightarrow\infty$, due to the ergodic fading
process:
\begin{equation}\label{eq:upper_infty}
\lim_{n\rightarrow\infty}
\frac{1}{n}I(\boldsymbol{X};V_\mathcal{S}|\boldsymbol{H}=\boldsymbol{h})\leq
F(\mathcal{S},q_\mathcal{S})
\end{equation}
where
\begin{equation}\label{eq:F_def}
F(\mathcal{S},q_\mathcal{S})\triangleq
m\log_2\left(2^{\frac{1}{m}\mathrm{E}_{H(1)}\log_2|I+\Lambda_\mathcal{S}|}-2^{\frac{1}{m}\mathrm{E}_{H(1)}\log_2|W_\mathcal{S}|}\right),
\end{equation}
and we use the notation $q_i\triangleq q_i(\boldsymbol{h})$ and
$q_\mathcal{S}\triangleq\{q_i\}_{i\in\mathcal{S}}$, and
$\Lambda_\mathcal{S}=\Lambda_\mathcal{S}(1)$,
$W_\mathcal{S}=W_\mathcal{S}(1)$. Consequently,
(\ref{eq:upper_infty}) can be averaged over the channels:
\begin{equation}
\lim_{n\rightarrow\infty}
\frac{1}{n}I(\boldsymbol{X};V_\mathcal{S}|\boldsymbol{H})\leq
F(\mathcal{S},q_\mathcal{S}).
\end{equation}
\end{corr}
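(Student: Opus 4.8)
\emph{Proof proposal.} The plan is to start from the finite-block bound of Lemma \ref{lem:EPI_1} and let $n\to\infty$, recognizing the two products as geometric means to which the strong law of large numbers applies. First I would rewrite each product as an exponential of an empirical average: with the fading coefficients $\{H(k)\}_{k=1}^n$ i.i.d.\ and the parameters $q_i=q_i(\boldsymbol{h})$ held fixed, set
\begin{equation}
\bar{a}_n\triangleq\frac{1}{n}\sum_{k=1}^n\log_2\left|I_{|\mathcal{S}|}+\Lambda_\mathcal{S}(k)\right|,\qquad \bar{b}_n\triangleq\frac{1}{n}\sum_{k=1}^n\log_2\left|W_\mathcal{S}(k)\right|,
\end{equation}
so that $\prod_{k=1}^n|I+\Lambda_\mathcal{S}(k)|^{1/(nm)}=2^{\bar{a}_n/m}$ and $\prod_{k=1}^n|W_\mathcal{S}(k)|^{1/(nm)}=2^{\bar{b}_n/m}$. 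Lemma \ref{lem:EPI_1} then reads $\frac{1}{n}I(\boldsymbol{X};V_\mathcal{S}|\boldsymbol{H}=\boldsymbol{h})\leq m\log_2\left(2^{\bar{a}_n/m}-2^{\bar{b}_n/m}\right)$.

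Next I would invoke the strong law of large numbers. Because $H(k)\sim\mathcal{CN}(0,I)$ are i.i.d.\ across $k$ and $\Lambda_\mathcal{S}(k),W_\mathcal{S}(k)$ depend only on $H(k)$ (the $q_i$ being fixed scalars common to all $k$), the summands $\log_2|I+\Lambda_\mathcal{S}(k)|$ and $\log_2|W_\mathcal{S}(k)|$ are i.i.d.\ sequences. Each has a finite mean: $\log_2|I+\Lambda_\mathcal{S}|\ge 0$ has finite expectation from the log-determinant moments of the Wishart-type matrix $H_\mathcal{S}QH_\mathcal{S}^*$, and $\mathrm{E}_{H(1)}\log_2|W_\mathcal{S}|$ is likewise finite. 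Hence $\bar{a}_n\to\mathrm{E}_{H(1)}\log_2|I+\Lambda_\mathcal{S}|$ and $\bar{b}_n\to\mathrm{E}_{H(1)}\log_2|W_\mathcal{S}|$ almost surely.

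Finally I would pass the limit through the outer nonlinearity. For every $n$ the right-hand side of Lemma \ref{lem:EPI_1} upper bounds the nonnegative quantity $\frac{1}{n}I(\boldsymbol{X};V_\mathcal{S}|\boldsymbol{H}=\boldsymbol{h})$, so its argument satisfies $2^{\bar{a}_n/m}-2^{\bar{b}_n/m}\ge 1$; the limiting argument is therefore $\ge 1>0$, and the map $(x,y)\mapsto m\log_2(2^{x/m}-2^{y/m})$ is continuous there. Combining the almost-sure convergence of $\bar{a}_n,\bar{b}_n$ with this continuity gives
\begin{equation}
m\log_2\left(2^{\bar{a}_n/m}-2^{\bar{b}_n/m}\right)\longrightarrow m\log_2\left(2^{\frac{1}{m}\mathrm{E}_{H(1)}\log_2|I+\Lambda_\mathcal{S}|}-2^{\frac{1}{m}\mathrm{E}_{H(1)}\log_2|W_\mathcal{S}|}\right)=F(\mathcal{S},q_\mathcal{S}),
\end{equation}
which together with the per-$n$ inequality of the Lemma yields (\ref{eq:upper_infty}). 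Averaging this bound over $\boldsymbol{H}$ and exchanging limit and expectation by dominated convergence (the conditional mutual informations being uniformly controlled through the same Gaussian moment estimates) produces the concluding display.

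The main obstacle I expect is precisely the finiteness of $\mathrm{E}_{H(1)}\log_2|W_\mathcal{S}|$ and the strict positivity of the limiting argument of the outer $\log_2$: a near-singular channel realization makes $\log_2|W_\mathcal{S}|$ large and negative, so one must verify that the expected log-determinant of these Gaussian quadratic forms stays bounded away from $-\infty$. Once integrability and positivity are secured, the remaining convergence is the elementary continuity argument above.
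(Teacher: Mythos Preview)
Your proposal is correct and is precisely the argument the paper has in mind: the paper offers no separate proof of this corollary beyond the phrase ``due to the ergodic fading process,'' i.e., pass to the limit in Lemma~\ref{lem:EPI_1} via the law of large numbers for the i.i.d.\ fading sequence, exactly as you do by writing the two products as $2^{\bar a_n/m}$ and $2^{\bar b_n/m}$ and invoking continuity. Your caveats about the integrability of $\log_2|W_\mathcal{S}|$ and the positivity of the limiting argument are the right technical checkpoints and are handled by the standard Gaussian/Wishart log-determinant moments; nothing further is needed.
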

The dependence of $F$ from (\ref{eq:F_def}) on $q_i$, stems from
the definition of $q_i$, as the bandwidth used for the noise
compression, and is essential for the bound, as it is used for
connecting the bandwidth for the signal compression to the
achievable rate. Combining proposition \ref{prop:upper_MI} with
Corollary \ref{corr:infinity_fading} above, we get the following
proposition:
\begin{prop}\label{prop:UB1}\textit{ The achievable rate of a nomadic transmitter, over fast fading channel, is upper bounded
by:
\begin{equation}\label{eq:ub_maxmin}
R\leq \max_{Q\in\mathcal{P},\{0\leq q_i\leq
C_i\}}\left\{\min_{\mathcal{S}\subseteq\{1,\dots,r\}}
\left\{F(\mathcal{S}^C,q_\mathcal{S})
+\sum_{i\in\mathcal{S}}[C_i-q_i]\right\}\right\}.
\end{equation}}
\end{prop}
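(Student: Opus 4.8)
The plan is to assemble the bound from two ingredients already in hand---the information-theoretic cut bound of Proposition \ref{prop:upper_MI} and its single-letter ergodic evaluation in Corollary \ref{corr:infinity_fading}---and then to pass from a bound that depends on the parameters of a particular scheme to one that holds uniformly over all admissible nomadic schemes by maximizing over the free parameters.

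First I would fix an arbitrary cut $\mathcal{S}\subseteq\{1,\dots,r\}$ and invoke the per-cut inequality established inside the proof of Proposition \ref{prop:upper_MI},
\[
R \leq \sum_{i\in\mathcal{S}}[C_i - q_i] + \frac{1}{n}I(\boldsymbol{X};V_{\mathcal{S}^C}|\boldsymbol{H}) + \frac{1}{n}.
\]
Holding $\mathcal{S}$ fixed and letting $n\to\infty$, the residual $\frac{1}{n}$ term (and the Fano error term $P_e$) vanish, while the mutual-information term is controlled by Corollary \ref{corr:infinity_fading} applied with the set $\mathcal{S}^C$, namely $\lim_{n\to\infty}\frac{1}{n}I(\boldsymbol{X};V_{\mathcal{S}^C}|\boldsymbol{H}) \leq F(\mathcal{S}^C, q_{\mathcal{S}^C})$. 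This yields, for every cut $\mathcal{S}$,
\[
R \leq F(\mathcal{S}^C, q_{\mathcal{S}^C}) + \sum_{i\in\mathcal{S}}[C_i - q_i],
\]
and since the left-hand side does not depend on $\mathcal{S}$ I may minimize the right-hand side over all subsets. Working per-$\mathcal{S}$ and minimizing only at the end sidesteps any question of interchanging the limit with the minimum.

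Next I would eliminate the dependence on the specific scheme. The quantities $q_i = \frac{1}{n}I(\boldsymbol{Y}_i;V_i|\boldsymbol{X},\boldsymbol{H})$ are nonnegative, and because $V_i$ is carried by a link of capacity $C_i$ one has $q_i \leq \frac{1}{n}H(V_i) \leq C_i$, so $0\leq q_i\leq C_i$; likewise, by the unitary-invariance argument noted after Lemma \ref{lem:EPI_1}, the Gaussian input covariance may be taken in $\mathcal{P}$. Thus the pair $(Q,\{q_i\})$ realized by any admissible scheme is a feasible point of the maximization, whence the $\min_{\mathcal{S}}$ expression evaluated at these particular values is bounded above by its maximum over all feasible $(Q,\{q_i\})$. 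Combining this with the previous display gives exactly
\[
R \leq \max_{Q\in\mathcal{P},\,\{0\leq q_i\leq C_i\}}\left\{\min_{\mathcal{S}\subseteq\{1,\dots,r\}}\left\{F(\mathcal{S}^C, q_{\mathcal{S}^C}) + \sum_{i\in\mathcal{S}}[C_i - q_i]\right\}\right\},
\]
which is the claim of Proposition \ref{prop:UB1}.

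The genuine analytic work---the entropy-power-inequality estimate---is already encapsulated in Lemma \ref{lem:EPI_1} and its ergodic specialization, both of which I am free to assume; what remains is essentially bookkeeping. The one point deserving care is that the bound is \emph{not} obtained by solving a true minimax: the maximization appears only because the true parameters $(Q,\{q_i\})$ of any scheme, though unknown, are admissible, so evaluating the inner minimum at them and then passing to the supremum preserves the inequality. Verifying the feasibility constraints $0\leq q_i\leq C_i$ from the link-capacity limit, and confirming that $Q$ may be restricted to $\mathcal{P}$, are the only steps where something must actually be checked.
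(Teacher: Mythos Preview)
Your proposal is correct and follows exactly the paper's approach: the paper's proof consists of the single sentence ``Combining proposition \ref{prop:upper_MI} with Corollary \ref{corr:infinity_fading} above, we get the following proposition,'' and you have simply spelled out that combination carefully, including the justification of the feasibility constraints $0\leq q_i\leq C_i$ and $Q\in\mathcal{P}$ that the paper leaves implicit. Your notation $F(\mathcal{S}^C,q_{\mathcal{S}^C})$ is in fact more consistent with the definition in \eqref{eq:F_def} than the paper's own $F(\mathcal{S}^C,q_{\mathcal{S}})$ in \eqref{eq:ub_maxmin}.
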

\begin{rem}\label{rem:symm_fast}
When $C_i=C$ for $i=1,\dots,r$, then the argument which is maximized
over $\{q_i\}_{i=1}^r$ in (\ref{eq:ub_maxmin}), is symmetric in
$\{q_i\}_{i=1}^r$. Since the argument is also concave in
$\{q_i\}_{i=1}^r$, for $C_i=C$, equation (\ref{eq:ub_maxmin}) is
maximized by $q_i=q^*$ for $i=1,\dots,r$. So that for the symmetric
case:
\begin{equation}\label{eq:ub_maxmin_sym}
R\leq \max_{Q\in\mathcal{P},0\leq q^*\leq
C}\min_{\mathcal{S}\subseteq\{1,\dots,r\}}
\left\{F(\mathcal{S}^C,q^*) +|\mathcal{S}|[C-q^*]\right\}.
\end{equation}
\end{rem}
Following remark \ref{rem:symm_fast}, we give a special case where
the upper bound in proposition \ref{prop:UB1} is tight. Notice that
in this case, the optimal compression strategy used by the agents,
is with fixed $q^*=q_i$. This means that the each agent is required
to know only its own $H_i$, and not the other agents $\{H_j\}_{j\neq
i}$. Furthermore, notice that this conclusion is due to the tight
upper bound, and is not trivially obtained from the achievable rate
(\ref{eq:WZ_achv}) alone.
\begin{corr}\label{corr:P_big}
The CEO approach is optimal for infinite transmission power,
$Q=\frac{P}{t}I$, and $C_i=C$, $i=1,\dots,r$
\end{corr}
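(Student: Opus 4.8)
The plan is to show that, under the three stated conditions, the CEO achievable rate of Proposition~\ref{prop:wz_ff} and the upper bound of Proposition~\ref{prop:UB1} coincide as $P\to\nsof$, so the bound is tight and the scheme rate-optimal. First I would reduce both quantities to a common symmetric form. By Remark~\ref{rem:symm_fast} the upper bound (\ref{eq:ub_maxmin}) is maximized by $q_i=q^*$ for all $i$ when $C_i=C$; the achievable objective (\ref{eq:WZ_achv}) is concave and symmetric in $\{q_i\}$ for $C_i=C$, hence is maximized by the same symmetric choice, and by Appendix~\ref{app:eye_opt} its covariance optimum is $Q=\frac{P}{t}I$. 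Relabelling $\mathcal{S}\mapsto\mathcal{S}^C$ in (\ref{eq:ub_maxmin}) puts both problems into the matched form
\begin{equation}
\max_{0\le q^*\le C}\ \min_{\mathcal{S}\subseteq\{1,\dots,r\}}\Big\{\,g(\mathcal{S},q^*)+|\mathcal{S}^C|\,[C-q^*]\,\Big\},
\end{equation}
where $g(\mathcal{S},q^*)=F(\mathcal{S},q^*)$ for the bound and $g(\mathcal{S},q^*)=\mathrm{E}_H\log_2\det\!\big(I_{|\mathcal{S}|}+\frac{P}{t}(1-2^{-q^*})H_\mathcal{S}H_\mathcal{S}^*\big)$ for the achievable rate. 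It then suffices to prove that these two candidates for $g$ agree up to a term vanishing as $P\to\nsof$, for every subset $\mathcal{S}$.

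The core step is the high-SNR evaluation of $F$. With $Q=\frac{P}{t}I$ and $q_i=q^*$, for $|\mathcal{S}|\le t$ one has $\Lambda_\mathcal{S}=\frac{P}{t}H_\mathcal{S}H_\mathcal{S}^*$ (full rank a.s.) and $W_\mathcal{S}=2^{-q^*}\Lambda_\mathcal{S}$, so that $\mathrm{E}_H\log_2|W_\mathcal{S}|=\mathrm{E}_H\log_2|\Lambda_\mathcal{S}|-|\mathcal{S}|q^*$. Since every eigenvalue of $\Lambda_\mathcal{S}$ grows linearly in $P$, the identity term is negligible, $\mathrm{E}_H\log_2|I+\Lambda_\mathcal{S}|=\mathrm{E}_H\log_2|\Lambda_\mathcal{S}|+o(1)$, so the two exponents inside $F$ differ only by $q^*$ up to $o(1)$, the common factor pulls out, and $F$ collapses:
\begin{equation}
F(\mathcal{S},q^*)=|\mathcal{S}|\log_2\!\Big(2^{\frac{1}{|\mathcal{S}|}\mathrm{E}_H\log_2|\Lambda_\mathcal{S}|}\big(1-2^{-q^*}+o(1)\big)\Big)=\mathrm{E}_H\log_2|\Lambda_\mathcal{S}|+|\mathcal{S}|\log_2(1-2^{-q^*})+o(1).
\end{equation}
The achievable determinant has the identical high-SNR limit $\mathrm{E}_H\log_2|\Lambda_\mathcal{S}|+|\mathcal{S}|\log_2(1-2^{-q^*})+o(1)$, obtained by dropping the same negligible identity. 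For $|\mathcal{S}|>t$ the calculation repeats with $m=t$: the $t$ nonzero eigenvalues of $H_\mathcal{S}H_\mathcal{S}^*$ coincide with those of $H_\mathcal{S}^*H_\mathcal{S}$, so $F$ (now built from the $t\times t$ matrix $W_\mathcal{S}=2^{-q^*}\frac{P}{t}H_\mathcal{S}^*H_\mathcal{S}$) and the rank-$t$ achievable determinant both reduce to $t\log_2\tfrac{P}{t}+\mathrm{E}_H\log_2|H_\mathcal{S}^*H_\mathcal{S}|+t\log_2(1-2^{-q^*})+o(1)$. Hence $g$ is the same in both problems, for every $\mathcal{S}$, up to $o(1)$.

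Finally I would push the match through the optimization. Taking $\min_\mathcal{S}$ and then $\max_{q^*}$ preserves the $o(1)$ equality, giving $R_{CEO}=R_{\mathrm{UB}}+o(1)$ when both use $Q=\frac{P}{t}I$; and since the per-subset identity above holds for \emph{every} diagonal $Q\in\mathcal{P}$ scaling with $P$, while $Q=\frac{P}{t}I$ already maximizes the achievable objective, the outer maxima over $Q$ also agree, so the full bound $R_{\mathrm{UB}}=\max_{Q}\{\cdots\}$ is met. Combined with $R_{CEO}\le C\le R_{\mathrm{UB}}$, this forces $R_{CEO}=C+o(1)$, i.e. the CEO scheme is asymptotically optimal. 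The points I expect to work hardest on are the two uniformity issues hidden in the $o(1)$ symbols: that the identity-dropping step survives the expectation despite rare near-singular channels (controlled exactly as in the outage estimates of Section~\ref{subsec:DMT1}), and that the per-subset and per-$Q$ errors are uniform enough to interchange with $\min_\mathcal{S}$ and $\max_Q$. This tightness is, after averaging, precisely the manifestation of the observation that the entropy-power step of Lemma~\ref{lem:EPI_1} is an equality for the rank-one deterministic channel $H=(1,\dots,1)^T$.
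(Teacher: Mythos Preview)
Your proposal is correct and follows essentially the same route as the paper: restrict to symmetric $q_i=q^*$ via concavity/symmetry, expand both $F(\mathcal{S},q^*)$ and the achievable log-determinant at high SNR to $\mathrm{E}_H\log_2|\Lambda_\mathcal{S}|+m\log_2(1-2^{-q^*})+o(1)$, and conclude that the two optimizations coincide in the limit. You are slightly more thorough than the paper in writing out both the $|\mathcal{S}|\le t$ and $|\mathcal{S}|>t$ cases (the paper only details $r\le t$), while your closing discussion of the outer $Q$-maximization is unnecessary since the corollary already fixes $Q=\tfrac{P}{t}I$, and the symbol $C$ in your last line clashes with the link-capacity notation.
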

\label{corr:optimal_P_inf}
Here we take $P\rightarrow\infty$, and fixed $t$ and $r$.\\
\begin{proof}
We show it for $r\leq t$, where the proof for $r>t$ follows the
same lines.\\
\emph{The achievable rate:} Taking $P\rightarrow\infty$ and
optimizing over $q_{CEO}$ (where $q_i=q_{CEO},~i=1,\dots,r$ in
equation (\ref{eq:WZ_achv})) instead of over $\{q_i\}$, results
with:
\begin{equation}\label{eq:achv_infty_power3}
\frac{1}{n}I(\boldsymbol{X};V_{\mathcal{S}}|\boldsymbol{H})=m\log_2(P)+\mathrm{E}_{H}\log_2|\frac{1}{t}H_\mathcal{S}H_\mathcal{S}^*|+m\mathrm{E}_H\log_2\left(1-2^{-q_{CEO}}\right)+o(P),
\end{equation}
where $o(P)\rightarrow 0$ when $P\rightarrow\infty$.\\
\emph{The upper bound:} On the other hand, taking
$P\rightarrow\infty$ equation (\ref{eq:upper_infty}) becomes
\begin{multline}\label{eq:upper_infty_power}
F(\mathcal{S},q_\mathcal{S})=m\log_2\left(2^{\log_2(P)+\frac{1}{m}\mathrm{E}_{H}\log_2|\frac{1}{t}H_\mathcal{S}H_\mathcal{S}^*|}\left(2^{o(P)}-\prod_{i\in\mathcal{S}}2^{-\frac{q_i}{|\mathcal{S}|}}\right)\right)=\\
m\log_2(P)+\mathrm{E}_{H}\log_2|\frac{1}{t}H_\mathcal{S}H_\mathcal{S}^*|+m\log_2\left(2^{o(P)}-\prod_{i\in\mathcal{S}}2^{-\frac{q_i}{|\mathcal{S}|}}\right),
\end{multline}
Since $C_i=C$, equation (\ref{eq:ub_maxmin}) is a concave symmetric
function of $\{q_i\}$, the solution is when all $\{q_i\}$ are
identical, denoted as $q_i=q_{UB}$. So (\ref{eq:upper_infty_power})
becomes
\begin{equation}\label{eq:upper_infty_power2}
F(\mathcal{S},q_{UB})=m\log_2(P)+\mathrm{E}_{H}\log_2|\frac{1}{t}H_\mathcal{S}H_\mathcal{S}^*|+m\log_2\left(2^{o(P)}-2^{-q_{UB}}\right).
\end{equation}
which is identical, in the limit, to (\ref{eq:achv_infty_power3}).
Substituting (\ref{eq:achv_infty_power3}) in (\ref{eq:WZ_achv}) and
(\ref{eq:upper_infty_power2}) in (\ref{eq:ub_maxmin_sym}) gives the
desired equality.
\end{proof}
For $P\rightarrow\infty$ and $C_i=C$, there is no need to perform
expectation over $H$ of the rightmost element of
(\ref{eq:achv_infty_power3}), since taking $q_{CEO}=q_{UB}$ results
with the optimal rate. This means that for large $P$ and symmetric
links, the compression parameters are independent of $H$, which in
turn means that the $i$-th agent needs to know only its own $H_i$.
Notice that the channel state information (CSI, $H_i$) is still
required at $i^{th}$ agent, for the determination of the codebook of
$U$ (see \cite{fullpaper}). This is unlike the classical Gaussian
Wyner Ziv problem, which does
not benefit from side information at the encoder.\\
The upper bound of proposition \ref{prop:UB1} is not tight because
the upper bound in Lemma \ref{lem:EPI_1} was obtained using the
vector version of the entropy power inequality. This inequality is
known to be tight only for proportional correlation matrices,
which is not our case. Thus the entropy power inequality
introduces a gap that prevents the bound to be tight. This gap can
be mitigated by taking into account smaller matrices. The
following proposition improves upon proposition \ref{prop:UB1} by
optimizing also over sub-matrices of $\mathcal{S}$:
\begin{prop}\label{prop:sub1}\textit{ An achievable rate of a nomadic transmitter, which uses circular-symmetric complex Gaussian signalling with total power $P$, through
agents with bandwidths $\{C_i\}$ is upper bounded by:
\begin{equation}\label{eq:upper2_maxmin}
R_{u}\triangleq \max_{Q\in\mathcal{P} \{0\leq q_i\leq C_i\}_{i=1}^r}
\left\{ \min_{\begin{array}{lll}\cup_{j=1}^r
\mathcal{Z}_j\subseteq\{1,\dots,r\},\\ i\neq j:\ \mathcal{Z}_j\cap
\mathcal{Z}_i=\phi\ \end{array}}\left\{ \sum_{j=1}^r
F(\mathcal{Z}_j,q_{\mathcal{Z}_j})+\sum_{i\in \cap_{j=1}^r
\mathcal{Z}_j^c}[C_i-q_i]\right\}\right\}
\end{equation}
where $F(\mathcal{Z}_j,q_{\mathcal{Z}_j})$ is defined as before, in
equation (\ref{eq:upper_infty}).}
\end{prop}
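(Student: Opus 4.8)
The plan is to refine the bound of Proposition \ref{prop:UB1} by applying the entropy-power inequality of Lemma \ref{lem:EPI_1} not to a single cut set $\mathcal{S}^C$ as a monolithic block, but to a partition of that set into several disjoint pieces $\{\mathcal{Z}_j\}$. The starting point is the same information-theoretic inequality proved in Proposition \ref{prop:upper_MI}, namely that for any cut $\mathcal{S}$,
\begin{equation}
R\leq \sum_{i\in\mathcal{S}}[C_i-q_i]+\frac{1}{n}I(\boldsymbol{X};V_{\mathcal{S}^C}\mid\boldsymbol{H})+\frac{1}{n}.
\end{equation}
In Proposition \ref{prop:UB1} the term $\frac{1}{n}I(\boldsymbol{X};V_{\mathcal{S}^C}\mid\boldsymbol{H})$ was bounded by the single quantity $F(\mathcal{S}^C,q_{\mathcal{S}^C})$. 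The key observation I would exploit is that, since the channel outputs $\boldsymbol{Y}_i$ and hence the agent messages $V_i$ are conditionally independent given $(\boldsymbol{X},\boldsymbol{H})$, the mutual information $I(\boldsymbol{X};V_{\mathcal{S}^C}\mid\boldsymbol{H})$ can be decomposed and sub-additively bounded: for any partition of $\mathcal{S}^C$ into disjoint groups $\mathcal{Z}_1,\dots,\mathcal{Z}_r$ (with $\cup_j\mathcal{Z}_j=\mathcal{S}^C$), one has
\begin{equation}
I(\boldsymbol{X};V_{\mathcal{S}^C}\mid\boldsymbol{H})\leq \sum_{j}I(\boldsymbol{X};V_{\mathcal{Z}_j}\mid\boldsymbol{H}),
\end{equation}
because conditioning on $\boldsymbol{X}$ breaks the dependence among the groups and the information that the union carries about $\boldsymbol{X}$ is no more than the sum of what the pieces carry. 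I would then apply Lemma \ref{lem:EPI_1} and Corollary \ref{corr:infinity_fading} separately to each block $\mathcal{Z}_j$, giving $\frac{1}{n}I(\boldsymbol{X};V_{\mathcal{Z}_j}\mid\boldsymbol{H})\leq F(\mathcal{Z}_j,q_{\mathcal{Z}_j})$ in the ergodic limit $n\to\infty$.

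Combining these, for every choice of the cut and of the partition,
\begin{equation}
R\leq \sum_{j}F(\mathcal{Z}_j,q_{\mathcal{Z}_j})+\sum_{i\in\cap_j\mathcal{Z}_j^c}[C_i-q_i],
\end{equation}
where $\cap_j\mathcal{Z}_j^c$ is precisely the complement $\mathcal{S}$ of the union of the groups. Since this holds for \emph{every} admissible collection of disjoint sets, the achievable rate is bounded by the minimum over all such collections; and since the compression parameters $\{q_i\}$ and the covariance $Q\in\mathcal{P}$ are free design choices in the definition of $V_i$ (subject to $0\leq q_i\leq C_i$), the bound survives after maximizing over them. This yields exactly (\ref{eq:upper2_maxmin}). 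Taking each $\mathcal{Z}_j$ to be either empty or the whole of $\mathcal{S}^C$ recovers Proposition \ref{prop:UB1}, confirming that the new bound is never worse.

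The main obstacle I anticipate is justifying the sub-additive splitting $I(\boldsymbol{X};V_{\mathcal{S}^C}\mid\boldsymbol{H})\leq\sum_j I(\boldsymbol{X};V_{\mathcal{Z}_j}\mid\boldsymbol{H})$ rigorously, and doing so in a way that is tighter than the single-block bound. The inequality itself follows from the conditional independence of the $V_i$ given $(\boldsymbol{X},\boldsymbol{H})$ together with the chain rule, but the gain over Proposition \ref{prop:UB1} comes entirely from the fact that the entropy-power inequality underlying $F$ is loose except for proportional correlation matrices; applying it to smaller blocks $\mathcal{Z}_j$ keeps the involved covariance matrices closer to proportional and hence tightens the bound. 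The delicate point is to verify that the $q_i$ appearing in different blocks can be taken consistently — that is, that the same single-letter parameters $q_i(\boldsymbol{h})$ defined in Lemma \ref{lem:EPI_1} serve simultaneously for every block containing agent $i$ — so that the optimization over $\{q_i\}$ remains a single joint maximization rather than incompatible per-block choices. Once that consistency is secured, the remainder is bookkeeping: assembling the per-block bounds, identifying the leftover index set $\cap_j\mathcal{Z}_j^c$, and passing to the max–min form.
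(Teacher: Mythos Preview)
Your proposal is correct and follows essentially the same argument as the paper: start from Proposition~\ref{prop:upper_MI}, use the Markov chain $V_i - (\boldsymbol{X},\boldsymbol{H}) - V_j$ (for $i\neq j$) to obtain the sub-additive inequality $I(\boldsymbol{X};V_{\mathcal{S}^C}\mid\boldsymbol{H})\leq\sum_j I(\boldsymbol{X};V_{\mathcal{Z}_j}\mid\boldsymbol{H})$ for any disjoint partition, and then bound each block separately via Lemma~\ref{lem:EPI_1} and Corollary~\ref{corr:infinity_fading}. Your worry about the consistency of the $q_i$ across blocks is not an obstacle, since each $q_i$ is defined globally as $\frac{1}{n}I(\boldsymbol{Y}_i;V_i\mid\boldsymbol{X},\boldsymbol{H})$ and agent $i$ lies in at most one $\mathcal{Z}_j$, so there is nothing to reconcile.
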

The proof is very simple, considering for every group of disjoint
subsets ($\{\mathcal{Z}_j\}_{j=1}^r:\
\mathcal{Z}_j\cap\mathcal{Z}_i=\phi$ when $i\neq j$) that cover
$\cup_{j=1}^r\mathcal{Z}_j=\mathcal{S}$ we can write:
\begin{equation}\label{eq:conditional_inequality}
I(\boldsymbol{X};V_\mathcal{S}|\boldsymbol{H})\leq \sum_{j=1}^r
I(\boldsymbol{X};V_{\mathcal{Z}_j}|\boldsymbol{H}),
\end{equation}
which is due to the Markov chain $V_j-X-V_i$ when $i\neq j$, and
then using the upper bound of proposition \ref{prop:UB1} again,
for every element. Since the entropy power inequality, which is
used in proposition \ref{prop:UB1} is not tight (in general) for
the Gaussian vector case, but is tight for the Gaussian scalar
case, this upper bound
can improve upon the latter.\\
For the symmetric case, where $C_i=C$ for $i=[1,\dots,r]$, due to
the concavity of (\ref{eq:upper2_maxmin}), the maximum in
(\ref{eq:upper2_maxmin}) is achieved with $q_i=q^*,\ i=[1,\dots,r]$,
so that (\ref{eq:upper2_maxmin}) is written as:
\begin{equation}\label{eq:upper_symm_1}
R_u= \max_{Q\in\mathcal{P}, 0\leq q^*\leq C}\left\{
\min_{\begin{array}{ll}\sum_{j=1}^r jk_j\leq r,\\ k_j\geq
0\end{array}}\left\{ \sum_{j=1}^r k_j F(j,q^*)+(r-\sum_{j=1}^r
jK_j)(C-q^*)\right\}\right\}
\end{equation}
By solving the above optimization problem for $\{k_j\}_{j=1}^r$ and
then solving for $q^*$ by explicitly writing $F(j,q^*)$ we can
simplify (\ref{eq:upper_symm_1}) to
\begin{corr}\label{corr:fast_upper1}The achievable rate of nomadic transmitter in the symmetric case, $C_i=C,\
i=1,\dots,r$, is upper bounded by
\begin{equation}\label{eq:upper_symm_2}
R_{us}\triangleq rC +r\max_{Q\in\mathcal{P}}\left\{\min_{1\leq
j\leq
r}\left\{\frac{1}{j}\mathrm{E}_{H_j}\log_2|I_j+H_jQH_j^*|-\log_2\left(2^C+2^{\frac{1}{j}\mathrm{E}_{H_j}\log_2|H_jQH_j^*|}\right)\right\}\right\}
\end{equation}
where $H_j$ is the fading coefficients seen by any subset of $j$
agents (since the channel is ergodic, it does not matter which
subset).
\end{corr}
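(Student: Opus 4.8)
\emph{Proof proposal.} The plan is to evaluate (\ref{eq:upper_symm_1}) in two stages, holding $Q$ and $q^*$ fixed throughout and taking $\max_Q$ only at the very end. First I would carry out the inner minimization over the partition counts $\{k_j\}_{j=1}^r$, which is a linear program and collapses to a simple per-agent comparison. Then I would substitute the explicit form of $F(j,q^*)$ coming from (\ref{eq:upper_infty}) and (\ref{eq:W_def}) and optimize the single scalar $q^*\in[0,C]$, which turns out to be a balance (``water-filling''-type) computation.

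For the inner minimization, note that the objective in (\ref{eq:upper_symm_1}) is linear in $\{k_j\}$. Writing $s_j\triangleq jk_j$ for the number of agents placed in subsets of size $j$ and using $F(j,q^*)=j\cdot\bigl(F(j,q^*)/j\bigr)$, the objective becomes
\begin{equation*}
\sum_{j=1}^r s_j\,\frac{F(j,q^*)}{j}+\Bigl(r-\sum_{j=1}^r s_j\Bigr)(C-q^*)=r(C-q^*)+\sum_{j=1}^r s_j\Bigl[\tfrac{1}{j}F(j,q^*)-(C-q^*)\Bigr],
\end{equation*}
to be minimized over $s_j\ge0$ with $\sum_j s_j\le r$. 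This is a linear program whose optimum sits at a vertex: if every bracket is nonnegative the minimizer is $s_j\equiv0$ with value $r(C-q^*)$; otherwise all mass is placed on the index minimizing the bracket and $\sum_j s_j=r$. Hence the inner minimum equals
\begin{equation*}
r\,\min\Bigl\{\,C-q^*,\ \min_{1\le j\le r}\tfrac{1}{j}F(j,q^*)\Bigr\}.
\end{equation*}

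Next I would make $F(j,q^*)$ explicit. For a size-$j$ subset with common parameter $q_i=q^*$, the matrix $W$ in (\ref{eq:W_def}) equals $2^{-q^*}$ times $H_jQH_j^*$ (for $j\le t$; for $j>t$ the companion $t\times t$ form is used and rewritten via Sylvester's identity), so that with $a_j\triangleq\frac1j\mathrm{E}_{H_j}\log_2|I_j+H_jQH_j^*|$ and $b_j\triangleq\frac1j\mathrm{E}_{H_j}\log_2|H_jQH_j^*|$ one gets $\frac1j F(j,q^*)=\log_2\!\bigl(2^{a_j}-2^{b_j-q^*}\bigr)$. Now $C-q^*$ is strictly decreasing in $q^*$ while each $\log_2(2^{a_j}-2^{b_j-q^*})$ is strictly increasing, so $\min_j\tfrac1jF(j,q^*)$ is increasing and the maximum over $q^*$ of the min above is attained at the crossover $C-q^*=\min_j\tfrac1jF(j,q^*)$. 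For a single binding index $j$ the crossover equation $2^{C-q^*}=2^{a_j}-2^{b_j}2^{-q^*}$ solves to $2^{-q^*}=2^{a_j}/(2^C+2^{b_j})$, i.e. $q^*=\log_2(2^C+2^{b_j})-a_j$, giving the value $C-q^*=C+a_j-\log_2(2^C+2^{b_j})$. A short monotonicity argument (since $\tfrac1jF(j,q^*)\ge\min_j\tfrac1jF(j,q^*)$ for every $j$, the crossover of $C-q^*$ with the lower envelope occurs at the largest individual crossover point $q^*_j$, hence at the smallest individual value) shows the binding index is exactly $\mathrm{argmin}_j\{a_j-\log_2(2^C+2^{b_j})\}$. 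Therefore the bracketed quantity equals $C+\min_j\{a_j-\log_2(2^C+2^{b_j})\}$; multiplying by $r$ and reinstating $\max_{Q\in\mathcal{P}}$ yields (\ref{eq:upper_symm_2}).

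The main obstacle I expect is the scalar optimization over $q^*$: one must verify that the maximin genuinely collapses to $\min_j V_j$ with $V_j\triangleq C+a_j-\log_2(2^C+2^{b_j})$, i.e. that the lower-envelope crossover coincides with a single-index crossover at the largest $q^*_j$, and that the optimizing $q^*$ remains inside $[0,C]$ (otherwise the boundary values $q^*=0$ and $q^*=C$ must be checked separately). A secondary technical point is the regime $j>t$, where $|H_jQH_j^*|=0$ makes $b_j=-\infty$; there the $t\times t$ definition of $W$ must be used together with the identity $|I_j+H_jQH_j^*|=|I_t+QH_j^*H_j|$ so that $\tfrac1jF(j,q^*)$ remains well defined and the same crossover algebra carries through.
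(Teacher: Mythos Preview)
Your proposal is correct and matches the paper's approach exactly. The paper offers only a one-line sketch (``By solving the above optimization problem for $\{k_j\}_{j=1}^r$ and then solving for $q^*$ by explicitly writing $F(j,q^*)$ we can simplify (\ref{eq:upper_symm_1}) to \dots''), and you have filled in precisely those two steps: the linear-program reduction over the partition counts, and the monotone-crossover computation in $q^*$ that yields the closed form $C+a_j-\log_2(2^C+2^{b_j})$. Your caveat about the $j>t$ regime is well placed; the paper does not address it either, and the stated formula (with $|H_jQH_j^*|=0$) should be read with the $t\times t$ form of $W_\mathcal{S}$ in mind, as you note.
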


The improvement of the bound from proposition \ref{prop:sub1} over
the bound from proposition \ref{prop:UB1}, is seen in the next
corollary, where the inequality (\ref{eq:conditional_inequality}) is
in fact an equality, and a conclusive result is obtained.
\begin{corr}\label{sub:optimal_inf_t}
The CEO approach is optimal for $Q=\frac{P}{t}I$ and
$t\rightarrow\infty$ while $r$ is fix.
\end{corr}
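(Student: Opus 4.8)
The plan is to prove optimality by showing that, in the regime $t\to\infty$ with $r$ fixed and $Q=\frac{P}{t}I_t$, the refined upper bound of Proposition \ref{prop:sub1} — evaluated with the ``signal'' set split into singletons — converges to exactly the CEO achievable rate of Proposition \ref{prop:wz_ff}. The engine of the argument is that the inter-agent correlations vanish in this limit, so the only inequality separating the two bounds, namely \eqref{eq:conditional_inequality}, becomes an equality, and the per-agent entropy power inequality collapses to the scalar one, which is tight.

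First I would establish the decorrelation. With $Q=\frac{P}{t}I_t$ one has $\Lambda_i=H_iQH_i^*=\frac{P}{t}|H_i|^2=\frac{P}{t}\sum_{j=1}^t|H_{ij}|^2$, so by the law of large numbers $\Lambda_i\to P$ almost surely, while for $i\neq j$ the cross term $\frac{P}{t}H_iH_j^*\to 0$ almost surely because $H_i,H_j$ are independent and zero-mean. Hence $H_\mathcal{S}QH_\mathcal{S}^*\to P\,I_{|\mathcal{S}|}$ for every $\mathcal{S}$, the jointly Gaussian observations at distinct agents become asymptotically independent given $\boldsymbol{X},\boldsymbol{H}$, and consequently the compressed messages $V_i$ become mutually independent. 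This upgrades the Markov-chain inequality \eqref{eq:conditional_inequality} to an asymptotic equality when $\mathcal{S}$ is partitioned into singletons $\mathcal{Z}_j=\{j\}$.

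Next I would evaluate both sides on singletons. For $|\mathcal{Z}_j|=1$, Lemma \ref{lem:EPI_1} reduces to the scalar entropy power inequality, which holds with equality for Gaussian inputs, so $F(\{i\},q_i)$ is exact. Passing $\Lambda_i\to P$ and $W_i=2^{-q_i}\Lambda_i\to 2^{-q_i}P$ inside the expectations in \eqref{eq:F_def} — justified by dominated convergence, using uniform integrability of $\frac{1}{t}|H_i|^2$ — yields
\begin{equation*}
F(\{i\},q_i)\;\longrightarrow\;\log_2\!\left(1+P\,(1-2^{-q_i})\right).
\end{equation*}
On the achievable side the same decorrelation makes the determinant in \eqref{eq:WZ_achv} diagonal in the limit, so for any signal set $\mathcal{S}$,
\begin{equation*}
\log_2\det\!\left(I_{|\mathcal{S}|}+\tfrac{P}{t}\,\mathrm{diag}(1-2^{-q_i})_{i\in\mathcal{S}}H_\mathcal{S}H_\mathcal{S}^*\right)\;\longrightarrow\;\sum_{i\in\mathcal{S}}\log_2\!\left(1+P\,(1-2^{-q_i})\right).
\end{equation*}
Since every partition of $\mathcal{S}$ gives $\sum_j F(\mathcal{Z}_j,\cdot)\geq \frac{1}{n}I(\boldsymbol{X};V_\mathcal{S}|\boldsymbol{H})$ through \eqref{eq:conditional_inequality}, while the singleton partition attains this value in the limit, the inner minimization over partitions in \eqref{eq:upper2_maxmin} converges to precisely the achievable determinant term above. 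The outer $\max$ over $\{q_i\}$ at $Q=\frac{P}{t}I$ being common to both, the upper bound \eqref{eq:upper2_maxmin} and the achievable rate \eqref{eq:WZ_achv} coincide, establishing optimality.

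The main obstacle I expect is rigour in the limiting arguments rather than any new inequality. Concretely, one must justify exchanging the $t\to\infty$ limit with the expectations over $\boldsymbol{H}$ (via dominated convergence and uniform integrability of $\frac{1}{t}|H_i|^2$) and with the finite $\min$--$\max$; and, more delicately, one must argue that the vanishing of the cross-correlations genuinely turns \eqref{eq:conditional_inequality} into an asymptotic equality — that is, that independence of the signal components forces asymptotic independence of the quantized messages $V_i$, so that no rate is lost by the singleton decomposition. Once these convergences are in place, the term-by-term matching of the two optimization problems is immediate.
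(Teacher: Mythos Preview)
Your approach is essentially the paper's: use $\frac{1}{t}H_\mathcal{S}H_\mathcal{S}^*\to I_{|\mathcal{S}|}$ to collapse the achievable determinant into a sum over agents, evaluate $F(\{i\},q_i)\to\log_2(1+P(1-2^{-q_i}))$ on singletons, and observe that both the CEO achievable expression \eqref{eq:WZ_achv} and the refined upper bound \eqref{eq:upper2_maxmin} reduce to the same $\max_{\{q_i\}}\min_\mathcal{S}$ problem. That is exactly what the paper does.

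One clarification, though: the ``main obstacle'' you flag --- proving that \eqref{eq:conditional_inequality} becomes an asymptotic equality, i.e.\ that the $V_i$ become asymptotically independent --- is not needed, and the paper does not attempt it. In \eqref{eq:upper2_maxmin} the partitions sit inside a \emph{minimum}, so choosing the singleton partition for each $\mathcal{S}$ already yields a value that is $\geq$ the true inner minimum; hence the singleton evaluation is automatically an upper bound on $R_u$. Since this upper bound coincides in the limit with the achievable rate $R_{CEO}$, the sandwich $R_{CEO}\leq\text{capacity}\leq R_u\leq$(singleton value)$=R_{CEO}$ closes without ever arguing that singletons are the minimizing partition or that \eqref{eq:conditional_inequality} is tight. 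Dropping that step removes the most delicate part of your plan and leaves only the routine dominated-convergence justifications.
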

The bound (\ref{eq:upper2_maxmin}) is tight, when
$t\rightarrow\infty$ and $Q$ is a multiplicity of the identity
matrix. This is since $HQH^*$ is proportional to the identity
matrix, each agent receives independent reception. This means $r$
parallel links that can be optimized separately. Namely, when
$t\rightarrow\infty$ while $r$ is fixed we get
\begin{equation}\label{eq:inf_t}
\lim_{t\rightarrow\infty}\frac{1}{t}HH^*=I_r.
\end{equation}
\begin{proof}\\
\emph{The achievable rate:} assigning the limit (\ref{eq:inf_t}) in
(\ref{eq:WZ_achv}), we get:
\begin{equation}\label{eq:achv_inf_t}
\lim_{t\rightarrow\infty}R(H)=\max_{\{0\leq q_i\leq C_i
\}}\left\{\min_\mathcal{S}\left\{\sum_{i\in\mathcal{S}^C}[C_i-q_i]+\sum_{i\in\mathcal{S}}\log_2(1+P(1-2^{-q_i}))\right\}\right\}.
\end{equation}
Notice that (\ref{eq:achv_inf_t}) is independent of the channel realization $H$.\\
\emph{The upper bound:} On the other hand, taking $Q=\frac{P}{t}I_t$
and $\frac{1}{t}H_iQH^*_i=1$ for the calculation of $F(\{i\},q_i)$
in (\ref{eq:upper_infty}) gives $\log_2(1+P(1-2^{-q_i}))$. Assigning
back to equation (\ref{eq:upper2_maxmin}), with
$\mathcal{Z}_i=\{i\}$ results with:
\begin{equation}\label{eq:infty_max_min}
\lim_{t\rightarrow\infty}R_u=\max_{\{0\leq q_i\leq C_i\}
}\left\{\min_{\mathcal{S}}\left\{\sum_{i\in\mathcal{S}^C}[C_i-q_i]+\sum_{i\in\mathcal{S}}\log_2(1+P(1-2^{-q_i}))\right\}\right\},
\end{equation}
which equals (\ref{eq:achv_inf_t}) and proves the optimality.
\end{proof}

\subsubsection{Upper Bound for Block Fading Channels} In this
subsection, we will consider the case of $H$ distributed
independently, but once per block, such that $\boldsymbol{H}=H$. The
resulting rate in equation (\ref{eq:WZ_achv}) is actually the
average rate, supported by the scheme. In the sequel of this
subsection, we will upper bound the rate-vs.-outage of the
scheme.\\
For the upper bound, we again use:
\begin{equation}\label{eq:upper_out_basic}
R(\boldsymbol{H}=\boldsymbol{h})\leq
\max_{\{q_i\}_1^r}\min_{\mathcal{S}}\left\{\frac{1}{n}I(V_{\mathcal{S}};\boldsymbol{X}|\boldsymbol{H}=\boldsymbol{h})+\sum_{i\in\mathcal{S}^C}[C_i-q_i]\right\}.
\end{equation}
For
$I(V_{\mathcal{S}};\boldsymbol{X}|\boldsymbol{H}=\boldsymbol{h})$,
we use the upper bound of equation (\ref{eq:upper_1}). Since
$\boldsymbol{H}=H$, we get:
\begin{equation}
G(\mathcal{S},q_\mathcal{S})\triangleq m
\log_2\left(\left|I_{|\mathcal{S}|}+\Lambda_{\mathcal{S}}\right|^{\frac{1}{m}}
-\left|W_\mathcal{S}\right|^{\frac{1}{m}}\right)
\end{equation}
\begin{equation}\label{eq:outage_1}
\frac{1}{n}I(\boldsymbol{X};V_\mathcal{S}|H=\boldsymbol{h})\leq
G(\mathcal{S},q_\mathcal{S})
\end{equation}
where $\Lambda_{\mathcal{S}}= H_\mathcal{S}QH_\mathcal{S}^*$, as
before and $W_\mathcal{S}$ is defined by $W_\mathcal{S}(1)$ from
equation (\ref{eq:W_def}). Combining (\ref{eq:upper_out_basic})
and (\ref{eq:outage_1}) and noticing that $H$ is a random
variable, we get the following upper bound on the outage
$\epsilon$ vs. rate $R$:
\begin{prop}\label{prop:block_fading}\textit{An
upper bound on the achievable rate $R$, for given outage
probability $\epsilon$ is the minimal $R$ which fulfills:
\begin{equation}\label{eq:outage_2}
P\left(\max_{Q\in\mathcal{P},\{0\leq q_i\leq
C_i\}}\left\{\min_{\mathcal{S}\subseteq\{1,\dots,r\}}
\left\{G(\mathcal{S},q_\mathcal{S})
+\sum_{i\in\mathcal{S}^C}[C_i-q_i]\right\}\right\}<R\right)\leq
\epsilon.
\end{equation}}
\end{prop}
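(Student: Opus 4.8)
The plan is to combine the per-realization version of the information-theoretic bound of Proposition~\ref{prop:upper_MI} with the single-letter Gaussian bound of Lemma~\ref{lem:EPI_1}, and then to promote the resulting deterministic rate bound into a probabilistic outage statement by using the fact that in block fading $\boldsymbol{H}=H$ is drawn once and held fixed over the whole codeword. First I would observe that the chain of inequalities establishing Proposition~\ref{prop:upper_MI} is carried out entirely under conditioning on $\boldsymbol{H}$, so it applies verbatim to a fixed realization $\boldsymbol{H}=\boldsymbol{h}$. This yields the per-realization bound (\ref{eq:upper_out_basic}), where the compression parameters $\{q_i\}$ are optimized and the worst-case cut $\min_\mathcal{S}$ is taken. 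I would then substitute the single-letter inequality (\ref{eq:outage_1}), namely $\frac{1}{n}I(\boldsymbol{X};V_\mathcal{S}|H=\boldsymbol{h})\leq G(\mathcal{S},q_\mathcal{S})$, which is exactly Lemma~\ref{lem:EPI_1} specialized to $\boldsymbol{H}=H$: the $n$-fold geometric means over $k$ in (\ref{eq:upper_1}) collapse to a single factor, producing $G$ in place of the product form. This replaces the mutual-information term inside (\ref{eq:upper_out_basic}) by $G(\mathcal{S},q_\mathcal{S})$.

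Next I would reinstate the maximization over the input covariance $Q\in\mathcal{P}$. Unlike the fast-fading case, where concavity in $Q$ forced $Q=\frac{P}{t}I$ to be optimal, here the rate-vs-outage expression is not concave in $Q$, so the transmitter may gain from a non-isotropic covariance, and one must retain the explicit optimization over $Q$. Since the transmitter is nomadic and uses some fixed admissible covariance $Q\in\mathcal{P}$, and since for every such $Q$ the per-realization value is dominated by $\max_{Q\in\mathcal{P}}$, the tightest (still valid) per-realization upper bound is $R(\boldsymbol{h})\leq \max_{Q\in\mathcal{P},\{0\leq q_i\leq C_i\}}\min_{\mathcal{S}}\{G(\mathcal{S},q_\mathcal{S})+\sum_{i\in\mathcal{S}^C}[C_i-q_i]\}$. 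Taking the maximum over $Q$ per realization only loosens the bound, so it remains an upper bound against any scheme.

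Finally, to convert this deterministic inequality into the outage statement (\ref{eq:outage_2}), I would argue by containment of events over the randomness of $H$. For any scheme achieving rate $R$ with outage probability at most $\epsilon$, reliable decoding on the non-outage set forces $R\leq R(\boldsymbol{h})$ there. Hence, whenever the realized channel makes the right-hand max-min expression fall below $R$, the pointwise bound $R(\boldsymbol{h})\leq(\text{max-min})$ gives $R(\boldsymbol{h})<R$, so that channel is in outage; i.e. $\{\max_{Q,q}\min_\mathcal{S}\{\cdots\}<R\}\subseteq\{\mathrm{outage}\}$. Taking probabilities over $H$ yields $\Pr(\max_{Q,q}\min_\mathcal{S}\{\cdots\}<R)\leq\Pr(\mathrm{outage})\leq\epsilon$, which is precisely (\ref{eq:outage_2}); the minimal $R$ satisfying it is the claimed upper bound. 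I expect the event-containment step to be the main obstacle: one must verify that the deterministic bound $R(\boldsymbol{h})\leq(\text{max-min})$ holds \emph{pointwise} in $\boldsymbol{h}$ (so the bad-channel set of the bound is genuinely a subset of the true outage set) and that promoting the per-realization $\max_Q$ inside the probability is legitimate even though the nomadic transmitter fixes $Q$ without CSI, which is what makes the resulting bound conservative but valid.
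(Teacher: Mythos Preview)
Your proposal is correct and follows essentially the same route as the paper: invoke the per-realization bound (\ref{eq:upper_out_basic}) from Proposition~\ref{prop:upper_MI}, specialize Lemma~\ref{lem:EPI_1} to $\boldsymbol{H}=H$ to obtain (\ref{eq:outage_1}) with $G(\mathcal{S},q_\mathcal{S})$ in place of the $n$-fold product, and then pass to the outage statement by randomness of $H$. The paper's argument is terse (``Combining (\ref{eq:upper_out_basic}) and (\ref{eq:outage_1}) and noticing that $H$ is a random variable''), whereas you spell out the event-containment step and the reason for retaining the $\max_{Q\in\mathcal{P}}$; these are exactly the details the paper leaves implicit, not a different approach.
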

Actually, we can improve upon (\ref{eq:outage_2}), the same it was
done in proposition \ref{prop:sub1}:
\begin{equation}\label{eq:outage_3}
P\left(\max_{Q\in\mathcal{P},\{0\leq q_i\leq C_i\}
}\left\{\min_{\begin{array}{lll}\cup_{j=1}^r
\mathcal{Z}_j\subseteq\{1,\dots,r\},\\ i\neq j:\ \mathcal{Z}_j\cap
\mathcal{Z}_i=\phi\ \end{array}} \left\{\sum_{j=1}^r
G(\mathcal{Z}_j,q_{\mathcal{Z}_j}) +\sum_{i\in \cap_{j=1}^r
\mathcal{Z}_j^c}[C_i-q_i]\right\}\right\}<R\right)\leq \epsilon,
\end{equation}
but since the problem is not symmetric (due to the non-ergodic $H$),
we can not further simplify it, as in Corollary
\ref{corr:fast_upper1}. However, the limiting behavior of
(\ref{eq:inf_t}) is true also for the block fading case. Thus the
optimality of the CEO approach when $t\rightarrow\infty$ from
correlation \ref{sub:optimal_inf_t} is assured for the block fading
case as well.
\subsection{Discussion}
When considering the upper bound, several clarifications are in
order. It is known \cite{Oohama2005},\cite{fullpaper} that when no
fading is present, and the transmitter has only a single antenna,
the upper bound is in fact tight. It means that when the sum
$\sum_{i=j}^r Y_j$ is sufficient statistics, the capacity is
established. This situation changes when considering fading
channels. It is evident from \cite{KrithivasanPradhan2007}, that
when $Y_1-Y_2$ is sufficient statistics, using our technique, which
is based on the Berger-Tung CEO, is strictly sub-optimal and lattice
approach can outperform the random binning. Therefore, it is not
expected that ultimate performance is achieved, although the upper
bound proximity to the achievable rate.
\section{Numerical Example}
The achievable rates and the upper bounds for both fast fading and
block fading channels, were calculated for a $2\times 2$ system,
with $C_1=C_2=2$, for several signal to noise ratios ($P$ in dB),
and the results are presented in figure \ref{fig:performance2}. For
the fast fading, both achievable rate and upper bound are obtained
by averaging over 30 blocks, each containing 50 channel realizations
(the expectation expressed by $\mathrm{E}_{H}$ in
(\ref{eq:upper_symm_2}) and (\ref{eq:WZ_achv})). It is seen there
that the upper bound is convex, and that it is close to the
achievable rate, when using CEO compression. For the lower and
higher $P$ the bound is
tighter.\\
For block fading channel, the upper bound from (\ref{eq:outage_3})
is depicted along with the achievable rate (\ref{eq:WZ_outage}), for
outage probability of $\epsilon=10^{-2}$. The probability was
calculated using Monte Carlo simulations over 10000 different
realizations of $H$. It is seen there that the bound is again very
tight for the low SNR region, and the gap becomes higher, with
larger SNR, although it remains rather small, no more than 1 dB
throughout the figure.
\begin{figure}
\centering
 \includegraphics[width=3.5in]{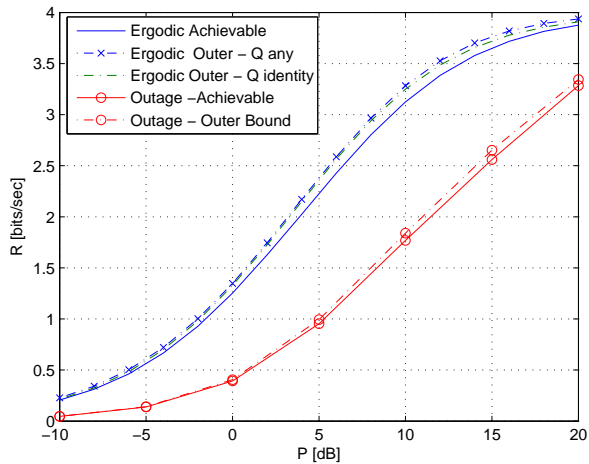}
 \caption{The achievable rates compared to the upper bounds over a $2\times 2$ system with $C=2$: for fast fading Rayleigh channel with upper bound according to an arbitrary $Q$ (Q singular) and to a fix $Q=\frac{P}{t}I_t$ (Q identity), and for block fading Rayleigh channel, with outage probability of $10^{-2}$, where the upper bound was calculated from
(\ref{eq:outage_3}). All as a function of $P$ in {dB}, where the
outage probability and the average over $H$ were done by Monte
Carlo simulations over $H$.}
 \label{fig:performance2}
\end{figure}

\section{Agents with Code Knowledge, and Fully Informed
Transmitter}\label{sec:cogniz} In this section we consider the same
model, as in the previous sections, with two differences. One
difference is that we drop the nomadity assumption, and let the
agents be able to decode messages. The second difference is that we
assume full CSI ($\boldsymbol{H}$) at the transmitter, in a non
casual sense, so that the transmitter and the agents have the same
channel state information.

We get to the following proposition, which is proved in the
Appendix.
\begin{prop}\label{prop:achv_BC}\textit{In the ergodic regime, when
the transmitter has full CSI, and the agents are cognizant of the
codebook used, the rate (\ref{eq:cong_IT_achv}) is achievable
\begin{equation}\label{eq:cong_IT_achv}
R_{cog}
=\max_{\pi}\min_\mathcal{S}\left\{\sum_{i\in\mathcal{S}}[C_i-I(U_i;Y_i|X,W^r,H)]+I(U_{\mathcal{S}^C};X|W^r,H)
+\sum_{i\in\mathcal{S}^C}
[I(W_i;Y_i|H)-I(W_i;W_{\tilde{\mathcal{T}}(\pi,i)}|H)]\right\},
\end{equation}
where $\pi$ is a permutation of $[1,\dots,r]$,
\begin{equation}
\tilde{\mathcal{T}}(\pi,i)\triangleq \{\pi_1,\dots,i\},
\end{equation}
and
\begin{equation}
P_{W^r,X,Y^r,U^r|H}(w^r,x,y^r,u^r|h)=P_{W^r|H}(w^r|h)P_{X|W^r,H}(x|w^r,h)\prod_{i=1}^r
[P_{Y_i|X,H}(y_i|x,h)P_{U_i|Y_i,W_i,H}(u_i|y_i,w_i,h)].
\end{equation}
 }
\end{prop}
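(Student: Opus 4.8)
The plan is to prove achievability by superimposing three standard random-coding constructions and then letting a cut-set minimization emerge, exactly mirroring the structure of the CEO result in Proposition \ref{thm:WZ}. Indeed, since setting $W^r$ to be constant and making the codebook public collapses (\ref{eq:cong_IT_achv}) into the CEO rate (\ref{eq:achv_MI}), I would organize the argument so that the broadcast/dirty-paper layer carrying $W^r$ and the compress-and-bin layer carrying $U^r$ are analyzed by the same machinery, with the former reducing transparently to a constant. The three ingredients are: Gelfand--Pinsker (dirty-paper) broadcast encoding for $W^r$ in the order fixed by the permutation $\pi$; Berger--Tung/CEO compression-with-binning for $U^r$; and a joint-typicality decoder at the destination.

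\emph{Codebook generation and encoding.} First I generate the auxiliary codewords layer by layer following $\pi$: conditioned on $\boldsymbol{H}$, for each $i$ I draw a codebook of $\boldsymbol{W}_i$ sequences according to $\prod_k P_{W_i|H}$ and partition it so that the net message rate carried on layer $i$ equals $I(W_i;Y_i|H)-I(W_i;W_{\tilde{\mathcal{T}}(\pi,i)}|H)$; by the covering lemma, the binning rate $I(W_i;W_{\tilde{\mathcal{T}}(\pi,i)}|H)$ guarantees that a $\boldsymbol{W}_i$ jointly typical with the previously chosen $\boldsymbol{W}_{\tilde{\mathcal{T}}(\pi,i)}$ and $\boldsymbol{H}$ can be found. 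The transmitter, which knows $\boldsymbol{H}$, then emits $\boldsymbol{X}$ through $\prod_k P_{X|W^r,H}$. Independently, for every agent I generate a compression codebook of $\boldsymbol{U}_i$ sequences according to $\prod_k P_{U_i|H}$ and randomly bin it into $2^{nC_i}$ bins, precisely as in the scheme underlying Proposition \ref{thm:WZ}.

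\emph{Agent operation and destination decoding.} Each agent decodes its broadcast codeword $\boldsymbol{W}_i$ from $(\boldsymbol{Y}_i,\boldsymbol{H})$, which succeeds once the dirty-paper interference $\boldsymbol{W}_{\tilde{\mathcal{T}}(\pi,i)}$ is absorbed, yielding the per-layer contribution $I(W_i;Y_i|H)-I(W_i;W_{\tilde{\mathcal{T}}(\pi,i)}|H)$ that appears in (\ref{eq:cong_IT_achv}). Having $\boldsymbol{W}_i$, the agent then selects $\boldsymbol{U}_i$ jointly typical with $(\boldsymbol{Y}_i,\boldsymbol{W}_i,\boldsymbol{H})$ and forwards its bin index; the covering requirement for this step is $I(U_i;Y_i|X,W^r,H)=q_i$, so the net rate delivered by link $i$ is $C_i-q_i$. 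The destination, knowing $\boldsymbol{H}$ and all decoded $\boldsymbol{W}^r$, reconstructs the compression sequences by the Berger--Tung joint-typicality/binning argument and finally decodes $\boldsymbol{X}$; the information extracted about $\boldsymbol{X}$ from the directly recovered side is the term $I(U_{\mathcal{S}^C};X|W^r,H)$, and letting the cut $\mathcal{S}$ range over all subsets and taking the minimum captures the worst-case recovery constraint, in the same way the $\min_\mathcal{S}$ arises in Proposition \ref{thm:WZ}.

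\emph{Main obstacle.} The delicate part is that two covering/binning arguments are superimposed, and I must verify that their error events decouple so that, within each cut $\mathcal{S}$, the dirty-paper rates and the compression budgets simply add. The crux is showing that the destination's simultaneous recovery of the $\boldsymbol{U}_{\mathcal{S}}$ sequences and of $\boldsymbol{X}$ incurs no penalty beyond the single-cut sum in (\ref{eq:cong_IT_achv}); this is controlled by conditional typicality under the joint law $P_{W^r,X,Y^r,U^r|H}$ together with the Markov structure already exploited in (\ref{eq:conditional_inequality}), which decouples the per-agent constraints across a cut. Once these typicality estimates and the covering/packing rates are in hand, Fano's inequality and averaging over the random codebooks and over $\boldsymbol{H}$ give achievability of $R_{cog}$, while the $W^r$-constant specialization reproduces Proposition \ref{thm:WZ} as a consistency check.
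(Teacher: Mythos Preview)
Your overall architecture---Gelfand--Pinsker layering for $W^r$ in the order $\pi$, CEO-style compress-and-bin for $U^r$, joint-typicality decoding at the destination---is exactly the paper's. But there is a real gap in the link-capacity accounting. You bin $\boldsymbol{U}_i$ into $2^{nC_i}$ bins, i.e., you devote the \emph{entire} link $C_i$ to the compression index, and then assert that the destination ``knows all decoded $\boldsymbol{W}^r$'' without saying how it obtains them. In the paper's scheme the agent forwards \emph{both} its decoded broadcast index $\hat M_i$ (costing $R_{W,i}\triangleq I(W_i;Y_i|H)-I(W_i;W_{\tilde{\mathcal{T}}(\pi,i)}|H)$) \emph{and} the compression bin index $s_i$, so the compression codebook is binned into only $2^{n(C_i-R_{W,i})}$ bins. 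The total rate is $R_{CEO}+\sum_{i=1}^r R_{W,i}$, and when this sum is added to the per-cut bound $R_{CEO}<\sum_{i\in\mathcal{S}}[C_i-R_{W,i}-q_i]+I(U_{\mathcal{S}^C};X|W^r,H)$, the $R_{W,i}$ terms cancel for $i\in\mathcal{S}$ and survive only for $i\in\mathcal{S}^C$. That cancellation is precisely why the broadcast contribution in (\ref{eq:cong_IT_achv}) is summed over $\mathcal{S}^C$ alone, and your sketch does not reproduce it.

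Two smaller points. First, the paper generates $\boldsymbol{U}_i$ according to $\prod_k P_{U_i|W_i,H}$, not $\prod_k P_{U_i|H}$, since the agent has already decoded $\boldsymbol{W}_i$ before compressing; without this conditioning the destination cannot even index the correct $U_i$-codebook. Second, your claim that ``the covering requirement for this step is $I(U_i;Y_i|X,W^r,H)=q_i$'' is incorrect: the covering rate needed at the agent is $I(U_i;Y_i|W_i,H)$, whereas $q_i=I(U_i;Y_i|X,W^r,H)$ is the noise-compression residue that appears only in the destination's error analysis, not as the encoder's covering threshold.
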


The transmitter sends messages to the agents via the broadcast
channel \cite{WeingartenSteinbergShamai2006}, by using the dirty
paper coding (DPC) technique \cite{CaireShamai2003}. On top, the
transmitter also sends information to be decoded only at the final
destination, invoking the nomadic techniques of the previous scheme.
We actually extended the results of \cite{fullpaper}, to include
also DPC and a random ergodic channel. In \cite{fullpaper} Corollary
4, the superposition coding combined with the CEO technique, was
used for that setting, when no fading was present, and when the
channel was degraded. The main difference between superposition
coding and DPC is in that superposition coding lets the destined
terminal cancel the interfering transmissions (which are destined to
terminals with weaker channels) and the DPC performs precoding, so
that interference transmissions are canceled at the transmitter
(thus the name dirty paper coding).

Next, for the fading Gaussian channel, the combined final
destination decoding and DPC results with the rate
\begin{multline}\label{eq:DPC_1}
R_{DPC,1} =
\max_{Q,\pi,\{B_i,q_i\}_{i=1}^r}\min_\mathcal{S}\mathrm{E}_H\Bigg\{\sum_{i\in\mathcal{S}}[C_i-q_i]+\log_2\left|I_{|\mathcal{S}^C|}+\mathrm{diag}(1-2^{-q_i})H_{\mathcal{S}^C}\left(Q-\sum_{i=1}^rB_i\right)H_{\mathcal{S}^C}^*\right|\\
+\sum_{i\in\mathcal{S}^C}
\log_2\left(\frac{1+H_i\left(Q-\sum_{j\in\tilde{\mathcal{T}}(\pi,i)}B_j\right)H_i^*}{1+H_i\left(Q-\sum_{j\in\tilde{\mathcal{T}}(\pi,i)}B_j-B_i\right)H_i^*}\right)\Bigg\},
\end{multline}
where the maximization is over
\begin{eqnarray}
q_i&:& \mathbb{C}^{[r\times
t]}\rightarrow \mathbb{R}_+,\\
Q,B_i&:&\mathbb{C}^{[r\times t]}\rightarrow \mathbb{C}^{[t\times
t]},
\end{eqnarray}
such that $Q,B_i\succeq 0$, $Q-\sum_{1}^r B_i\succeq 0$ and
$\mathrm{E}_H[\mathrm{trace}(Q)]\leq P$. The rate in
(\ref{eq:DPC_1}) can be increased by convex hull
\cite{CaireShamai2003}, since in general, this problem is non
concave.

This rate is achieved by using $W_i$, as in \cite{CaireShamai2003},
and then $P_{U_i|Y_i,W_i,H}=P_{U_i|Y_i,H}$ remains the same as in
Proposition \ref{prop:EC_MI}. The situation in the compression
stage, is similar to when using Wyner-Ziv source compression with
decoder side information over Gaussian sources, where supplying the
side information ($W_i$) to the encoder does not improve the rate
distortion.

Although calculating (\ref{eq:DPC_1}) is hard, due to the
non-convexity of the problem, note that a sub-optimal rate can be
calculated for the symmetric case ($C_i=C$), by using the DPC such
that the maximal sum-rate is obtained, and so that
$Q=I_t\frac{P'}{t}-\sum_{j=1}^r B_j\succeq 0$, and letting
$\mathrm{E}_H[P']\leq P$. Since the problem is symmetric and the
channel ergodic, each agent decodes the same rate. The DPC
sum-rate can be obtained by the dual multi-access (concave) MIMO
channel \cite{VishwanathJindalGoldsmith2003}.
\section{Conclusion}\label{sec:conclusion}
In this paper we showed the effectiveness of several compression
techniques for decentralized reception in fast fading and block
fading MIMO channels. We proved that in many cases, the elementary
compression is sufficient to get the full-multiplexing gain. In
addition, we showed the advantages of the CEO approach, which were
evident in an asymptotic analysis and in a finite example. We
presented upper-bounds for both fast fading channel and block fading
channel, which are based on the nomadic characteristic of the
scheme, along with the EPI, and which turned out to be quite tight
even for relatively small $2\times 2$ scheme. Achievable rate for a
non-nomadic scheme was finally derived, combining the decentralized
processing with the DPC.
\section*{Acknowledgment}
This research was supported by the EU 6th framework program via
the NEWCOM network of excellence.

\appendices
\section{Useful definitions and Lemmas}
Let $P_{A_1,A_2,\dots,A_L}(a_1,a_2,\dots,a_L)$ be the probability
function of the random variables $A_1,\dots,A_L$ which take values
in $\mathcal{A}_1,\dots,\mathcal{A}_L$, respectively.\\
\emph{Definitions:}
\begin{enumerate}
\item The marginal probabilities are then defined as
\begin{equation}
P_{A_l}(a_l)=\sum_{a_{\mathcal{L}\setminus
l}\in\mathcal{A}_{\mathcal{L}\setminus l}}
P_{A_1,A_2,\dots,A_L}(a_1,a_2,\dots,a_L)
\end{equation}
($\mathcal{L}$ is the set $\{1,\dots,L\}$).
\item The conditional
probabilities are defined as:
\begin{equation}
P_{A_l|A_\mathcal{S}}(a_l|a_\mathcal{S})=\frac{P_{A_l,A_\mathcal{S}}(a_l,a_\mathcal{S})}{P_{A_\mathcal{S}}(a_\mathcal{S})},
\end{equation}
for some $\mathcal{S}\subseteq\mathcal{L}$ and $l\notin \mathcal{S}$
and $P_{A_\mathcal{S}}(a_\mathcal{S})\neq 0$.
\item As commonly done (see \cite{CoverThomas}, section 13, problem 10),
define the $\epsilon$-typical (strongly conditional typical) set
$\mathbf{T}_\epsilon$ of $\boldsymbol{a}_\mathcal{L}$ as the set for
which
$N(a_\mathcal{S},h|\boldsymbol{a}_\mathcal{S},\boldsymbol{h})=0$ for
any $a_\mathcal{S} \in
\mathcal{A}_\mathcal{S},\boldsymbol{h}\in\mathcal{H}$ such that
$P_{A_\mathcal{S}|H}(a_\mathcal{S}|h)=0$, and also
\begin{multline}\label{eq:typical_set_def}
\mathbf{T}_\epsilon(\boldsymbol{h})\triangleq\
\Big\{\boldsymbol{a}_\mathcal{L}:\ \forall
\mathcal{S}\subseteq\mathcal{L},\ \forall a_\mathcal{S} \in
\mathcal{A}_\mathcal{S},h\in\mathcal{H}\ 
\frac{1}{n}\left|N(a_\mathcal{S},h|\boldsymbol{a}_\mathcal{S},\boldsymbol{h})-P_{A_\mathcal{S}|H}(a_\mathcal{S}|h)N(h|\boldsymbol{h})\right|<
\frac{\epsilon}{ |\mathcal{A}_\mathcal{S}|} \Big\},
\end{multline}
where $N(a_\mathcal{S}|\boldsymbol{a}_\mathcal{S})$ denotes the
counting operator of the number of occurrences of the symbol
$a_\mathcal{S}$ in the vector $\boldsymbol{a}_\mathcal{S}$.
\end{enumerate}
\emph{Lemmas:}
\begin{lem}\label{lem:AEP}
For any $\epsilon>0$, there exist $n^*$ such that for all $n>n^*$
and randomly generated $\boldsymbol{a}_\mathcal{L}$ according to
$\prod P_{A_\mathcal{L}|H}(a_{\mathcal{L}}(k)|h(k))$
\begin{equation}
\Pr\{\boldsymbol{a}_\mathcal{L} \in
\mathbf{T}_\epsilon(\boldsymbol{h})\}\geq 1-\epsilon.
\end{equation}
\end{lem}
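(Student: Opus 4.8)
The plan is to prove the statement via the weak law of large numbers for the empirical counts, applied to each constraint separately and then assembled through a union bound over the finitely many constraints that define $\mathbf{T}_\epsilon(\boldsymbol{h})$. The starting observation is that, with $\boldsymbol{h}$ fixed, the joint count decomposes as $N(a_\mathcal{S},h|\boldsymbol{a}_\mathcal{S},\boldsymbol{h})=\sum_{k:\,h(k)=h}\mathbf{1}[a_\mathcal{S}(k)=a_\mathcal{S}]$, because the factor $\mathbf{1}[h(k)=h]$ is deterministic once $\boldsymbol{h}$ is given. Since $\boldsymbol{a}_\mathcal{L}$ is drawn from the product law $\prod_k P_{A_\mathcal{L}|H}(a_\mathcal{L}(k)|h(k))$, the indicators $Z_k\triangleq\mathbf{1}[a_\mathcal{S}(k)=a_\mathcal{S}]$, taken over the $N(h|\boldsymbol{h})$ positions with $h(k)=h$, are independent Bernoulli variables with $\mathrm{E}[Z_k]=P_{A_\mathcal{S}|H}(a_\mathcal{S}|h)$ (by marginalizing the conditional law). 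Hence $\mathrm{E}[N(a_\mathcal{S},h|\boldsymbol{a}_\mathcal{S},\boldsymbol{h})]=P_{A_\mathcal{S}|H}(a_\mathcal{S}|h)\,N(h|\boldsymbol{h})$, which is precisely the centering used inside the definition of the typical set.

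Next I would bound, for each fixed triple $(\mathcal{S},a_\mathcal{S},h)$, the probability that the normalized count deviates from its mean by at least $\epsilon/|\mathcal{A}_\mathcal{S}|$. Writing the deviation as $\tfrac{1}{n}\big|\sum_{k:\,h(k)=h}(Z_k-\mathrm{E}Z_k)\big|$ and using independence, the variance of the centered sum is $\sum_{k:\,h(k)=h}\mathrm{Var}(Z_k)\le N(h|\boldsymbol{h})/4\le n/4$, so Chebyshev's inequality gives $\Pr\{\text{deviation}\ge\epsilon/|\mathcal{A}_\mathcal{S}|\}\le |\mathcal{A}_\mathcal{S}|^2/(4n\epsilon^2)$, which tends to zero as $n\to\infty$. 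The variance estimate is the one technical point worth highlighting: the $1/n^2$ from the normalization beats the at-most-$n$ summands, and the bound $N(h|\boldsymbol{h})\le n$ makes this uniform in $\boldsymbol{h}$, so the conditioning on a fixed but otherwise arbitrary $\boldsymbol{h}$ causes no difficulty.

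Finally I would dispatch the degenerate coordinates and complete the union bound. When $P_{A_\mathcal{S}|H}(a_\mathcal{S}|h)=0$ one has $\mathrm{E}[Z_k]=0$, hence $Z_k=0$ almost surely and $N(a_\mathcal{S},h|\boldsymbol{a}_\mathcal{S},\boldsymbol{h})=0$ with probability one, verifying the first requirement in the definition of $\mathbf{T}_\epsilon(\boldsymbol{h})$. Because the alphabets $\mathcal{A}_l$ and $\mathcal{H}$ are finite and there are only finitely many subsets $\mathcal{S}\subseteq\mathcal{L}$, the number of constraint triples is a finite constant independent of $n$; a union bound over them shows the probability that \emph{some} constraint is violated is at most a constant times $1/n$, which can be made smaller than $\epsilon$ by choosing $n^*$ large. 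The complement of this bad event is contained in $\mathbf{T}_\epsilon(\boldsymbol{h})$, yielding $\Pr\{\boldsymbol{a}_\mathcal{L}\in\mathbf{T}_\epsilon(\boldsymbol{h})\}\ge 1-\epsilon$ for all $n>n^*$. The only genuine obstacle is bookkeeping, namely making the single-constraint deviation bound uniform over $\boldsymbol{h}$ and checking that the finite union bound does not spoil the constant; no idea beyond Chebyshev's inequality and finiteness of the alphabets is required.
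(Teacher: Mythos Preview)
Your argument is correct and is exactly the standard weak-law/Chebyshev plus union-bound proof of strong typicality. The paper does not supply its own proof of this lemma; it simply defers to the textbooks of Cover--Thomas and Csisz\'ar--K\"orner, whose proofs proceed along the same lines you outline, so there is nothing further to compare.
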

\begin{lem}\label{lem:jointly_typical}
Fix some $\mathcal{S}\subseteq\mathcal{L}$ and probability
\begin{equation}\label{eq:joint_probability}
P_{A_{\mathcal{L}},W_\mathcal{L}|H}(a_{\mathcal{L}},w_\mathcal{L}|h).
\end{equation}
Define the jointly $\epsilon$-typical set
$\mathbf{T}_\epsilon(\boldsymbol{h})$, as before,
by the joint probability (\ref{eq:joint_probability}).\\
Let $\boldsymbol{a}_\mathcal{L}^n$ be generated according to
\begin{equation}\label{eq:true_probability}
\boldsymbol{a}_\mathcal{L}\sim \prod_{k=1}^n\left\{
P_{A_{\mathcal{S}^C}|W_{\mathcal{L}},H}(a_{\mathcal{S}^C}(k)|w_{\mathcal{L}}(k),h(k))\prod_{l\in\mathcal{S}}P_{A_l|W_l,H}(a_{l}(k)|w_{l}(k),h(k))\right\},
\end{equation} where the conditional and marginal probabilities are
calculated from (\ref{eq:joint_probability}) and
$\boldsymbol{w}_{\mathcal{L}}$ is a given vector which was randomly
generated and that belongs to the set
$\mathbf{T}_\epsilon(\boldsymbol{h})$, as defined by
(\ref{eq:typical_set_def}) (that is, there exist
$\boldsymbol{a}_\mathcal{L}$
that are jointly typical with $\boldsymbol{w}_\mathcal{L}$).\\
Then the probability of the vector $\boldsymbol{a}_\mathcal{L}$
which is distributed according to (\ref{eq:true_probability}) to be
in $\mathbf{T}_\epsilon(\boldsymbol{h})$, which is defined according
to (\ref{eq:joint_probability}) is bounded by:
\begin{multline}
\Pr\{(\boldsymbol{a}_{1,\dots,L},\boldsymbol{w}_\mathcal{L})\in\mathbf{T}_\epsilon(\boldsymbol{h})\}\geq 
2^{-n[H(A_{\mathcal{S}^C}|W_{\mathcal{L}},H)-H(A_\mathcal{L}|W_{\mathcal{L}},H)+\sum_{l\in\mathcal{S}}H(A_l|W_l,H)+\epsilon_1]}
\end{multline}
\begin{multline}
\Pr\{(\boldsymbol{a}_{1,\dots,L},\boldsymbol{w}_\mathcal{L})\in\mathbf{T}_\epsilon(\boldsymbol{h})\} \leq 
2^{-n[H(A_{\mathcal{S}^C}|W_{\mathcal{L}},H)-H(A_\mathcal{L}|W_{\mathcal{L}},H)+\sum_{l\in\mathcal{S}}H(A_l|W_l,H)-\epsilon_1]}
\end{multline}
where $\epsilon_1\rightarrow 0$ as $\epsilon\rightarrow 0$.\\
\end{lem}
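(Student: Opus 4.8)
The plan is to evaluate the target probability by the standard two-step strong-typicality estimate: first bound the generation probability of a single jointly typical sequence, then bound the number of such sequences, and compose the two. All quantities are conditioned on the fixed fading $\boldsymbol{h}$, so throughout I would use the conditional strong-typical set $\mathbf{T}_\epsilon(\boldsymbol{h})$ of (\ref{eq:typical_set_def}) together with Lemma \ref{lem:AEP}.

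First I would fix a sequence $\boldsymbol{a}_\mathcal{L}$ that is jointly typical with the given $\boldsymbol{w}_\mathcal{L}$ and compute $-\frac{1}{n}\log$ of its generation law (\ref{eq:true_probability}). The factor over $\mathcal{S}^C$ contributes $-\frac{1}{n}\sum_k \log P_{A_{\mathcal{S}^C}|W_\mathcal{L},H}(a_{\mathcal{S}^C}(k)|w_\mathcal{L}(k),h(k))$, and each factor $l\in\mathcal{S}$ contributes $-\frac{1}{n}\sum_k \log P_{A_l|W_l,H}(a_l(k)|w_l(k),h(k))$. Because $\boldsymbol{a}_\mathcal{L}$ is strongly typical, the empirical frequencies of every subpattern $(a_{\mathcal{S}^C},w_\mathcal{L},h)$ and $(a_l,w_l,h)$ are within $\epsilon$ of the true joint (\ref{eq:joint_probability}); hence these empirical averages converge to the cross-entropies $-\sum P_{A_{\mathcal{S}^C},W_\mathcal{L},H}\log P_{A_{\mathcal{S}^C}|W_\mathcal{L},H}$ and $-\sum P_{A_l,W_l,H}\log P_{A_l|W_l,H}$. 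Since each generating kernel is an exact conditional of (\ref{eq:joint_probability}), each cross-entropy collapses to the matching conditional entropy, so the per-sequence probability is $2^{-n[H(A_{\mathcal{S}^C}|W_\mathcal{L},H)+\sum_{l\in\mathcal{S}}H(A_l|W_l,H)\pm\delta]}$ with $\delta\to 0$ as $\epsilon\to 0$.

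Next I would count the jointly typical $\boldsymbol{a}_\mathcal{L}$. By the conditional-typicality size estimate that follows from Lemma \ref{lem:AEP} and the definition (\ref{eq:typical_set_def}), the number of $\boldsymbol{a}_\mathcal{L}$ that are jointly typical with the fixed $\boldsymbol{w}_\mathcal{L}$ is $2^{n[H(A_\mathcal{L}|W_\mathcal{L},H)\pm\delta']}$, where nonemptiness is guaranteed by the hypothesis that some such $\boldsymbol{a}_\mathcal{L}$ exists. Writing $\Pr\{(\boldsymbol{a}_\mathcal{L},\boldsymbol{w}_\mathcal{L})\in\mathbf{T}_\epsilon(\boldsymbol{h})\}$ as the sum over this set of the per-sequence probabilities and multiplying the count by the per-sequence estimate yields the exponent $-[H(A_{\mathcal{S}^C}|W_\mathcal{L},H)+\sum_{l\in\mathcal{S}}H(A_l|W_l,H)-H(A_\mathcal{L}|W_\mathcal{L},H)]$, which is exactly the exponent of Lemma \ref{lem:jointly_typical}; collecting $\delta,\delta'$ into a single $\epsilon_1\to 0$ then gives both the upper and the lower bound.

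The main obstacle is the mismatched generation law: the coordinates $\{a_l\}_{l\in\mathcal{S}}$ are drawn from the product $\prod_{l\in\mathcal{S}}P_{A_l|W_l,H}$ rather than from the joint conditional $P_{A_\mathcal{S}|A_{\mathcal{S}^C},W_\mathcal{L},H}$, so it is not immediate that the $-\frac{1}{n}\log$-probability converges to $\sum_{l\in\mathcal{S}}H(A_l|W_l,H)$. The resolution, and the crux of the argument, is that strong typicality controls the empirical law of \emph{every} subpattern simultaneously, so the empirical average of $\log P_{A_l|W_l,H}$ evaluated on a typical sequence is governed by the true marginal $P_{A_l,W_l,H}$, and the resulting cross-entropy equals $H(A_l|W_l,H)$ precisely because $P_{A_l|W_l,H}$ is the correct conditional of (\ref{eq:joint_probability}). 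The remaining care is bookkeeping: ensuring the per-sequence exponent and the counting exponent are taken over the identical set $\mathbf{T}_\epsilon(\boldsymbol{h})$ so that they compose, and propagating the conditioning on $\boldsymbol{h}$ through the conditional strong-typicality definitions, both of which are routine once the cross-entropy identity is in hand.
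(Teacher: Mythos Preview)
Your argument is correct and is exactly the standard strong-typicality computation the paper has in mind: the paper does not give its own proof of this lemma but simply refers to \cite{CoverThomas} and \cite{Csiszar_Korner}, where precisely the two-step estimate you outline (per-sequence probability $\approx 2^{-n[H(A_{\mathcal{S}^C}|W_\mathcal{L},H)+\sum_{l\in\mathcal{S}}H(A_l|W_l,H)]}$ from the empirical-measure property of strong typicality, times the conditional-type count $\approx 2^{nH(A_\mathcal{L}|W_\mathcal{L},H)}$) is carried out. Your identification of the cross-entropy collapse as the key step for the mismatched product law is the right emphasis; nothing further is needed.
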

\begin{lem}\label{lem:gen_Markov_lem}Generalized Markov Lemma\\
Let
\begin{equation}
P_{A_{\mathcal{S}},W_\mathcal{S},Y_\mathcal{S}|H}(a_{\mathcal{S}},w_\mathcal{S},y_\mathcal{S}|h)=P_{W_\mathcal{S},Y_\mathcal{S}|H}(w_\mathcal{S},y_\mathcal{S}|h)\prod_{l\in\mathcal{S}}P_{A_t|W_t,Y_t,H}(a_t|w_t,y_t,h).
\end{equation}
Given randomly generated
$\boldsymbol{w}_\mathcal{S}\boldsymbol{y}_\mathcal{S}$ according to
$P_{W_\mathcal{S},Y_\mathcal{S}|H}$, for every $i\in\mathcal{S}$,
randomly and independently generate $N_i\geq 2^{nI(A_i;Y_i|W_i,H)}$
vectors $\boldsymbol{\tilde{a}}_i$ according to
$\prod_{k=1}^nP_{A_i|W_i,H}(\tilde{a}_{i}(k)|w_i(k),h(k))$, and
index them by $\boldsymbol{\tilde{a}}_i^{(t)}$ ($1\leq t \leq N_i$).
Then there exist $|\mathcal{S}|$ functions
$t^*_i=\phi_i(\boldsymbol{y}_i,\boldsymbol{w}_i,\boldsymbol{\tilde{a}}_i^{(1)},\dots,\boldsymbol{\tilde{a}}_i^{(N_t)})$
taking values in $[1\dots N_t]$, such that for sufficiently large
$n$,
\begin{multline}
\Pr((\{\boldsymbol{a}_{i}^{(t^*_i)}\}_{i\in\mathcal{S}},\boldsymbol{w}_\mathcal{S},\boldsymbol{y}_\mathcal{S})\in\mathbf{T}_\epsilon(\boldsymbol{h}))
\geq 1-\epsilon.
\end{multline}
\end{lem}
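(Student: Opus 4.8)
The plan is to prove this generalized Markov lemma in two stages: a \emph{covering} stage that guarantees, for each terminal, the existence of a codeword jointly typical with that terminal's own observations, and a \emph{joint-typicality} stage that shows the codewords selected independently across the terminals are simultaneously typical with all of $(\boldsymbol{w}_\mathcal{S},\boldsymbol{y}_\mathcal{S})$. Both stages rest on the target Markov structure $A_l-(W_l,Y_l)-(A_{\mathcal{S}\setminus l},W_{\mathcal{S}\setminus l},Y_{\mathcal{S}\setminus l})$ that is encoded in the factorization $\prod_{l\in\mathcal{S}}P_{A_l|W_l,Y_l,H}$ and in the per-terminal codeword generation $\prod_k P_{A_i|W_i,H}$.

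First I would fix the selection functions explicitly: let $\phi_i$ return the smallest index $t$ for which $(\boldsymbol{\tilde a}_i^{(t)},\boldsymbol{w}_i,\boldsymbol{y}_i)\in\mathbf{T}_\epsilon(\boldsymbol{h})$ under the pairwise law $P_{A_i,W_i,Y_i|H}$, and return the default value $1$ if no such index exists. The covering stage then bounds the probability of the failure event $\mathcal{E}_i=\{\text{no codeword is typical with }(\boldsymbol{w}_i,\boldsymbol{y}_i)\}$. Since each $\boldsymbol{\tilde a}_i^{(t)}$ is drawn from $\prod_k P_{A_i|W_i,H}$ and $(\boldsymbol{w}_i,\boldsymbol{y}_i)$ is typical (by Lemma \ref{lem:AEP}), a strong-typicality count of the kind quantified in Lemma \ref{lem:jointly_typical} shows that the probability $p$ that a single codeword is jointly typical with $(\boldsymbol{w}_i,\boldsymbol{y}_i)$ satisfies $p\doteq 2^{-nI(A_i;Y_i|W_i,H)}$. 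With $N_i\ge 2^{nI(A_i;Y_i|W_i,H)}$ independent draws (and the customary vanishing rate margin absorbed into the typicality parameter $\epsilon$), a direct bound on $(1-p)^{N_i}$ gives $\Pr(\mathcal{E}_i)\to 0$, and a union bound over the finitely many $i\in\mathcal{S}$ controls $\Pr(\cup_{i\in\mathcal{S}}\mathcal{E}_i)$.

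The heart of the argument is the joint-typicality stage, and this is where I expect the main difficulty. Conditioned on the typical pair $(\boldsymbol{w}_\mathcal{S},\boldsymbol{y}_\mathcal{S})$ and on the complement of $\cup_{i\in\mathcal{S}}\mathcal{E}_i$, I would argue that the selected tuple $\{\boldsymbol{a}_i^{(t_i^*)}\}_{i\in\mathcal{S}}$ lies in $\mathbf{T}_\epsilon(\boldsymbol{h})$ under the full law. The key observation is that, because the codebooks are generated independently across terminals and each $\phi_i$ depends only on $(\boldsymbol{y}_i,\boldsymbol{w}_i)$, the selected codewords are conditionally independent given $(\boldsymbol{w}_\mathcal{S},\boldsymbol{y}_\mathcal{S})$. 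The genuine content of the Markov lemma is then that each selected $\boldsymbol{a}_i^{(t_i^*)}$, although drawn from $\prod_k P_{A_i|W_i,H}$ and merely conditioned on joint typicality with $(\boldsymbol{w}_i,\boldsymbol{y}_i)$, behaves as if drawn from $\prod_k P_{A_i|W_i,Y_i,H}$: strong joint typicality forces the empirical conditional type of $\boldsymbol{a}_i$ given $(\boldsymbol{w}_i,\boldsymbol{y}_i)$ to lie within $\epsilon$ of $P_{A_i|W_i,Y_i,H}$, and the polynomial-in-$n$ counting of typical sequences bounds the distortion introduced by the selection step. Combining this per-terminal statement with the conditional independence across terminals, the joint empirical type of $(\{\boldsymbol{a}_i^{(t_i^*)}\}_{i\in\mathcal{S}},\boldsymbol{w}_\mathcal{S},\boldsymbol{y}_\mathcal{S})$ factorizes as $P_{W_\mathcal{S},Y_\mathcal{S}|H}\prod_{l\in\mathcal{S}}P_{A_l|W_l,Y_l,H}$, which is exactly the target distribution, so the tuple is jointly typical.

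Finally I would collect the estimates: for $\epsilon$ small and $n$ large, both $\Pr(\cup_{i\in\mathcal{S}}\mathcal{E}_i)$ and the residual probability that the conditionally-correct selected tuple still fails joint typicality are $o(1)$, whence $\Pr((\{\boldsymbol{a}_i^{(t_i^*)}\}_{i\in\mathcal{S}},\boldsymbol{w}_\mathcal{S},\boldsymbol{y}_\mathcal{S})\in\mathbf{T}_\epsilon(\boldsymbol{h}))\ge 1-\epsilon$. The delicate point throughout is controlling the bias introduced by the selection: one must verify that conditioning on joint typicality with $(\boldsymbol{w}_i,\boldsymbol{y}_i)$ does not distort the cross-terminal correlations, which is precisely what the Markov factorization together with the conditional independence of the selections guarantees. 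This is the step I would write out most carefully, as it is the crux distinguishing the multiterminal lemma from the classical single-letter Markov lemma.
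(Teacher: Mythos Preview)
Your proof sketch is correct and follows the standard two-stage argument (covering plus conditional-typicality/Markov lemma) that underlies the classical result. Note, however, that the paper itself does not supply a proof of this lemma at all: its entire ``proof'' is a pointer to Lemma~3.4 of Han--Kobayashi (1980), with the remark that the present statement is a simple extension. So you are not reproducing the paper's argument but rather filling in what the paper leaves to the literature; your outline is essentially the Han--Kobayashi argument adapted to several terminals and to the presence of the state $H$, which is exactly the ``simple extension'' the paper alludes to. The point you flag as delicate---that conditioning each $\boldsymbol{a}_i$ on joint typicality with its own $(\boldsymbol{w}_i,\boldsymbol{y}_i)$ does not distort cross-terminal correlations---is indeed the crux, and your justification via the conditional independence of the selections together with the product form of the target law is the right one.
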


\begin{proof}
See \cite{CoverThomas} and \cite{Csiszar_Korner} for the proofs of
Lemmas \ref{lem:AEP}-\ref{lem:jointly_typical}, while Lemma
\ref{lem:gen_Markov_lem} is a simple extension of Lemma 3.4
(Generalized Markov Lemma) in
\cite{HanKobayashi1980}.\end{proof}

In the following, we use only $\epsilon$ and remove the distinction
between $\epsilon$ and $\epsilon_1$, for the sake of brevity.

\section{Proof of Proposition \ref{prop:EC_MI}}\label{app:EC_MI}
\subsection{Code construction:}
Fix $\delta>0$.
\begin{enumerate}
\item For the transmitter, for any codebook used, $f$
\begin{itemize}
\item Randomly choose $2^{nR_{CEO}}$ vectors $\boldsymbol{x}$, with probability
$P_{\boldsymbol{X}}(\boldsymbol{x})=\prod_k P_{X}(x(k))$.
\item Index these vectors by $M_{CEO}$ where $M_{CEO}\in[1,2^{nR_{EC}}]$.
\end{itemize}
\item For the compressor at the agents\\
For every channel realization $\boldsymbol{h}$
\begin{itemize}
\item Randomly generate $2^{nC_i}$ vectors $\boldsymbol{u}_i$ of length $n$\\ according to
$\prod_k
P_{U_i|H}(u_{i}(k)|h(k))$. 
\item Index all the generated $\boldsymbol{u}_i$ with $z_i\in[1,2^{nC_i}]$.
\end{itemize}
\end{enumerate}
\subsection{Encoding:}
Let $M$ be the message to be sent, and $f$ is the codebook used. The
transmitter then sends $\boldsymbol{x}(M,f)$ to the channel.
\subsection{Processing at the
agents:}
The $i^{th}$ agent chooses any of the $z_i$ such that
\begin{equation}\label{eq:relay_z_typical}
\big(\boldsymbol{u}_i(z_i,\boldsymbol{h}),\boldsymbol{y}_i\big)\in\mathbf{T}_{\epsilon}^{EC,i}(\boldsymbol{h}),
\end{equation}
where $\mathbf{T}_{\epsilon}^{EC,i}(\boldsymbol{h})$ is defined in
the standard way, as (\ref{eq:typical_set_def}). The event where no
such $z_i$ is found is defined as the error event $E_1$.\\
After deciding on $z_i$ the agent forwards it to the final
destination through the lossless link.
\subsection{Decoding (at the destination):}
The destination retrieves $z^r$ from the lossless links, and uses $\boldsymbol{h}$ and the random encoding $f$.\\
The destination then finds $\hat{M}$ such that
\begin{equation}
\big(\boldsymbol{x}(\hat{M},f),\boldsymbol{u}^r(\hat{z}^r)\big)\in
\mathbf{T}_{\epsilon}^{EC,3}(\boldsymbol{h}).
\end{equation}
Where $\mathbf{T}_{\epsilon}^{EC,3}$ is defined in the standard way,
as (\ref{eq:typical_set_def}). If there is no such $\hat{M}$, or if
if there is more than one, the destination chooses one arbitrarily.
Define error $E_2$ as the event where $\hat{M}\neq
M_{CEO}$.\\
Correct decoding means that the destination decides $\hat{M}=M$.
An achievable rate $R$ was defined as when the final destination
receives the transmitted message with an error probability which
is made arbitrarily small for sufficiently large block length $n$.
\subsection{Error analysis}\label{subsec:err_ana}
The error probability is upper bounded by:
\begin{equation}
\Pr\{\mathrm{error}\}=\Pr\left(\cup_{i=1}^{2} E_i\right)\leq
\sum_{i=1}^{2}\Pr(E_i).
\end{equation}
Where:
\begin{enumerate}
\item $E_1$: No $\boldsymbol{u}_i(z_i,\boldsymbol{h})$ is jointly typical with
$\boldsymbol{y}_i$.
\item $E_2$: Decoding error $\boldsymbol{x}(\hat{M},f)\neq \boldsymbol{x}(M,f)$, so that $\hat{M}\neq M$.
\end{enumerate}
Next, we will upper bound the probabilities of the individual error
events by arbitrarily small $\epsilon$.
\subsubsection{$E_1$}
According to Lemma \ref{lem:gen_Markov_lem}, the probability
$\Pr\{E_1\}$ can be made as small as desired, for $n$ sufficiently
large, as long as
\begin{equation}
C_i>I(U_i;Y_i|H).
\end{equation}
\subsubsection{$E_2$}
Consider the case where $\hat{M}\neq M$. There are $2^{nR_{CEO}}$
such vectors, and the probability of
$(\boldsymbol{x}(\hat{M},f),\boldsymbol{u}^r(z^r))$ to be jointly
typical is upper bounded by (Lemma \ref{lem:jointly_typical})
$2^{-n[I(X;U^r|H)-\epsilon]}$. Thus the rate $R_{CEO}$ is achievable
if:
\begin{equation}
R_{CEO}<I(X;U^r|H)-\epsilon,
\end{equation}
which proves Proposition \ref{prop:EC_MI}. \hfill{\QED}

\section{Proof of Optimality of $Q=\frac{P}{t}I_t$ for the Ergodic
Channel.}\label{app:eye_opt} First consider that since the channel
is unknown to the transmitter, and $VH$ is distributed as $H$ when
$V$ is unitary (eigenvectors of a non diagonal $Q$) all through
this work, $Q$ can be limited to be diagonal.

Next, for any given $q_i(H)$ and $\mathcal{S}$, we have that
\begin{equation}
\mathrm{E}_H\left[\log_2\det\left(I_{|\mathcal{S}|}+\mathrm{diag}\left(1-2^{-q_i(H)}\right)_{i\in\mathcal{S}}H_\mathcal{S}QH_\mathcal{S}^*\right)\right]
\end{equation}
is a concave function of $Q$, which is thus maximized by
$Q=\frac{P}{t}I_t$ \cite{Telatar99}. Thus it also maximizes the
maximum over all $q_i(H)$ and $\mathcal{S}$ concluding the proof.
\hfill{\QED}

Notice that this proof does not extend to (\ref{eq:sq_R}) and to
(\ref{eq:WZ_achv_2}), so that there, the optimal $Q$ may not be
proportional to identity, but is still diagonal, though.
\section{Proof For Proposition
\ref{prop:explicit}}\label{app:explicit}
In this Appendix, we give a closed solution to the $r=2$, symmetric
case. We extend what was done in \cite{fullpaper} to the ergodic
channel case, with $t>1$. Equation (\ref{eq:WZ_achv}) for the
symmetric case can be written as:
\begin{equation}\label{eq:exp1}
R_{CEO}= \max_{0\leq q^*\leq
C}\left\{\min_{\mathcal{S}\subseteq\{1,\dots,r\}}
\{|\mathcal{S}^C|[C-q^*]+F_\mathcal{S}(q^*)\}\right\},
\end{equation}
where
\begin{equation}
F_\mathcal{S}(q^*) = \max_{\{q_i:\mathbb{C}^{[r\times
t]}\rightarrow\mathbb{R}_{+}\}_{i=1}^r} \mathrm{E}_H
\log_2\det\left(I_{|\mathcal{S}|}+\frac{P}{t}\mathrm{diag}\left(1-2^{-q_i(H)}\right)_{i\in\mathcal{S}}H_\mathcal{S}H_\mathcal{S}^*\right)
\end{equation}
such that
\begin{equation}
\mathrm{E}_H[q_i(H)]=q^*,\ i=1,\dots,r.
\end{equation}
Since the channel is ergodic, and the scheme symmetric, the users
will be equivalent, and due to the concavity of the problem, the
optimal solution is characterized by $q^*=\mathrm{E}_H [r_i(H)]$.
That is, equal bandwidth that is wasted by all users on the noise
quantization. By writing the equation this way, the ergodic nature
of the channel is used, such that the channel randomness is limited
to within $F_\mathcal{S}$. Since $F_\mathcal{S}$ is an increasing
function of $q^*$, when solving it, the solution of
(\ref{eq:WZ_achv}) is readily available numerically. So we are left
with the concave problem of finding $F_\mathcal{S}$.

Since $F_{\{1,\dots,r\}}(q^*)$ is an increasing function of $q^*$,
and $r(C-q^*)$ is a decreasing function of $q^*$, the point
$F_{\{1,\dots,r\}}(q^*)=r(C-q^*)$ exists, and further, it is an
upper bound to the achievable rate. Next, using Hadamard inequality
we have that for any $\mathcal{S}$
\begin{multline}
\log_2\det\left(I_2+\frac{P}{t}\mathrm{diag}(1-2^{-q_i})_{i=1}^rHH^*\right)\leq\log_2\det\left(I_\mathcal{S}+\frac{P}{t}\mathrm{diag}(1-2^{-q_i})_{i\in\mathcal{S}}H_\mathcal{S}H_\mathcal{S}^*\right)\\
+\log_2\det\left(I_{\mathcal{S}^C}+\frac{P}{t}\mathrm{diag}(1-2^{-q_i})_{i\in\mathcal{S}^C}H_{\mathcal{S}^C}H_{\mathcal{S}^C}^*\right).
\end{multline}
Since the channel is ergodic, the minimum in (\ref{eq:exp1}) is over
functionals of the channel probability, rather then channel
realizations. In addition, the channel probability is symmetric with
regards to the agents, leading to $F_{\{1,\dots,r\}}(q^*)$, which is
the minimum among all the subsets $\mathcal{S}$. So that the
achievable rate can be calculated by solving the following problem
\begin{equation}
\max_{\{q_i:\mathbb{C}^{[r\times
t]}\rightarrow\mathbb{R}_{+}\}_{i=1}^r} \mathrm{E}_H
\log_2\det\left(I_2+\frac{P}{t}\mathrm{diag}\left(1-2^{-q_i(H)}\right)_{i=1}^rHH^*\right)
\end{equation}
such that $q_i(H)\geq 0$ and
\begin{equation}
\mathrm{E}_H[q_i(H)]=q^*,\ i=1,\dots,r.
\end{equation}
Let us limit the discussion to the case of $r=2$. The solution can
be obtained through Lagrange multipliers, as follows ($\theta\geq
0$)
\begin{eqnarray}
\triangledown\log_2\det\left(I_2+\frac{P}{t}\mathrm{diag}\left(1-2^{-q_i(H)}\right)_{i=1,2}HH^*\right)-\theta
I_2=\mu(H).
\end{eqnarray}
So for any $\mu_i(H)=0$, such that $q_i(H)>0$, we get that
($\bar{i}=3-i$)
\begin{equation}\label{eq:explic_2}
i=1,2:\
\frac{2^{-q_i}(\Delta_{2+i}-2^{-q_{\bar{i}}}\Delta_2)}{\Delta}=\theta,
\end{equation}
and $\mathrm{E}_H(q_i) = q^*$, where
\begin{eqnarray}
\Delta & \triangleq &
\det\left(I_2+\frac{P}{t}\mathrm{diag}\left(1-2^{-q_i(H)}\right)_{i=1,2}HH^*\right)\\
\Delta_1 & \triangleq & \det\left(I_2+\frac{P}{t}HH^*\right)\\
\Delta_2 & \triangleq & \det\left(\frac{P}{t}HH^*\right)\\
\Delta_3 & \triangleq & \det\left(\mathrm{diag}([0,1])+\frac{P}{t}HH^*\right)\\
\Delta_4 & \triangleq &
\det\left(\mathrm{diag}([1,0])+\frac{P}{t}HH^*\right).
\end{eqnarray}
We note that (\ref{eq:explic_2}) determines a one-to-one
connection between $\theta$ and $q^*$. In addition, note that
\begin{equation*}
\Delta = \Delta_1
+2^{-q_1-q_2}\Delta_2-2^{-q_1}\Delta_3-2^{-q_2}\Delta_4,
\end{equation*}
and that
\begin{eqnarray}
\Delta_3 = \Delta_2 + \frac{P}{t}|H_1|^2\\
\Delta_4 = \Delta_2 + \frac{P}{t}|H_2|^2.
\end{eqnarray}

The solution of (\ref{eq:explic_2}) is
\begin{equation}\label{eq:explic_4}
q_i=-\log_2\left(\frac{\Delta_{\bar{i}+2}}{2\Delta_2(1+\theta)}\left((1+2\theta)-\sqrt{(1+2\theta)^2-4\theta(1+\theta)\frac{\Delta_1\Delta_2}{\Delta_3\Delta_4}}\right)\right).
\end{equation}
We note that $\frac{\Delta_1\Delta_2}{\Delta_3\Delta_4}\leq 1$
with equality if and only if $HH^*$ is a diagonal matrix. So the
square root in equation (\ref{eq:explic_4}) is guaranteed to be
positive real. By a simple derivative, it is easily verified that
$F_H(\theta)$, defined by (\ref{eq:explic_F}), is monotonically
increasing with $\theta$.

Then, in case any of $q_i,\ i=1,2$ from (\ref{eq:explic_4}) turns
out negative (say $F_H(\theta)>\frac{\Delta_2}{\Delta_{2+i}}$
which leads to $q_{\bar{i}}<0$), then the solution is
$q_{\bar{i}}=0$ and $q_i$ is equal to
\begin{equation}\label{eq:explic_5}
q_i=-\log_2\left(\frac{\theta}{1+\theta}\frac{1+\frac{P}{t}|H_i|^2}{\frac{P}{t}|H_i|^2}\right).
\end{equation}
If (\ref{eq:explic_5}) is negative as well, the solution is
$q_i=0$. As $\theta$ gets smaller, more channels will result with
(\ref{eq:explic_4}) solved with $q_i> 0$, which means better
compression.

Overall, we can write
\begin{equation}\label{eq:explic_6}
q_1(H,\theta)=\left\{\begin{array}{cc}
\left\lceil-\log_2\left(\frac{\theta}{1+\theta}\frac{1+\frac{P}{t}|H_1|^2}{\frac{P}{t}|H_1|^2}\right)\right\rceil^+
& F_H(\theta)>\frac{\Delta_2}{\Delta_3} \\
\left\lceil-\log_2(\frac{\Delta_{4}}{\Delta_2}F_H(\theta))\right\rceil^+
& F_H(\theta)\leq \frac{\Delta_2}{\Delta_3}.\end{array}\right.
\end{equation}

Now $\theta$ is determined by the equation
\begin{equation}\label{eq:explic_7}
\mathrm{E}_H
\log_2\det\left(I_2+\frac{P}{t}\mathrm{diag}\left(1-2^{-q_i(H,\theta)}\right)_{i=1}^2HH^*\right)=2(C-\mathrm{E}_H
[q_i(H,\theta)])
\end{equation}
and the achievable rate is
\begin{equation}\label{eq:explic_8}
R_{CEO}=2(C-\mathrm{E}_H [q_i(H,\theta)]).
\end{equation}
This concludes the proof. \hfill{\QED}

\section{Proof of Lemma \ref{lem:EPI_1}}
The proof is divided into two sections, we start by proving for the
case where $|\mathcal{S}|\leq t$. This division is since the first
case is easier to show, and thus gives better understanding of the
guidelines and techniques, which are identical, albeit more
involved, for the case of $|\mathcal{S}|\geq t$.

For the sake of the proof, define:
\begin{itemize}
\item $Z\triangleq H_{\mathcal{S}}X$, where $I(\boldsymbol{Y}_{\mathcal{S}};\boldsymbol{X}|\boldsymbol{H})=I(\boldsymbol{Y}_{\mathcal{S}};\boldsymbol{Z}|\boldsymbol{H})$.
\item $\Lambda_z\triangleq \mathrm{E}[ZZ^*]=H_{\mathcal{S}}QH_{\mathcal{S}}^*=\frac{P}{t}H_{\mathcal{S}}H_{\mathcal{S}}^*$ (equal to $\Lambda_\mathcal{S}$).
\item $\hat{Z}\triangleq AY$, where $A$ is the best estimator of
$Z$ from $Y$, calculated as $A=\Lambda_z(I+\Lambda_z)^{-1}$.
\end{itemize}
Since $|\mathcal{S}|\leq t$ we have that $|\Lambda_z|>0$. Note
that since $\hat{Z}$ is the best estimator
\begin{equation}\label{eq:indi_11}
Z=\hat{Z}+\hat{N}
\end{equation}
where $\hat{Z}$ and $\hat{N}$ are independent, and since
$\mathrm{E}[\hat{Z}\hat{Z}^*]=\Lambda_z(I+\Lambda_z)^{-1}\Lambda_z$,
we get $\mathrm{E}[\hat{N}\hat{N}^*]=\Lambda_z(I+\Lambda_z)^{-1}$.
Now we can rely on the independence in (\ref{eq:indi_11}) and the
vector entropy power inequality:
\begin{equation}\label{eq:EPI_11}
2^{\frac{1}{n|\mathcal{S}|}h(\boldsymbol{Z}|V_\mathcal{S},\boldsymbol{H}=\boldsymbol{h})}\geq
2^{\frac{1}{n|\mathcal{S}|}h(\boldsymbol{\hat{Z}}|V_\mathcal{S},\boldsymbol{H}=\boldsymbol{h})}+(\pi
e)\prod_{k=1}^n
\left(\frac{|\Lambda_z(k)|}{|I+\Lambda_z(k)|}\right)^{\frac{1}{n|\mathcal{S}|}}.
\end{equation}
Next we express the required quantity $\lambda\triangleq
\frac{1}{n}I(\boldsymbol{Z};V_\mathcal{S}|\boldsymbol{H}=\boldsymbol{h})$
in both sides of (\ref{eq:EPI_11}). For the left hand side,
\begin{equation}\label{eq:EPI_12}
\frac{1}{n|\mathcal{S}|}h(\boldsymbol{Z}|V_\mathcal{S},\boldsymbol{H}=\boldsymbol{h})=\frac{1}{n|\mathcal{S}|}h(\boldsymbol{Z}|\boldsymbol{H}=\boldsymbol{h})-\frac{\lambda}{|\mathcal{S}|}=\frac{1}{n|\mathcal{S}|}\log_2\left(\prod^n|\Lambda_z(k)|\right)+\log_2(\pi
e )-\frac{\lambda}{|\mathcal{S}|}.
\end{equation}
The right hand side is more elaborated, and will be done in two
stages. First note that:
\begin{equation}\label{eq:EPI_13}
h(\boldsymbol{\hat{Z}}|V_\mathcal{S},\boldsymbol{H}=\boldsymbol{h})=h(\boldsymbol{\hat{Z}}|\boldsymbol{Z},V_\mathcal{S},\boldsymbol{H}=\boldsymbol{h})+I(\boldsymbol{Z};\boldsymbol{\hat{Z}}|V_\mathcal{S},\boldsymbol{H}=\boldsymbol{h}).
\end{equation}
We know that
$h(\boldsymbol{Z}|\boldsymbol{\hat{Z}},\boldsymbol{H}=\boldsymbol{h})=h(\boldsymbol{Z}|\boldsymbol{\hat{Z}},V_\mathcal{S},\boldsymbol{H}=\boldsymbol{h})$,
from the definition of $\hat{Z}$ and $V_\mathcal{S}$. This means
that:
\begin{equation}\label{eq:EPI_14}
\frac{1}{n}I(\boldsymbol{Z};\boldsymbol{\hat{Z}}|V_\mathcal{S},\boldsymbol{H}=\boldsymbol{h})=\frac{1}{n}I(\boldsymbol{Z};\boldsymbol{\hat{Z}}|\boldsymbol{H}=\boldsymbol{h})-\lambda=\frac{1}{n}\log_2\left(\prod^n|I+\Lambda_z(k)|\right)-\lambda.
\end{equation}
Second, we have that $\hat{Z}=AY_\mathcal{S}$, so
\begin{multline}\label{eq:EPI_15}
h(\boldsymbol{\hat{Z}}|\boldsymbol{Z},V_\mathcal{S},\boldsymbol{H}=\boldsymbol{h})=h(\boldsymbol{Y}_\mathcal{S}|\boldsymbol{Z},V_\mathcal{S},\boldsymbol{H}=\boldsymbol{h})+2\log_2\left(\prod^n|A(k)|\right)=\\
\sum_{i\in\mathcal{S}}h(\boldsymbol{Y}_i|\boldsymbol{Z},V_i,\boldsymbol{H}=\boldsymbol{h})+2\log_2\left(\prod^n\frac{|\Lambda_z(k)|}{|I+\Lambda_z(k)|}\right).
\end{multline}
define
$q_i(\boldsymbol{h})\triangleq\frac{1}{n}I(\boldsymbol{Y}_i;V_i|\boldsymbol{X},\boldsymbol{H}=\boldsymbol{h})$
and since we used additive noise with unit variance,
\begin{equation}\label{eq:EPI_16}
\frac{1}{n}h(\boldsymbol{Y}_i|\boldsymbol{Z},V_i,\boldsymbol{H}=\boldsymbol{h})=\log_2(\pi
e)-q_i(\boldsymbol{h}).
\end{equation}
rewrite (\ref{eq:EPI_15}) as
\begin{equation}\label{eq:EPI_17}
\frac{1}{n}h(\boldsymbol{\hat{Z}}|\boldsymbol{Z},V_\mathcal{S},\boldsymbol{H}=\boldsymbol{h})=|\mathcal{S}|\log_2(\pi
e)-\sum_{i\in\mathcal{S}}q_i(\boldsymbol{h})+\frac{2}{n}\log_2\left(\prod^n\frac{|\Lambda_z(k)|}{|I+\Lambda_z(k)|}\right).
\end{equation}
Now using (\ref{eq:EPI_14}) and (\ref{eq:EPI_17}) in the right
hand side, written in (\ref{eq:EPI_13}), we get to:
\begin{equation}\label{eq:EPI_18}
2^{\frac{1}{n|\mathcal{S}|}h(\boldsymbol{\hat{Z}}|V_\mathcal{S},\boldsymbol{H}=\boldsymbol{h})}=\pi
e
\prod_{i\in\mathcal{S}}2^{-\frac{q_i(\boldsymbol{h})}{|\mathcal{S}|}}\left(\prod^n\frac{|\Lambda_z(k)|}{|I+\Lambda_z(k)|}\right)^{\frac{2}{n|\mathcal{S}|}}
\left(\prod^n|I+\Lambda_z(k)|\right)^{\frac{1}{n|\mathcal{S}|}}2^{-\frac{\lambda}{|\mathcal{S}|}}.
\end{equation}
Finally we combine left hand side (\ref{eq:EPI_12}) and right hand
side (\ref{eq:EPI_18}), and get
\begin{equation}\label{eq:EPI_19}
\pi
e2^{-\frac{1}{|\mathcal{S}|}\lambda}\left(\prod^n|\Lambda_z(k)|\right)^{\frac{1}{n|\mathcal{S}|}}
 \geq \pi e 2^{-\frac{\lambda}{|\mathcal{S}|}}
\prod_{i\in\mathcal{S}}2^{-\frac{q_i(\boldsymbol{h})}{|\mathcal{S}|}}\left(\prod^n\frac{|\Lambda_z(k)|}{|I+\Lambda_z(k)|}\right)^{\frac{2}{n|\mathcal{S}|}}
\left(\prod^n|I+\Lambda_z(k)|\right)^{\frac{1}{n|\mathcal{S}|}}+\pi
e\prod_{k=1}^n
\left(\frac{|\Lambda_z(k)|}{|I+\Lambda_z(k)|}\right)^{\frac{1}{n|\mathcal{S}|}}.
\end{equation}
Reordering the equation we get to (\ref{eq:upper_1}), which proves
Lemma \ref{lem:EPI_1} for when
$|\mathcal{S}|\leq t$.\\

We continue to the case where $|\mathcal{S}|>t$, where we have more
agents than transmitters, so that $|\Lambda_z|=0$. Like in the
previous setting we define $\hat{X}=AY$ to be the best estimator of
$X$ out of $Y$. So that now $A=QH^*(I+\Lambda_z)^{-1}$, and we have
\begin{equation}
X=\hat{X}+\hat{N},
\end{equation}
where $\hat{X}$ and $\hat{N}$ are independent and using the matrix
inversion Lemma
$\mathrm{E}[\hat{N}\hat{N}^*]=(Q^{-1}+H^*H)^{-1}=Q(I+QH^*H)^{-1}$.
Again we use the entropy power inequality:
\begin{equation}\label{eq:EPI_21}
2^{\frac{1}{nt}h(\boldsymbol{X}|V_\mathcal{S},\boldsymbol{H}=\boldsymbol{h})}\geq
2^{\frac{1}{nt}h(\boldsymbol{\hat{X}}|V\mathcal{S},\boldsymbol{H}=\boldsymbol{h})}+\pi
e |Q|^{\frac{1}{t}}\prod_{k=1}^n
\left(\frac{1}{|I+\Lambda_z(k)|}\right)^{\frac{1}{nt}}.
\end{equation}
Using the same argument as the one used for (\ref{eq:EPI_12}), the
left hand side of (\ref{eq:EPI_21}) becomes
\begin{equation}\label{eq:EPI_22}
2^{\frac{1}{nt}h(\boldsymbol{X}|V_\mathcal{S},\boldsymbol{H}=\boldsymbol{h})}=\pi
e |Q|^{\frac{1}{t}} 2^{-\frac{\lambda}{t}}.
\end{equation}
The left expression in the right hand side of (\ref{eq:EPI_21})
can be written as the sum of two arguments, as in
(\ref{eq:EPI_13}), where the right-most mutual information (like
(\ref{eq:EPI_14})) is
\begin{equation}\label{eq:EPI_24}
\frac{1}{n}I(\boldsymbol{X};\boldsymbol{\hat{X}}|V_\mathcal{S},\boldsymbol{H}=\boldsymbol{h})=\frac{1}{n}I(\boldsymbol{X};\boldsymbol{\hat{X}}|\boldsymbol{H}=\boldsymbol{h})-\lambda=\frac{1}{n}\log_2\left(\prod^n|I+\Lambda_z(k)|\right)-\lambda.
\end{equation}
The difference between the case where $|\mathcal{S}|<t$ and
$|\mathcal{S}|>t$ is evident in the derivation of (\ref{eq:EPI_15}),
which for $|\mathcal{S}|>t$ requires the double use of the entropy
power inequality. So we want to lower bound
$h(\boldsymbol{\hat{X}}|V_\mathcal{S},\boldsymbol{X},\boldsymbol{H}=\boldsymbol{h})$.
First, let us decompose $A$ using the singular value decomposition
into $A=U_1DU_2$, where $U_1\in\mathbb{C}^{t\times t}$ and
$U_2\in\mathbb{C}^{|\mathcal{S}|\times |\mathcal{S}|}$ are two
unitary matrices and $D\in\mathrm{R}^{t\times |\mathcal{S}|}$ is
diagonal matrix. So we have that:
\begin{multline}\label{eq:EPI_25}
h(\boldsymbol{\hat{X}}|V_\mathcal{S},\boldsymbol{X},\boldsymbol{H}=\boldsymbol{h})=h(\boldsymbol{U}_1\boldsymbol{D}\boldsymbol{U}_2\boldsymbol{Y}|V_\mathcal{S},\boldsymbol{X},\boldsymbol{H}=\boldsymbol{h})\\=\log_2\prod^n|U_1(k)|^2+\sum_{j=1}^t[
\log_2\prod^n|D_{j,j}(k)|^2+h((\boldsymbol{U}_2)_j\boldsymbol{Y}|V_\mathcal{S},\boldsymbol{X},\boldsymbol{H}=\boldsymbol{h})],
\end{multline}
since $U_2$ is unitary matrix. Next we employ the entropy power
inequality to lower bound
$h((\boldsymbol{U}_2)_j\boldsymbol{Y}|V_\mathcal{S},\boldsymbol{X},\boldsymbol{H}=\boldsymbol{h})$:
\begin{equation}
2^{h((\boldsymbol{U}_2)_j\boldsymbol{Y}|V_\mathcal{S},\boldsymbol{X},\boldsymbol{H}=\boldsymbol{h})}\geq
\sum_{i\in\mathcal{S}}
2^{h(\boldsymbol{Y}_i|V_\mathcal{S},\boldsymbol{X},\boldsymbol{H}=\boldsymbol{h})}\prod^n|(U_2(k))_{j,i}|^2.
\end{equation}
This inequality is achieved with equality for Gaussian variables. A
lower bound on (\ref{eq:EPI_25}) is given by
\begin{multline}\label{eq:EPI_27}
h(\boldsymbol{\hat{X}}|V_\mathcal{S},\boldsymbol{X},\boldsymbol{H}=\boldsymbol{h})\geq\log_2\left(\prod^n|U_1(k)D(k)U_2(k)\mathrm{diag}(2^{-q_i(\boldsymbol{h})})_{i\in\mathcal{S}}U_2(k)^*D(k)^*U_1(k)^*|\right)+nt\log_2(\pi e)\\
=\log_2\left(\prod^n|QH(k)^*(I+\Lambda_z(k))^{-1}\mathrm{diag}(2^{-q_i(\boldsymbol{h})})_{i\in\mathcal{S}}(I+\Lambda_z(k))^{-1}H(k)Q|\right)+nt\log_2(\pi
e)\\
=\log_2\left(\prod^n\left(\frac{|QH(k)^*\mathrm{diag}(2^{-q_i(\boldsymbol{h})})_{i\in\mathcal{S}}H(k)Q|}{|I+\Lambda_z(k)|^2}\right)\right)+nt\log_2(\pi
e)
\end{multline}
since
\begin{equation}
|QH^*(I+\Lambda_z)^{-1}D(I+\Lambda_z)^{-1}HQ|=\frac{|(I+QH^*H)QH^*(I+\Lambda_z)^{-1}D(I+\Lambda_z)^{-1}HQ(I+H^*HQ)|}{|I+\Lambda_z|^2}=\frac{|QH^*DHQ|}{|I+\Lambda_z|^2}.
\end{equation}
To conclude, we use (\ref{eq:EPI_22}), (\ref{eq:EPI_24}) and
(\ref{eq:EPI_27}):
\begin{equation}
\pi e |Q|^{\frac{1}{t}} 2^{-\frac{\lambda}{t}}\geq\pi
e2^{-\frac{-\lambda}{t}}\left(\prod^n|I+\Lambda_z(k)|\frac{|Q|^2|H(k)^*\mathrm{diag}(2^{-q_i(\boldsymbol{h})})_{i\in\mathcal{S}}H(k)|}{|I+\Lambda_z(k)|^2}\right)^{\frac{1}{nt}}+\pi
e |Q|^{\frac{1}{t}}\prod_{k=1}^n
\left(\frac{1}{|I+\Lambda_z(k)|}\right)^{\frac{1}{nt}}
\end{equation}
which by taking expectation with respect to $\boldsymbol{H}$,
together with (\ref{eq:EPI_19}) proves Lemma \ref{lem:EPI_1}. \QED
\section{Proof of Proposition \ref{prop:achv_BC}}
The proof of Proposition \ref{prop:achv_BC} is based on the proof of
Theorem 3 from \cite{fullpaper}.
\subsection{Code construction:}\label{subsec:Code_construction}
For every channel realization $\boldsymbol{h}$, determine the
maximizing $\pi$. Fix $\delta>0$ and then
\begin{enumerate}
\item\label{codecnst:BC} For the broadcast transmissions, for every
$i=\pi_1,\dots,\pi_r$:
\begin{itemize}
\item Randomly generate $2^{n [I(W_{i};W_{\tilde{\mathcal{T}}(\pi,i)}|H)+\delta]}$
vectors $\boldsymbol{w}_i$, according to
$P_{\boldsymbol{W}_{i}|\boldsymbol{H}}(\boldsymbol{w}_{i}|\boldsymbol{h})=\prod_{k=1}^n
P_{W_{i}|H}(w_{i}(k)|h(k))$.
\item For every $\boldsymbol{w}_{\tilde{\mathcal{T}}(\pi,i)}$
generated in the previous iteration, find at least one
$\boldsymbol{w}_i$ within the generated set which is jointly
typical. Joint typicality means that
\begin{equation}
(\boldsymbol{w}_i,\boldsymbol{w}_{\tilde{\mathcal{T}}(\pi,i)})\in
\mathbf{T}_{\epsilon}^{BC,i}(\boldsymbol{h}),
\end{equation}
where
\begin{multline}
\mathbf{T}^{BC,i}_\epsilon(\boldsymbol{h}) \triangleq\
\Bigg\{\boldsymbol{w}_{i,\tilde{\mathcal{T}}(\pi,i)}:\ \forall
\mathcal{S}\subseteq\{i,\tilde{\mathcal{T}}(\pi,i)\},\ \forall
w_\mathcal{S} \in
\mathcal{W}_\mathcal{S},h\in \mathcal{H}\\
\frac{1}{n}\left|N(w_\mathcal{S},h|\boldsymbol{w}_\mathcal{S},\boldsymbol{h})-P_{W_\mathcal{S}|H}(w_\mathcal{S}|\boldsymbol{h})N(h|\boldsymbol{h})\right|<
\frac{\epsilon}{ |\mathcal{W}_\mathcal{S}|} \Bigg\}.
\end{multline}
\item In case no such vector exists, declare error event $E_1$.
\item Repeat the last steps for $2^{n [I(W_{i};Y_{i}|H)-I(W_{i};W_{\tilde{\mathcal{T}}(\pi,i)}|H)-\delta]}$
times.
\end{itemize}
Label the resulting vectors of each repetition, which were jointly
typical, by $M_{i}$,\\ where $M_{i}\in[1,2^{n
[I(W_{i};Y_{i}|H)-I(W_{i};W_{\tilde{\mathcal{T}}(\pi,i)}|H)-\delta]}]$.
Then $M^r=\{M_1,\dots,M_r\}$ and further define
$\mathcal{M}_{M_{i}}$ as the set labeled by $M_{i}$. So that
$\boldsymbol{w}^r(M^r,\boldsymbol{h})$ are the $r$ vectors which
were selected in the last stage and are jointly typical.
\item\label{codecnst:CF_tx} For the message which is decoded at the
final destination, for every $\boldsymbol{w}^r$ defined by some
$M^r$, and for every random encoding realization $f$
\begin{itemize}
\item Randomly choose $2^{nR_{CEO}}$ vectors $\boldsymbol{x}$, with probability
$P_{\boldsymbol{X}|\boldsymbol{W}^r,\boldsymbol{H}}(\boldsymbol{x}|\boldsymbol{w}^r,\boldsymbol{h})=\prod_k
P_{X|W^r,H}(x(k)|w^r(k),h(k))$.
\item Index these vectors by $M_{CEO}$ where $M_{CEO}\in[1,2^{nR_{CEO}}]$.
\item So we have $2^{n[\sum_{i=1}^r I(W_i;Y_i|H)-I(W_{i};W_{\tilde{\mathcal{T}}(\pi,i)}|H)-\delta]}$ different mappings between
indices $M_{CEO}$ and vectors $\boldsymbol{x}$, where the one used
is determined by $M^r$. We will therefore denote
$\boldsymbol{x}(M_{CEO},M^r,\boldsymbol{h})$ as the vector indexed
by $M_{CEO},M^r$. We leave out the notation of $f$ in the sequel,
for the sake of brevity, since for decoding agents, the chosen $f$
is known at the agents, so the achievable rate is valid for every
realization of $f$, with high probability.
\end{itemize}
\item\label{codecnst:agents} For the compressor at the agents\\
For all $\boldsymbol{w}^r$ indicated by $M^r$,
\begin{itemize}
\item Randomly generate $2^{n[\hat{R}_i-(C_i-\{I(W_{i};Y_{i}|H)-I(W_{i};W_{\tilde{\mathcal{T}}(\pi,i)}|H)-\delta\})]}$ vectors $\boldsymbol{u}_i$ of length $n$\\ according to
$\prod_k
P_{U_i|W_i,H}(u_{i}(k)|w_{i}(k),h(k))$. 
\item Repeat the last step for $s_i=1,\dots,2^{n(C_i-\{I(W_{i};Y_{i}|H)-I(W_{i};W_{\tilde{\mathcal{T}}(\pi,i)}|H)-\delta\})}$, define the resulting set of
$\boldsymbol{u}_i$ of each repetition by $S_{s_i}$.
\item Index all the generated $\boldsymbol{u}_i$ with $z_i\in[1,2^{n\hat{R}_i}]$.
We will interchangeably use the notation $S_{s_i}$ for the set of
vectors $\boldsymbol{u}_i$ as well as for the set of the
corresponding $z_i$.
\item Notice that the mapping between
the indices $z_i$ and the vectors $\boldsymbol{u}_i$ depends on
$\boldsymbol{w}_i,\boldsymbol{h}$. So we will write
$\boldsymbol{u}_i(z_i,\boldsymbol{w}_i,\boldsymbol{h})$ to denote
$\boldsymbol{u}_i$ which is indexed by $z_i$ for some specific
$\boldsymbol{w}_i,\boldsymbol{h}$.
\end{itemize}
\end{enumerate}
\subsection{Encoding:}\label{subsec:encoding}

Let $M=(M^r,M_{CEO})$ be the message to be sent ($M^r$ is defined at
the previous subsection), and the channel realizations be
$\boldsymbol{h}$. The transmitter then sends
$\boldsymbol{x}(M_{CEO},M^r,\boldsymbol{h})$ to the channel.
\subsection{Processing at the
agents:}\label{subsec:agent_processing}
\subsubsection{Decoding}\label{subsubsec:decodeing}
The $i^{th}$ agent knows $\boldsymbol{h}$ and receives
$\boldsymbol{y}_i$ from the channel. It looks for
$\boldsymbol{\hat{w}}_i$ so that
\begin{equation}
(\boldsymbol{y}_i,\boldsymbol{\hat{w}}_i)\in
\mathbf{T}_\epsilon^{i,1}(\boldsymbol{h}),
\end{equation}
where
\begin{equation}
\mathbf{T}^{i,1}_\epsilon(\boldsymbol{h}) \triangleq\
\left\{\boldsymbol{w}_i,\boldsymbol{y}_i:\ \begin{array}{c}\forall w
\in
\mathcal{W}_i,h\in \mathcal{H}:\ 
\frac{1}{n}\left|N(w,h|\boldsymbol{w}_i,\boldsymbol{h})-P_{W_i|H}(w|h)N(h|\boldsymbol{h})\right|<
\frac{\epsilon}{ |\mathcal{W}_i|} \\
\forall y \in \mathcal{Y}_i,h\in \mathcal{H}:\ 
\frac{1}{n}\left|N(y,h|\boldsymbol{y}_i,\boldsymbol{h})-P_{Y_i|H}(y|h)N(h|\boldsymbol{h})\right|<
\frac{\epsilon}{ |\mathcal{Y}_i|}
\\
w \in
\mathcal{W}_i,y \in \mathcal{Y}_i,h\in \mathcal{H}:\ 
\frac{1}{n}\left|N(w,y,h|\boldsymbol{w}_i,\boldsymbol{y}_i,\boldsymbol{h})-P_{W_i,Y_i|H}(w,y|h)N(h|\boldsymbol{h})\right|<
\frac{\epsilon}{ |\mathcal{Y}_i||\mathcal{W}_i|}
\end{array}\right\}.
\end{equation}

If no such $\boldsymbol{\hat{w}}_i$ exists, chose arbitrary
$\boldsymbol{\hat{w}}_i$, and if more than one is found, select one
of them arbitrarily. Denote by $E_2$ the error event where the
chosen vector $\boldsymbol{\hat{w}}_i\neq
\boldsymbol{w}_i(M^r,\boldsymbol{h})$.

\subsubsection{Compression}\label{subsub:compression}
The $i^{th}$ agent chooses any of the $z_i$ such that
\begin{equation}\label{eq:relay_z_typical2}
\big(\boldsymbol{u}_i(z_i,\boldsymbol{\hat{w}}_i,\boldsymbol{h}),\boldsymbol{y}_i,\boldsymbol{\hat{w}}_i\big)\in\mathbf{T}_{\epsilon}^{t,2}(\boldsymbol{h}).
\end{equation}
The event where no such $z_i$ is found is defined as the error
event $E_3$.\\
After deciding on $z_i$ the agent transmits $s_i$, which fulfills
$z_i\in S_{s_i}$, and $\hat{M}_i$ to the final destination through
the lossless link, where $\hat{M}_i$ corresponds to
$\boldsymbol{\hat{w}}_t$.
\subsection{Decoding (at the destination):}\label{subsec:Decoding (at the destination)}
The destination retrieves $\hat{M}^r$ and $s^r\triangleq (s_1,\dots,s_r)$ from the lossless links.\\
The destination then finds the set of indices
$\hat{z}^r\triangleq\{\hat{z}_1,\dots,\hat{z}_r\}$ of the compressed
vectors $\boldsymbol{\hat{u}}^r$ and $\hat{M_{CEO}}$ which satisfy
\begin{equation}
\begin{cases}
\big(\boldsymbol{x}(\hat{M}_{CEO},\hat{M}^r,\boldsymbol{h},f),\boldsymbol{\hat{u}}^r(\hat{z}^r,\hat{M}^r,\boldsymbol{h}),\boldsymbol{\hat{w}}^r(\hat{M}^r,\boldsymbol{h})\big)\in
\mathbf{T}_{\epsilon}^3(\boldsymbol{h})\\
\hat{z}^r\in S_{s_1}\times\dots\times S_{s_r}.
\end{cases}
\end{equation}
Where $\mathbf{T}_{\epsilon}^3$ is defined in the standard way, as
(\ref{eq:typical_set_def}). If there is no such
$\hat{z}^r,\hat{M}_{CEO}$, or if there is more than one, the
destination chooses one arbitrarily. Define error $E_4$ as the event
where $\hat{M}_{CEO}\neq
M_{CEO}$.\\
Correct decoding means that the destination decides $\hat{M}=M$.
An achievable rate $R$ was defined as when the final destination
receives the transmitted message with an error probability which
is made arbitrarily small for sufficiently large block length $n$.
\subsection{Error analysis}\label{subsec:err_ana2}
The error probability is upper bounded by:
\begin{equation}
\Pr\{\mathrm{error}\}=\Pr\left(\cup_{i=1}^{4} E_i\right)\leq
\sum_{i=1}^{4}\Pr(E_i).
\end{equation}
Where:
\begin{enumerate}
\item $E_1$: No $r$-tuple $\boldsymbol{w}^r$ jointly typical is
found.
\item $E_2$: A different $\boldsymbol{\hat{w}}_i\neq \boldsymbol{w}_i$ is selected by the $i^{th}$ agent.
\item $E_3$: No $\boldsymbol{u}_i(z_i,\boldsymbol{\hat{w}}_i,\boldsymbol{h})$ is jointly typical with
$(\boldsymbol{y}_i,\boldsymbol{\hat{w}}_i)$.
\item $E_4$: Decoding error $\boldsymbol{x}(\hat{M}_{CEO},\hat{M}_\mathcal{T},f)\neq \boldsymbol{x}(M,f)$.
\end{enumerate}
Next, we will upper bound the probabilities of the individual error
events by arbitrarily small $\epsilon$.
\subsubsection{$E_1$}
From Lemma \ref{lem:gen_Markov_lem}, it is evident that $\Pr(E_1)$
can be made as small as desired, when $n$ is increased, as long as
$\delta>0$.
\subsubsection{$E_2$}
By Lemma \ref{lem:AEP}, the probability of jointly distributed
variables not to be $\epsilon$-typical is as small as desired for
$n$ sufficiently large. According to Lemma
\ref{lem:jointly_typical}, the probability that another
$\boldsymbol{\hat{w}}_i$ belongs to $\mathbf{T}_\epsilon^{i,1}$ is
upper bounded by $2^{-n[I(W_i;Y_i|H)-\epsilon]}$. Since there are no
more than
$2^{n[I(W_i;Y_i|H)-I(W_{i};W_{\tilde{\mathcal{T}}(\pi,i)}|H)-\delta]}$
such $\boldsymbol{\hat{w}}_i$, the probability of $E_2$ can be made
arbitrarily small as $n$ goes to infinity as long as
$I(W_{i};W_{\tilde{\mathcal{T}}(\pi,i)}|H)+\delta>\epsilon$.
\subsubsection{$E_3$}
According to Lemma \ref{lem:gen_Markov_lem}, the probability
$\Pr\{E_3\}$ can be made as small as desired, for $n$ sufficiently
large, as long as
\begin{equation}\label{eq:single_compression_rate}
\hat{R}_i>I(U_i;Y_i|W_i,H).
\end{equation}
\subsubsection{$E_4$}
Consider the case where $\hat{M}_{CEO}\neq M_{CEO}$ and
$\hat{z}_\mathcal{S}\neq z_\mathcal{S}$. There are
$$2^{n[R_{CEO}+\sum_{i\in\mathcal{S}}[\hat{R}_i-(C_i-\{I(W_{i};Y_{i}|H)-I(W_{i};W_{\tilde{\mathcal{T}}(\pi,i)}|H)-\delta\})]]}$$
such vectors, and the probability of
$(\boldsymbol{x}(\hat{M}),\boldsymbol{u}_\mathcal{S}(\hat{z}_\mathcal{S}),\boldsymbol{u}_{\mathcal{S}^C}(\hat{z}_{\mathcal{S}^C}))$
to be jointly typical is upper bounded by (Lemma
\ref{lem:jointly_typical})
$2^{n[H(X,U^r|W^r,H)-H(X|W^r,H)-H(U_{\mathcal{S}^C}|W^r,H)-\sum_{i\in\mathcal{S}}H(U_i|W_i,H)+\epsilon]}$.
Thus the rate $R_{CEO}$ is achievable if:
\begin{multline}
R_{CEO}<\sum_{i\in\mathcal{S}}[C_i-\{I(W_{i};Y_{i}|H)-I(W_{i};W_{\tilde{\mathcal{T}}(\pi,i)}|H)-\delta\}-\hat{R}_i+H(U_i|W_i,H)]-H(U_\mathcal{S}|X,W^r,H)-H(U_{\mathcal{S}^C}|X,U_\mathcal{S},W^r,H)\\
<\sum_{i\in\mathcal{S}}[C_i-\{I(W_{i};Y_{i}|H)-I(W_{i};W_{\tilde{\mathcal{T}}(\pi,i)}|H)-\delta\}-I(Y_i;U_i|X,W_i,H)]+I(U_{\mathcal{S}^C};X|W^r,H),
\end{multline}
where the second inequality is due to
(\ref{eq:single_compression_rate}) and because of the Markov chain
$U_i-(W^r,X,H)-U_{1,\dots,i-1,i+1,\dots,r}$. 
Finally, the overall achievable rate is equal to
\begin{equation}
R_{CEO}+\sum_{i=1}^r\{I(W_{i};Y_{i}|H)-I(W_{i};W_{\tilde{\mathcal{T}}(\pi,i)}|H)-\delta\},
\end{equation}
which proves Proposition \ref{prop:achv_BC}. \hfill{\QED}
%

\bibliographystyle{IEEEtran}
\bibliography{IEEEabrv,bounding_agents_capacity}

\begin{thebibliography}{10}
\providecommand{\url}[1]{#1}
\csname url@rmstyle\endcsname
\providecommand{\newblock}{\relax}
\providecommand{\bibinfo}[2]{#2}
\providecommand\BIBentrySTDinterwordspacing{\spaceskip=0pt\relax}
\providecommand\BIBentryALTinterwordstretchfactor{4}
\providecommand\BIBentryALTinterwordspacing{\spaceskip=\fontdimen2\font plus
\BIBentryALTinterwordstretchfactor\fontdimen3\font minus
  \fontdimen4\font\relax}
\providecommand\BIBforeignlanguage[2]{{%
\expandafter\ifx\csname l@#1\endcsname\relax
\typeout{** WARNING: IEEEtran.bst: No hyphenation pattern has been}%
\typeout{** loaded for the language `#1'. Using the pattern for}%
\typeout{** the default language instead.}%
\else
\language=\csname l@#1\endcsname
\fi
#2}}

\bibitem{fullpaper}
A.~Sanderovich, S.~Shamai, Y.~Steinberg, and G.~Kramer, ``Communication via
  decentralized processing,'' \emph{Submitted to IEEE trans. on info. Theory},
  Nov. 2005, partially presented at \cite{Confpaper}.

\bibitem{Telatar99}
E.~Telatar, ``Capacity of multi-antenna {Gaussian} channels,''
  \emph{\textit{European Transactions on Telecommunications}, ETT}, vol.~10,
  no.~6, pp. 585--596, Nov. 1999.

\bibitem{TseZheng2003}
D.~N.~C. Tse and L.~Zheng, ``Diversity and multiplexing: A fundamental tradeoff
  in multiple-antenna channels,'' \emph{{IEEE} Trans. Inform. Theory}, vol.~49,
  no.~50, pp. 1073--1096, May 2003.

\bibitem{AvestimehrTse2006}
A.~S. Avestimehr and D.~N. Tse, ``Outage capacity of the fading relay channel
  in the low {SNR} regime,'' \emph{Submitted to IEEE trans. on info. Theory},
  2006.

\bibitem{YukselErkip2006}
M.~Yuksel and E.~Erkip, ``Cooperative wireless systems: A
  diversity-multiplexing tradeoff perspective,'' \emph{{IEEE} Trans. Inform.
  Theory}, 2006, under review.

\bibitem{WeingartenSteinbergShamai2006}
H.~Weingarten, Y.~Steinberg, and S.~Shamai, ``The capacity region of the
  {Gaussian} multiple-input multiple-output broadcast channel,'' \emph{{IEEE}
  Trans. Inform. Theory}, vol.~52, no.~9, pp. 3936--3964, Sep. 2006.

\bibitem{WangZhangHostMadsen2005}
B.~Wang, J.~Zhang, and A.~H{\o}st-Madsen, ``On the capacity of {MIMO} relay
  channels,'' \emph{{IEEE} Trans. Inform. Theory}, vol.~51, no.~1, pp. 29--43,
  Jan. 2005.

\bibitem{BolcskeiNabar2004}
H.~B\"{o}lcskei and R.~U. Nabar, ``Realizing {MIMO} gains without user
  cooperation in large single-antenna adhoc wireless networks,'' in \emph{Proc.
  of IEEE Int. Symp. Info. Theory (ISIT2004)}, Chicago, IL, June 2004, p.~18.

\bibitem{GuptaKumar2003}
P.~Gupta and P.~R. Kumar, ``Towards an information theory of large networks: an
  achievable rate region,'' \emph{{IEEE} Trans. Inform. Theory}, vol.~49,
  no.~8, pp. 1877--1894, Aug. 2003.

\bibitem{MaricYates2004}
I.~Maric and R.~D. Yates, ``Forwarding strategies for {Gaussian} parallel-relay
  networks,'' in \emph{Proc. of IEEE Int. Symp. Info. Theory (ISIT2004)},
  Chicago, IL, June 2004, p. 269.

\bibitem{SomekhZaidelShamai2004}
O.~Somekh, B.~M. Zaidel, and S.~Shamai, ``Spectral efficiency of joint multiple
  cell-site processors for randomly spread {DS-CDMA} systems,'' in \emph{Proc.
  of IEEE Int. Symp. Info. Theory (ISIT2004)}, Chicago, IL, June 2004, p. 278.

\bibitem{Oohama1998}
Y.~Oohama, ``The rate-distortion function for the quadratic {G}aussian {CEO}
  problem,'' \emph{{IEEE} Trans. Inform. Theory}, vol.~44, no.~3, pp.
  1057--1070, May 1998.

\bibitem{WynerZiv1976}
A.~D. Wyner and J.~Ziv, ``The rate-distortion function for source coding with
  side information at the decoder,'' \emph{{IEEE} Trans. Inform. Theory},
  vol.~22, no.~1, pp. 1--10, Jan 1976.

\bibitem{wagner-2005-}
\BIBentryALTinterwordspacing
A.~B. Wagner, S.~Tavildar, and P.~Viswanath, ``The rate region of the quadratic
  {Gaussian} two-terminal source-coding problem,'' \emph{submitted to IEEE
  trans. on IT}, Feb 2006. [Online]. Available:
  \url{http://www.citebase.org/cgi-bin/citations?id=oai:arXiv.org:cs/0510095}
\BIBentrySTDinterwordspacing

\bibitem{Oohama2005}
Y.~Oohama, ``Rate-distortion theory for {Gaussian} multiterminal source coding
  systems with several side informations at the decoder,'' \emph{{IEEE} Trans.
  Inform. Theory}, vol.~51, no.~7, pp. 2577--2593, July 2005.

\bibitem{DraperWornell2004}
S.~C. Draper and G.~W. Wornell, ``Side information aware coding strategies for
  sensor networks,'' \emph{{IEEE} J. Select. Areas Commun.}, vol.~22, no.~6,
  pp. 966--976, Aug. 2004.

\bibitem{CoverElgamal1979}
T.~M. Cover and A.~A. El-Gamal, ``Capacity theorems for the relay channel,''
  \emph{{IEEE} Trans. Inform. Theory}, vol.~25, no.~5, pp. 572--584, Jan 1979.

\bibitem{KramerGastparGupta2004}
G.~Kramer, M.~Gastpar, and P.~Gupta, ``Information-theoretic multi-hopping for
  relay networks,'' in \emph{International Z\"{u}rich seminar on
  communication}, Swizerland, Feb. 2004.

\bibitem{LiuViswanath2006}
T.~Liu and P.~Viswanath, ``An extremal inequality motivated by multiterminal
  information theoretic problems,'' \emph{Submitted to IEEE trans. on info.
  Theory}, 2006.

\bibitem{WitsenhausenWyner1975}
H.~S. Witsenhausen and A.~D. Wyner, ``a canditional entropy bound for a pair of
  discrete random variables,'' \emph{{IEEE} Trans. Inform. Theory}, vol. IT-21,
  no.~5, pp. 493--501, Sep 1975.

\bibitem{ChayatShamai1989}
N.~Chayat and S.~Shamai, ``Extension of an entropy property for binary input
  memoryless symmetric channels,'' \emph{{IEEE} Trans. Inform. Theory},
  vol.~35, no.~5, pp. 1077--1079, Sep. 1989.

\bibitem{TulinoLozanoVerdu2006}
A.~Tulino, A.~Lozano, and S.~Verd\'{u}, ``Capacity-achieving input covariance
  for single-user multi-antenna channels,'' \emph{{IEEE} Trans. Inform.
  Theory}, vol.~5, no.~1, Jan. 2006.

\bibitem{Confpaper}
A.~Sanderovich, S.~Shamai, Y.~Steinberg, and G.~Kramer, ``Communication via
  decentralized processing,'' in \emph{Proc. of IEEE Int. Symp. Info. Theory
  (ISIT2005)}, Adelaide, Australia, Sep. 2005, pp. 1201--1205.

\bibitem{LiGoldsmith2001}
L.~Li and A.~J. Goldsmith, ``Capacity and optimal resource allocation for
  fading broadcast channels—part {I}: Ergodic capacity,'' \emph{{IEEE} Trans.
  Inform. Theory}, vol.~47, no.~3, pp. 1083--1102, Mar. 2001.

\bibitem{BiglieriProakisShamai1998}
E.~Biglieri, J.~Proakis, and S.~Shamai, ``Fading channels:
  Information-theoretic and communications aspects,'' \emph{{IEEE} Trans.
  Inform. Theory}, vol.~44, no.~6, pp. 2619--2692, Oct. 1998.

\bibitem{CoverThomas}
T.~M. Cover and J.~A. Thomas, \emph{Elements of Information theory}.\hskip 1em
  plus 0.5em minus 0.4em\relax John Wiley \& Sons, Inc., 1991.

\bibitem{KrithivasanPradhan2007}
D.~Krithivasan and S.~S. Pradhan, ``Lattices for distributed source coding:
  Jointly {Gaussian} sources and reconstruction of a linear function,''
  \emph{{IEEE} Trans. Inform. Theory}, July 2007, submitted.

\bibitem{CaireShamai2003}
G.~Caire and S.~Shamai, ``On the achievable throughput of a multiantenna
  {Gaussian} broadcast channel,'' \emph{{IEEE} Trans. Inform. Theory}, vol.~49,
  no.~7, pp. 1691--1706, July 2003.

\bibitem{VishwanathJindalGoldsmith2003}
S.~Vishwanath, N.~Jindal, and A.~Goldsmith, ``The {"Z"} channel,'' in
  \emph{Proceedings of IEEE Global Telecommunications Conference (GlobeCom)},
  San Francisco, CA, Dec.

\bibitem{Csiszar_Korner}
I.~Csisz\'ar and J.~K\"orner, \emph{Information Theory: Coding Theorems for
  Discrete Memoryless Systems}.\hskip 1em plus 0.5em minus 0.4em\relax New
  York: Academic, 1981.

\bibitem{HanKobayashi1980}
T.~S. Han and K.~Kobayashi, ``A unified achievable rate region for a general
  class of multiterminal source coding systems,'' \emph{{IEEE} Trans. Inform.
  Theory}, vol. IT-26, no.~3, pp. 277--288, May 1980.

\end{thebibliography}
\end{document}